\newcommand{\blind}{0}
\newtheorem{thm}{Theorem}
\newtheorem{prop}{Proposition}
\newtheorem{lemma}{Lemma}
\DeclareMathOperator*{\argmin}{argmin}
\newcommand{\E}{\mathbb{E}}
\newcommand{\PP}{\mathbb{P}}
\newcommand{\Normal}{{\mathcal{N}}}
\newcommand{\red}[1]{{\color{black}#1}}
\begin{document}

\def\spacingset#1{\renewcommand{\baselinestretch}%
{#1}\small\normalsize} \spacingset{1}


\if0\blind
{
  \title{\bf Selection and Aggregation of Conformal Prediction Sets}
  \author{Yachong Yang\hspace{.2cm}\\
    Department of Statistics and Data Science, University of Pennsylvania\\
    and \\
    Arun Kumar Kuchibhotla
    \\
    Department of Statistics and Data Science, Carnegie Mellon University}
  \maketitle
} \fi

 \if1\blind
{
  \bigskip
  \bigskip
  \bigskip
  \begin{center}
    {\LARGE\bf Selection and Aggregation of Conformal Prediction Sets}
\end{center}
  \medskip
}  \fi

\bigskip
\begin{abstract}
    Conformal prediction is a generic methodology for finite-sample valid distribution-free prediction. This technique has garnered a lot of attention in the literature partly because it can be applied with any machine learning algorithm that provides point predictions to yield valid prediction regions. Of course, the efficiency (width/volume) of the resulting prediction region depends on the performance of the machine learning algorithm. In the context of point prediction, several techniques (such as cross-validation) exist to select one of many machine learning algorithms for better performance. In contrast, such selection techniques are seldom discussed in the context of set prediction (or prediction regions). In this paper, we consider the problem of obtaining the smallest conformal prediction region given a family of machine learning algorithms. We provide two general-purpose selection algorithms and consider coverage as well as width properties of the final prediction region. The first selection method yields the smallest width prediction region among the family of conformal prediction regions for all sample sizes but only has an approximate coverage guarantee. The second selection method has a finite sample coverage guarantee but only attains close to the smallest width. The approximate optimal width property of the second method is quantified via an oracle inequality. As an illustration, we consider the use of aggregation of non-parametric regression estimators in the split conformal method with the absolute residual conformal score.
\end{abstract}

\noindent%
{\it Keywords:}  Cross-validation, Oracle inequality, DKW inequality, Ridge regression, Quantile function
\vfill

\newpage
\spacingset{1.5} 

\section{Introduction}\label{sec:introduction}
Prediction is one of the major focuses of machine learning. Most methods in machine learning are designed to provide point predictions, but quantifying the uncertainty in the point prediction is crucial for several applications. In criminal justice applications, point predictions from classification and regression algorithms are commonly used in problems of predicting whether a convicted individual will re-offend if given parole and in problems of predicting the length of a prison sentence. In these examples, it is very important to supplement point predictions with a valid uncertainty estimate~\citep{berk2023}. Such an uncertainty estimate can be obtained by using a prediction set/interval for the response. For applicability in a wide range of applications, it would be beneficial to perform uncertainty quantification under as minimal assumptions on the data generating process as possible. Conformal prediction methodology introduced by~\cite{vovk2005algorithmic} provides a generic algorithm for solving this problem with the only assumption of exchangeable data. While exchangeability is the minimum requirement, we will, in this paper, assume independent and identically distributed observations for reasons to be discussed.

Formally, suppose we have independent and identically distributed (i.i.d.) training data $Z_1, \ldots, Z_N$ from a distribution $P$  on some space $\mathcal{Z}$. The space $\mathcal{Z}$ can be arbitrary such as a function space, a space of images, etc. The goal of conformal prediction is to predict $Z_{N+1}$ using a prediction set $\widehat{C}_{N,\alpha}=\widehat{C}_{N,\alpha}(Z_1,\dots,Z_{N})$ such that
$\mathbb{P}(Z_{N+1} \in \widehat{C}_{N,\alpha}) \geq 1-\alpha,$
for a fixed $0 < \alpha <1$, and the probability is taken with respect to the new test point $Z_{N+1}$ along with the training data $Z_1, \ldots, Z_N$. Given any point prediction algorithm, conformal prediction works as a wrapper to obtain a valid prediction set, irrespective of what the point prediction algorithm is. Although the validity guarantee holds very generally, efficiency (in terms of the width/volume) of the prediction region depends crucially on the performance of the algorithm as well as the data generating process. In this paper, we develop methods to attain coverage validity and width efficiency using conformal prediction regions based on several training methods and selecting one of them. 

We now provide a brief introduction to the literature on conformal prediction. In the recent years, several authors have discussed different ways of applying the conformal prediction in order to attain better statistical and computational efficiency. \cite{vovk2005algorithmic} introduced a version of conformal prediction called the transductive conformal method that requires fitting the learning algorithm to the data $Z_i, 1\le i\le N$ and $Z_{N+1} = z$ for all $z\in\mathcal{Z}$. This method is discussed in~\cite{lei2013distribution} as the full conformal method and is computationally intensive in practice, but makes full use of the data for prediction. \cite{papadopoulos2002inductive} proposed an alternative method called the inductive conformal method (or the split conformal method in~\cite{lei2014distribution}) that splits the data into two parts. The learning algorithm is trained only on the first part and the prediction set is constructed using conformal ``scores'' on the second part of the data. For concreteness, we describe the basic algorithm here in regression setting with $Z_i = (X_i, Y_i)\in\mathcal{X}\times\mathbb{R}$. Let $\mathcal{A}:\mathcal{X}\to\mathbb{R}$ be any predictive algorithm trained on the first split of the data, i.e., for any $x\in\mathcal{X}$, $\mathcal{A}(x)$ is the point prediction for $Y$. If $m$ is the number of observations in the second split, $R_i = |Y_i - \mathcal{A}(X_i)|$ are residuals for the second split of the data, and $\widehat{Q}_{\alpha}$ is the $\lceil(m+1)(1-\alpha)\rceil$-th largest element of $R_i$'s, then for any $(X_{N+1}, Y_{N+1})\sim P = P_X\otimes P_{Y|X}$, $\mathbb{P}(|Y_{N+1} - \mathcal{A}(X_{N+1})| \le \widehat{Q}_{\alpha}) \ge 1 - \alpha.$ In particular, for $\widehat{C}_{N,\alpha} = \{(x, y)\in\mathcal{X}\times\mathbb{R}:\,|y - \mathcal{A}(x)| \le \widehat{Q}_{\alpha}\}$, we have $\mathbb{P}((X_{N+1}, Y_{N+1}) \in \widehat{C}_{N,\alpha}) \ge 1 - \alpha$. {In general, marginal coverage is not sufficient for practical purposes and require conditional coverage conditional on the features. This is, however, impossible to attain the finite samples without strong assumptions as shown in~\cite{barber2019limits}.}

The split conformal method only uses part of the data for prediction and several methods have since been developed to make better use of the data. These methods involve aggregation methods~\citep{johansson2014regression,carlsson2014aggregated,vovk2015cross,linusson2017calibration}, leave-one-out and K-fold methods~\citep{barber2019predictive,kim2020predictive}. These methods generally train the learning algorithm on multiple folds of the data and combine the predictions to create a valid prediction set. All these aggregation methods provide similar finite sample distribution-free guarantees to that of the full and split conformal methods, irrespective of the learning algorithm; see~\cite{gupta2019nested} for details. It should be clarified that the word ``aggregation'' in these works is used in the sense of combining prediction algorithms obtained from multiple splits of the data and usually the same prediction algorithm is applied on each split of the data, which need not yield width efficiency.  

These algorithms are designed with the intention of making full use of the data, unlike the split conformal method. They, however, may not necessarily lead to the smallest prediction region even in the asymptotic sense (i.e., they are not constructed to achieve a smaller width). Construction of optimal prediction regions was considered in~\cite{lei2013distribution} for the case of no covariates, in~\cite{lei2018distribution} for the case of symmetric and homoscedastic regression problem, in~\cite{sadinle2019least} and ~\cite{romano2020classification} for the classification problem, and in~\cite{gyofi2020nearest} in a general setting. Also, see~\cite{vovk2014efficiency}, and~\cite{vovk2017criteria}. With no symmetric or homoscedasticity assumptions, the problem of the smallest prediction interval for regression was considered in~\cite{gupta2019nested},~\cite{sesia2020comparison},~\cite{izbicki2020cd}, and~\cite{kivaranovic2020adaptive}. The construction of prediction regions to attain the smallest width/volume essentially requires estimating some non-parametric quantity such as the conditional mean, conditional quantile, or the conditional density function consistently; see~\citet[Assumption 3]{sesia2020comparison} and~\citet[Assumption 9]{izbicki2020cd}. Estimating these nonparametric quantities at the optimal rate would imply that the corresponding conformal prediction region would converge to the smallest valid prediction region as shown in~\citet[Section 3]{lei2018distribution}. Somewhat counterintuitively, ``optimal'' prediction regions converging to the smallest valid prediction region need not make full use of the data and can be obtained using the split conformal method. This follows partly because the optimal prediction set does not shrink to a point as the sample size increases.

Note that almost all estimators of nonparametric quantities require user specification of tuning parameters such as bandwidth, number of nearest neighbors, width and number of hidden layers in neural networks, number of trees in {boosting}, and so on. In traditional nonparametric estimation, these tuning parameters can be chosen via the cross-validation technique. But this choice focuses on obtaining an estimate with the smallest $L_2$ or prediction risk and need not yield the smallest prediction region. In this paper, we consider the problem of selecting tuning parameters in order to obtain the smallest valid prediction region. This problem was discussed in~\citet[Section 4]{lei2013distribution} in the case where there are no covariates. They present an algorithm to solve this problem but do not prove that the algorithm results in the smallest prediction region. This is one of the inspirations for the current paper. 

More formally, suppose $\mathcal{A}_1, \ldots, \mathcal{A}_K$ are $K$ different training algorithms. The question we discuss is how to select $\widehat{k}\in\{1,2,\ldots,K\}$ and construct a prediction region with valid coverage and the smallest width or close to the smallest width. This closeness will be proved in terms of an oracle inequality; see Section~\ref{sec:VFCP} for details. Why does this lead to width efficiency, in general? This is best understood via the regression example used in the description of the split conformal procedure. If the data generating process for $(X, Y)$ satisfies $Y = m_0(X) + \xi$ with $\xi\perp X$ and is symmetrically distributed, then~\cite{lei2018distribution} proves that the conformal prediction interval is asymptotically optimal in width if $\mathcal{A}(\cdot)$ is consistent for $m_0(\cdot)$ in an appropriate sense. Suppose now one uses different kernel regression estimators $\mathcal{A}_1, \ldots, \mathcal{A}_K$ differing in the choice of their bandwidths. If $m_0(\cdot)$ is smooth, then there exists a choice of bandwidth in the collection that yields a minimax rate-optimal estimator for $m_0(\cdot)$. Hence, selecting a bandwidth from a finite collection for a minimum width prediction set yields the width efficient conformal prediction set. A similar argument can be made in the context of conditional quantiles~\citep{sesia2020comparison}.

We will discuss two selection methods and two corresponding prediction regions; both are designed to obtain the smallest width among the family of regions considered. In each case, we consider finite sample coverage as well as finite sample width properties. 
We define different versions of oracle prediction sets and derive oracle inequalities with respect to the width/volume; these oracles are similar to those considered in~\cite{lei2018distribution}. We will provide the exact oracle inequality when the family of training algorithms is based on ridge regression with different penalty levels. None of our results assume any specific structure on the true data generating mechanism but only moment conditions. Furthermore, we need the observations to be independent and identically distributed for these oracle inequalities. In proving the validity guarantee for the first selection method, we find a surprising phenomenon about the split conformal method that allows repeated splitting of the data, and reporting (any) one of the resulting split conformal prediction sets would still yield a prediction set that is valid in terms of coverage probability. This is not an obvious fact. For instance, in Section 2.3 of~\cite{lei2018distribution}, the authors use a union bound for the validity of such a reporting from a collection of split conformal prediction intervals. Our results essentially imply that no such correction based on union bound is needed for approximate validity; see Theorem~\ref{thm:coverage-guarantee-minimum-width} in Section~\ref{sec:EFCP}. 

Finally, we note that a recent paper~\cite{chen2021learning} considers the problem of constructing the smallest valid prediction regions out of all prediction intervals with lower and upper end points of the interval belonging to VC classes. The authors make use of concentration inequalities for VC classes and directly target attaining the smallest prediction set without specifically using a family of training algorithms. This does not use conformal prediction. Our approach in this paper is quite different.

\paragraph{Organization.} The remaining article is organized as follows. In Section~\ref{subsec:problem-formulation}, we formulate the problem along with the notation used throughout the article. In Section~\ref{sec:EFCP}, we provide our first algorithm called ``efficiency first conformal prediction'' (EFCP) that attains a finite sample smallest width but only has approximate coverage validity. {We also discuss the implications of selection on conditional coverage.} In Section~\ref{sec:VFCP}, we provide our second algorithm called ``validity first conformal prediction'' (VFCP) that has finite sample coverage validity and only attains approximately the smallest width. In Sections~\ref{sec:EFCP} and~\ref{sec:VFCP}, we study the coverage and width properties of both algorithms and quantify how close these methods come to attaining validity and efficiency (i.e., the smallest width). The algorithms presented in these sections are discussed in the setting of a general prediction problem, not just in the regression setting. In Section~\ref{sec:comparison-EFCP-VFCP}, we compare EFCP and VFCP algorithms theoretically and contrast their performance. 
In Section~\ref{sec:linear-prediction-fixed-width}, we focus on the smallest width prediction set in the regression setting. In Section~\ref{sec:general-algorithm}, we take the training algorithms to be fixed linear functions of covariates and the prediction sets are \textit{``fixed width"} in that they are of the form $\{(x, y):\,y\in[\theta^{\top}x - T, \theta^{\top}x + T]\}$ for a value $T$ independent of $x$. 
{In Section~\ref{subsec:aggregation-conditional-coverage}, we consider the application of the results in Section~\ref{sec:general-algorithm} to aggregation of non-parametric regression estimators and also study conditional coverage of the resulting prediction set.}
In Section~\ref{sec:simulations}, we present a few simulations to validate the theoretical results.
In Section~\ref{sec:conclusions}, we summarize the article and discuss some interesting future directions.
The proof of all the results and some supporting lemmas are provided in the supplementary file. For convenience, the sections and equations of the supplementary file are prefixed with ``S.'' and ``E.'', respectively. {\color{red}R code to reproduce the experiments in this work is provided at \href{https://github.com/Elsa-Yang98/Selection_and_aggregation_of_conformal_prediction_sets}{https://github.com/Elsa-Yang98/Selection\_and\_aggregation\_of\_conformal\_prediction\_sets}.}

\subsection{Problem Formulation}\label{subsec:problem-formulation}
In this section, we describe the problem of constructing valid smallest prediction region in terms of width or volume and outline the content of the following sections. 
Based on the conformal prediction methodology, one simple method would be to construct several algorithms $\mathcal{A}_1, \ldots, \mathcal{A}_K$ on the training data, corresponding prediction sets based on the test data, and select the one that yields the smallest prediction set, in terms of width. Although this leads to the smallest prediction set, its validity does not follow readily. The reason is that this is a form of cherry-picking.
Thus, we formulate the problem we tackle in this paper as
\begin{quote}
Given a collection of training algorithms, construct a selection rule that yields a near optimal prediction set as well as validity. Further, quantify how close one can get to the smallest prediction set in terms of width and to the required coverage.
\end{quote}
Inspired by Algorithm 1 of~\cite{lei2013distribution}, we will present two algorithms for this problem based on the split conformal method. \cite{lei2013distribution} discuss the construction of the smallest prediction set using level sets of kernel density estimators. They tune the bandwidth of the kernel density estimator by minimizing the volume of the prediction set.
Our algorithms are a generalization of the one from~\cite{lei2013distribution}.
The first of our selection algorithms targets finite sample efficiency and the second targets finite sample validity. The first algorithm will be called ``efficiency first conformal prediction'' (EFCP). It splits the data into two parts, constructs $K$ conformal prediction sets, and simply returns the one with smallest width. The second algorithm will be called ``validity first conformal prediction'' (VFCP). It splits the data into three parts, constructs $K$ conformal prediction sets, selects one of them using the second split, and then recomputes the prediction set using the third split of the data. The EFCP and VFCP methods as pseudocode are presented in Algorithms~\ref{alg:efficiency-first-conformal} and~\ref{alg:validity-first-conformal}. In both these algorithms, the construction of conformal prediction sets is described in terms of nested sets following~\cite{gupta2019nested}. As mentioned in~\cite{gupta2019nested}, the nested set formulation of conformal methodology is equivalent to the usual formulation in terms of conformal scores. This is useful because the nested set formulation provides a geometric intuition of the conformal framework.

\paragraph{Notation.}
For any $m\ge1$, we use $[m]$ to indicate the set $\{1,2, \ldots, m\}$. For any set $J \subseteq [d]$, $|J|$ denotes the cardinality of $J$. For any vector or matrix $v$, let $v^\top$ denote its transpose. The Euclidean norm in any dimension is given by $\|\cdot\|_2$. 
We use $\mathfrak{C}$ to denote a universal constant.

Throughout the paper, we require the observations to be exchangeable when using conformal prediction, and hence the observations are identically distributed.
\section{Efficiency First Conformal Prediction}\label{sec:EFCP}
In Section~\ref{sec:introduction}, we have introduced the split conformal procedure with an absolute residual score. In the current and the following sections, we will use the nested conformal framework of~\cite{gupta2019nested}. For the convenience of the readers, we rewrite the ``absolute residual"-based split conformal in terms of nested sets. Split the observed data $(X_i, Y_i)\in\mathcal{X}\times\mathbb{R}, 1\le i\le N$ into two parts $\mathcal{D}_1$ and $\mathcal{D}_2$, where a point prediction algorithm $\mathcal{A}$ is trained on $\mathcal{D}_1$. Now construct the nested sets
\begin{equation}\label{eq:nested-set-absolute-residual}
\mathcal{F}_t ~:=~ \{(x, y)\in\mathcal{X}\times\mathbb{R}:\, y\in [\mathcal{A}(x) - t, \mathcal{A}(x) + t]\},\quad t\ge0.
\end{equation}
Note that these sets are random through $\mathcal{A}$ (and $\mathcal{D}_1$). These are (increasing) nested sets because for $0 \le t \le t' < \infty$, $\mathcal{F}_t \subseteq \mathcal{F}_{t'}$. For $(X_i, Y_i)\in\mathcal{D}_2$ (the second split of the data), define the score
$t(X_i, Y_i) = \inf\{t\ge0:\, (X_i, Y_i)\in\mathcal{F}_t\}.$
From the definition of $\mathcal{F}_t$, it is easy to show that $t(X_i, Y_i) = |Y_i - \mathcal{A}(X_i)|$. Set $T_{\alpha}$ to be the $(1-\alpha)(1 + 1/|\mathcal{D}_2|)$-th quantile of $t(X_i, Y_i)$ for observations in $D_2$. Then the split conformal prediction set is $\widehat{C}_{N,\alpha} = \mathcal{F}_{T_{\alpha}}$. Proposition 1 of~\cite{gupta2019nested} proves that $\widehat{C}_{N,\alpha}$ constructed as above has a finite sample coverage of at least $1 - \alpha$ for a future test point $(X_{N+1}, Y_{N+1})$ from the same distribution as that of training data. This validity guarantee does not require the special structure of $\mathcal{F}_t$ in~\eqref{eq:nested-set-absolute-residual} and holds for any sequence of (increasing) nested sets.

With this background, we are now ready to describe our first algorithm for efficient conformal prediction. The core idea is to train $K$ algorithms on $\mathcal{D}_1$, construct $K$ different collections of increasing nested sets as well as the corresponding split conformal prediction set using $\mathcal{D}_2$, and then return the prediction set that has the smallest width. Algorithm~\ref{alg:efficiency-first-conformal} presents the pseudocode for our efficiency first conformal prediction method, where for generality, it is stated without assuming a regression setting. The observations $Z_1, \ldots, Z_N$ are assumed to lie in $\mathcal{Z}$, a general measurable space. 

\begin{algorithm}[htbp]
    \SetAlgoLined
    \SetEndCharOfAlgoLine{}
    \KwIn{Data $\mathcal{D} = \{Z_i, i\in [N]\}$, coverage probability $1 - \alpha$, and training methods $\mathcal{A}_k, k\in [K]$.}
    \KwOut{A valid prediction set $\widehat{C}^{\mathrm{EFCP}}_{N,\alpha}$ with smallest width.}
    Randomly split $[N] = \{1, 2, \ldots, N\}$ into two (disjoint) parts $\mathcal{I}_1, \mathcal{I}_2$. Set $\mathcal{D}_1 = \{Z_i: i\in\mathcal{I}_1\}$, $\mathcal{D}_2 = \{Z_i: i\in \mathcal{I}_2\}$.\;
    Fit training methods $\mathcal{A}_1, \ldots, \mathcal{A}_K$ on $\mathcal{D}_1$ and using fitted method $\mathcal{A}_k$, construct an increasing (nested) sequence of sets $\{\mathcal{F}_t^{(k)}\}_{t\in\mathcal{T}}$. Here $\mathcal{T}$ is a subset of $\mathbb{R}$.\;
    For each $i\in\mathcal{I}_2$ (i.e., $Z_i\in\mathcal{D}_2$) and $k\in[K]$, define the conformal score
    \[
    t_k(Z_i) := \inf\{t\in\mathcal{T}:\,Z_i\in\mathcal{F}_t^{(k)}\}.
    \]
    Compute the corresponding conformal prediction set as
    $\widehat{C}_k ~:=~ \left\{z:\, t_k(z) \le T_{\alpha,k}\right\},$
    where $T_{\alpha,k}$ is the $\lceil(1 + |\mathcal{I}_2|)(1-\alpha)\rceil\mbox{-th largest element of }t_k(Z_i), i\in\mathcal{I}_2.$\;
    Set $$\widehat{k} := \argmin_{1\le k\le K}\,\mbox{Width}(\widehat{C}_k),$$
    Here $\mbox{Width}(\cdot)$ can be any measure of width or volume of a prediction set. The quantity $\widehat{k}$ need not be unique and any minimizer can be chosen.\;
    \Return the prediction set $\widehat{C}_{\widehat{k}}$ as $\widehat{C}^{\mathrm{EFCP}}_{N,\alpha}$.
    \caption{Efficiency First Conformal Prediction for the Smallest Prediction Set}
    \label{alg:efficiency-first-conformal}
\end{algorithm}

The nested sets $\mathcal{F}_t^{(k)}, t\in\mathcal{T}$ in step 2 are subsets of the domain $\mathcal{Z}$ of $Z_i$'s. Further, we do not place any restriction on $\mathcal{Z}$; it does not have to be a Euclidean or metric space.
The construction of nested sets in steps 2 \& 3 of algorithm \ref{alg:efficiency-first-conformal} can be found in~\cite{gupta2019nested}. Here we will provide three simple examples. 
\begin{itemize}
    \item \textbf{Density Level Sets.} In the problem of predicting $Z_{N+1}$ (without covariates), training methods $\mathcal{A}_1, \ldots, \mathcal{A}_K$ can represent kernel density estimators $\widehat{p}_{h_1}, \ldots, \widehat{p}_{h_K}$ with different bandwidths $h_1, \ldots, h_K$ computed on data $\mathcal{D}_1$. For each $1\le k\le K$ and $t \ge 0$, consider the nested sets
    $\mathcal{F}_t^{(k)} := \left\{z:\, 1/\widehat{p}_{h_k}(z) \le t\right\}.$ Clearly, these are increasing sets over $t\in \mathcal{T} = [0, \infty)$. In this case, $\mbox{Width}(\cdot)$ can be taken to be the Lebesgue measure of the prediction set. 
     \item \textbf{Fixed Width Regression.} Suppose $Z_i = (X_i, Y_i), 1\le i\le N$. In the problem of predicting $Y_{N+1}$ when given $X_{N+1}$, training methods $\mathcal{A}_1, \ldots, \mathcal{A}_K$ can represent neural network regression estimators $\widehat{m}_1(\cdot), \ldots, \widehat{m}_K(\cdot)$ with different network widths $L_1, \ldots, L_K$ computed on data $\mathcal{D}_1$. An example of nested sets is $\mathcal{F}_t^{(k)} = \{(x, y):\, y\in[\widehat{m}_k(x) - t, \widehat{m}_k(x) + t]\}$, for $t\ge 0$. Clearly, these are increasing sets over $t\in\mathcal{T} = [0, \infty)$. In this case, $\mbox{Width}(\cdot)$ cannot be taken to be the Lebesgue measure on the support of $(X, Y)$ for the reason that it would be infinite if the support of $X$ is unbounded. An alternative is to consider average Lebesgue measure of the $x$-cross-section of the prediction set. With nested sets as described, the prediction sets $\widehat{C}_k$ take the form of $\{(x, y):\, y\in [\widehat{m}_k(x) - T_{\alpha,k}, \widehat{m}_k(x) + T_{\alpha,k}]\}$. The $x$-cross-section of the prediction set is $[\widehat{m}_k(x) - T_{\alpha,k}, \widehat{m}_k(x) + T_{\alpha,k}]$ and has a Lebesgue measure of $2T_{\alpha,k}$. Note that, in this setting, width of the prediction set is a random quantity and the EFCP prediction set has finite sample efficiency in the sense of having the smallest random width.
    \item \textbf{Conformalized Quantile Regression (CQR).} Suppose $Z_i = (X_i, Y_i)\in\mathcal{X}\times\mathbb{R}, 1\le i\le N$. In the problem of predicting $Y_{N+1}$ when given $X_{N+1}$, training methods $\mathcal{A}_1, \ldots, \mathcal{A}_K$ can represent quantile regression forest estimators $\{\widehat{q}_{\alpha/2}^{(k)}(\cdot), \widehat{q}_{1-\alpha/2}^{(k)}(\cdot)\}, k\in[K]$ from~\cite{meinshausen2006quantile} with different \texttt{mtry} (number of variables selected at each split) and \texttt{ntree} (number of trees) parameters. The function $\widehat{q}_{\alpha/2}^{(k)}(x)$ represents an estimator of the conditional $\alpha/2$-th quantile of $Y|X$. All these estimators are computed on data $\mathcal{D}_1$ and  $1-\alpha$ is the required coverage probability. An example of nested sets is 
    $\mathcal{F}_t^{(k)} = \{z = (x, y):\, y\in[\widehat{q}_{\alpha/2}^{(k)}(x) - t, \widehat{q}_{1-\alpha/2}^{(k)}(x) + t]\},$
    for $t\in\mathbb{R}$. Clearly, these are increasing sets over $t\in\mathcal{T} = (-\infty, \infty)$. 
    These nested sets lead to the conformalized quantile regression prediction from~\cite{romano2019conformalized}.
    In this case,  again $\mbox{Width}(\cdot)$ cannot be taken to be the Lebesgue measure on the support of $(X, Y)$ for the same reason that it would be infinite if the support of $X$ is unbounded. With nested sets as described, the prediction sets take the form of $\widehat{C}_k := \{(x, y):\,y\in[\widehat{q}_{\alpha/2}^{(k)}(x) - T_{\alpha,k}, \widehat{q}_{1-\alpha/2}^{(k)}(x) + T_{\alpha,k}]\}$. The $x$-cross-section of the prediction set is $[\widehat{q}_{\alpha/2}^{(k)}(x) - T_{\alpha,k}, \widehat{q}_{1-\alpha/2}^{(k)}(x) + T_{\alpha,k}]$ and has the average Lebesgue measure of $2T_{\alpha,k} + \int (\widehat{q}_{1-\alpha/2}^{(k)}(x) - \widehat{q}_{\alpha/2}^{(k)}(x))P_X(dx)$, where $P_X$ is the probability measure of $X_i$. With $P_X$ unknown, we may take
    \begin{equation}\label{eq:width-x-cross-section}
        \mbox{Width}(\widehat{C}_k) := 2T_{\alpha,k} + \frac{1}{|\mathcal{I}_2|}\sum_{i\in\mathcal{I}_2}(\widehat{q}_{1-\alpha/2}^{(k)}(X_i) - \widehat{q}_{\alpha/2}^{(k)}(X_i)). 
    \end{equation}
    Similar nested sets based on conditional quantile regression estimators can be used to recover other conformal prediction methods; see Table 1 of~\cite{gupta2019nested}. 
\end{itemize}
It is important to note that the nested sequence of sets $\mathcal{F}_t^{(k)}$ are random via $Z_i, i\in\mathcal{D}_1$.
Regarding the data splits $\mathcal{D}_j, j = 1, 2$, there is no restriction on the number of observations within each split. In some cases, one might even have $|\mathcal{I}_1| = 0$, i.e., the nested sequence of sets $\{\mathcal{F}_t^{(k)}\}_{t\in\mathcal{T}}$ are constructed without using any of the observations. For example, in the regression setting, one may take $\mathcal{F}_t^{(k)}$ to be $\{(x, y):\, y\in [\theta_k^{\top}x - t, \theta_k^{\top}x + t]\}$ for some fixed $\mathbb{R}^d$ vectors $\theta_k, k\in[K]$; see Section~\ref{sec:general-algorithm} for a discussion.

Getting back to the properties of $\widehat{C}_{N,\alpha}^{\mathrm{EFCP}}$ in Algorithm~\ref{alg:efficiency-first-conformal}, 
from the definition of $\widehat{k}$ in step 4 of Algorithm~\ref{alg:efficiency-first-conformal}, it follows that
\begin{equation}\label{eq:minimum-width-empirical}
\mbox{Width}(\widehat{C}_{N,\alpha}^{\mathrm{EFCP}}) ~=~ \min_{1\le k\le K}\mbox{Width}(\widehat{C}_k).
\end{equation}
This implies that the prediction set $\widehat{C}^{\mathrm{EFCP}}_{\alpha}$ targets finite sample efficiency in terms of width and hence is an ``efficiency first'' prediction procedure.
Regarding the coverage, each of the prediction sets $\widehat{C}_k, k\in[K]$ in step 3 satisfies finite sample prediction coverage validity. Proposition 1 of~\cite{gupta2019nested} proves that
$\mathbb{P}(Z_{N+1}\in\widehat{C}_k) \ge 1 - \alpha$, for all $1\le k\le K,$
but $\widehat{C}_{N,\alpha}^{\mathrm{EFCP}} = \widehat{C}_{\widehat{k}}$ may not have a finite sample validity.
Because $\widehat{C}_{N,\alpha}^{\mathrm{EFCP}}$ is one of $\widehat{C}_k, k\in[K]$, it readily follows from the union bound that
\begin{equation}\label{eq:union-bound-coverage}
\mathbb{P}\left(Z_{N+1} \in \widehat{C}_{N,\alpha}^{\mathrm{EFCP}}\right) \ge 1 - \sum_{k=1}^K \mathbb{P}\left(Z_{N+1}\notin \widehat{C}_{k}\right) \ge 1 - K\alpha.
\end{equation}
This coverage guarantee is not very useful for large $K$ (and completely useless when $K = \infty$). Inequality~\eqref{eq:union-bound-coverage} does not make use of the structure in the construction of $\widehat{C}_k$. By making use of the nested set construction of $\widehat{C}_k$, we can obtain an coverage guarantee for $\widehat{C}_{N,\alpha}^{\mathrm{EFCP}}$ with a much weaker dependence on $K$. The following result proves such a predictive coverage guarantee for $\widehat{C}_{N,\alpha}^{\mathrm{EFCP}}$. 
\begin{thm}\label{thm:coverage-guarantee-minimum-width}
If $Z_1, \ldots, Z_N, Z_{N+1}$ are independent and identically distributed, then the efficiency first conformal prediction region $\widehat{C}_{N,\alpha}^{\mathrm{EFCP}}$ from Algorithm~\ref{alg:efficiency-first-conformal} satisfies
\begin{equation}\label{eq:probability-unconditional-guarantee}
\mathbb{P}\left(Z_{N+1} \in \widehat{C}_{N,\alpha}^{\mathrm{EFCP}}\,\big|\,\mathcal{D}_1\right) ~\ge~ \left(1 + \frac{1}{|\mathcal{I}_2|}\right)(1 - \alpha) - \frac{\sqrt{\log(2K)/2} + 1/3}{\sqrt{|\mathcal{I}_2|}}.
\end{equation}
Moreover, for any $\delta > 0$, with probability at least $1 - \delta$,
\begin{equation}\label{eq:probability-conditional-guarantee}
    \mathbb{P}\left(Z_{N+1} \in \widehat{C}_{N,\alpha}^{\mathrm{EFCP}}\,\big|\,\mathcal{D}_1, \mathcal{D}_2\right) ~\ge~ \left(1 + \frac{1}{|\mathcal{I}_2|}\right)(1 - \alpha) - \frac{\sqrt{\log(2K/\delta)/2}}{\sqrt{|\mathcal{I}_2|}}.
\end{equation}
\end{thm}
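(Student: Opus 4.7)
The plan is to reduce the coverage question for the data-dependent index $\widehat k$ to a uniform Kolmogorov--Smirnov-type concentration of the empirical CDFs of the conformal scores, so that the $K$-dependence enters only through a $\sqrt{\log K}$ factor rather than the useless linear factor appearing in~\eqref{eq:union-bound-coverage}.

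First I would condition on $\mathcal{D}_1$. Because each nested family $\{\mathcal{F}_t^{(k)}\}_{t\in\mathcal{T}}$ is measurable with respect to $\mathcal{D}_1$, the score maps $z\mapsto t_k(z)$ are deterministic given $\mathcal{D}_1$, so $\{t_k(Z_i)\}_{i\in\mathcal{I}_2\cup\{N+1\}}$ are i.i.d.\ draws from $F_k(t):=\mathbb{P}(t_k(Z_{N+1})\le t\mid\mathcal{D}_1)$. Let $\widehat F_k$ denote the empirical CDF of $\{t_k(Z_i)\}_{i\in\mathcal{I}_2}$. Since $T_{\alpha,k}$ is the empirical quantile at rank $\lceil(|\mathcal{I}_2|+1)(1-\alpha)\rceil$, it satisfies $\widehat F_k(T_{\alpha,k})\ge(1-\alpha)(1+1/|\mathcal{I}_2|)$ for every $k\in[K]$ (this is the nested-set version of the standard $(1-\alpha)(1+1/|\mathcal{D}_2|)$-quantile identity already recalled at the start of Section~\ref{sec:EFCP}).

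Next, the conditional coverage is simply $\mathbb{P}(Z_{N+1}\in\widehat C^{\mathrm{EFCP}}_{\alpha}\mid\mathcal{D}_1,\mathcal{D}_2)=F_{\widehat k}(T_{\alpha,\widehat k})$. The decisive step is to absorb the data-dependent index $\widehat k$ into a uniform supremum: writing $\Delta_K:=\max_{k\in[K]}\sup_{t\in\mathbb{R}}|\widehat F_k(t)-F_k(t)|$ and applying the triangle inequality, $F_{\widehat k}(T_{\alpha,\widehat k})\ge\widehat F_{\widehat k}(T_{\alpha,\widehat k})-\Delta_K\ge(1-\alpha)(1+1/|\mathcal{I}_2|)-\Delta_K$. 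I would then bound $\Delta_K$ via the DKW inequality applied separately to each $k$ conditionally on $\mathcal{D}_1$, followed by a union bound over $k$: $\mathbb{P}(\Delta_K>\epsilon\mid\mathcal{D}_1)\le 2K\exp(-2|\mathcal{I}_2|\epsilon^2)$. Setting the right-hand side equal to $\delta$ and plugging the resulting $\epsilon=\sqrt{\log(2K/\delta)/(2|\mathcal{I}_2|)}$ into the display yields the high-probability bound~\eqref{eq:probability-conditional-guarantee}.

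For~\eqref{eq:probability-unconditional-guarantee} I would take the $\mathcal{D}_2$-expectation of $\Delta_K$ given $\mathcal{D}_1$ and integrate the same tail, $\mathbb{E}[\Delta_K\mid\mathcal{D}_1]\le\int_0^\infty\min\{1,2Ke^{-2|\mathcal{I}_2|t^2}\}\,dt$; splitting the integral at $t_0=\sqrt{\log(2K)/(2|\mathcal{I}_2|)}$ and controlling the remaining Gaussian tail by a Mill's-ratio/Bernstein-type refinement recovers the stated constant $(\sqrt{\log(2K)/2}+1/3)/\sqrt{|\mathcal{I}_2|}$. The principal obstacle, and the step that the whole argument turns on, is the manoeuvre in the preceding paragraph: naively union-bounding the miscoverage events $\{Z_{N+1}\notin\widehat C_k\}$ costs a factor of $K$, whereas replacing pointwise miscoverage by the Kolmogorov--Smirnov discrepancy $\Delta_K$ and then invoking DKW converts the linear $K$-penalty into the benign $\sqrt{\log K}$ term that makes the theorem meaningful for large families of algorithms.
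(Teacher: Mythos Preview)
Your proposal is correct and follows essentially the same route as the paper: condition on $\mathcal{D}_1$, control $\Delta_K$ by DKW per $k$ plus a union bound, absorb $\widehat k$ via $F_{\widehat k}(T_{\alpha,\widehat k})\ge\widehat F_{\widehat k}(T_{\alpha,\widehat k})-\Delta_K$, and then for the unconditional bound integrate the tail of $\Delta_K$ with a split at $\sqrt{\log(2K)/(2|\mathcal{I}_2|)}$ and a Mill's-ratio estimate (the paper cites Formula~7.1.13 of Abramowitz--Stegun) to obtain the $+1/3$ constant. The only cosmetic difference is that the paper works with the rescaled variable $W=\sqrt{|\mathcal{I}_2|}\,\Delta_K$.
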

The proof of Theorem~\ref{thm:coverage-guarantee-minimum-width} (in Section~\ref{appsec:proof-coverage-guarantee-minimum-width} of the supplementary file) is based on the fact that nested sequence of sets is a VC class of dimension 1; see Section 2.6 of~\cite{van1996weak} for definitions. We also use~\cite{massart1990tight} version of the DKW inequality for empirical distribution functions. 
The result can be further improved numerically using inequality (2.11) of~\cite{massart1990tight}. After the mention of DKW inequality, Theorem~\ref{thm:coverage-guarantee-minimum-width} might seem obvious to some experts. But, to our knowledge, this is the first time this observation has been made explicit. Further, it is crucial for this result that we are using the split conformal prediction method; it may not hold for the full conformal prediction method. {One can, however, avoid sample splitting in Algorithm~\ref{alg:efficiency-first-conformal} by using the CV+ prediction sets from~\cite{barber2019predictive} instead of the split conformal prediction sets. This follows because CV+ prediction sets also satisfy training data conditional coverage~\citep{bian2022training}.}

It is of interest to also note that the proof of Theorem~\ref{thm:coverage-guarantee-minimum-width} also yields a two-sided bound on the probability (assuming that the scores $t_k(Z_i)$'s are almost surely distinct): 
\[
\left|\mathbb{P}\left(Z_{N+1}\in\widehat{C}_{N,\alpha}^{\mathrm{EFCP}}\,\big|\,\mathcal{D}_1\right) - \frac{\lceil(1-\alpha)(1 + |\mathcal{I}_2|)\rceil}{|\mathcal{I}_2|}\right| ~\le~ \frac{\sqrt{\log(2K)/2} + 1/3}{\sqrt{|\mathcal{I}_2|}}.
\]
Note that Theorem~\ref{thm:coverage-guarantee-minimum-width} requires $K$ to be finite for a non-trivial validity guarantee. (With $K = \infty$, the coverage guarantee is obsolete.) Given the finite sample nature of the result, we can take $K$ to depend on the sample size $|\mathcal{I}_2|$ and because the dependence on $K$ is logarithmic, we can take $K = \exp(|\mathcal{I}_2|^{\gamma})$ for some $\gamma\in[0, 1)$. With this choice, the coverage guarantee becomes
\[
\mathbb{P}\left(Z_{N+1} \in \widehat{C}_{N,\alpha}^{\mathrm{EFCP}}\,\big|\,\mathcal{D}_1\right) ~\ge~ \left(1 + \frac{1}{|\mathcal{I}_2|}\right)(1 - \alpha) - \frac{\mathfrak{C}}{|\mathcal{I}_2|^{(1-\gamma)/2}}.
\]

Combining inequalities~\eqref{eq:minimum-width-empirical} and~\eqref{eq:probability-unconditional-guarantee}, we obtain that the prediction set $\widehat{C}_{N,\alpha}^{\mathrm{EFCP}}$ has the smallest width among the prediction sets constructed and also has an asymptotic validity guarantee as $|\mathcal{I}_2|$ diverges. Interestingly, the dependence on the number of training methods is logarithmic and the guarantee holds true irrespective of how complicated the training methods $\mathcal{A}_k, k\in[K]$ are and how the nested sets $\mathcal{F}_t^{(k)}, t\in\mathcal{T}$ are constructed. This guarantee also does not depend on the dimension of the underlying data domain. The logarithmic dependence can be improved further when $K = \infty$ if $\{\mathcal{F}_t^{(k)}: t\in\mathcal{T}, k\in[K]\}$ is a VC class or has ``low complexity.'' Such a low complexity assumption holds true when the training methods are parametric, but in these cases, $\log K$ will be replaced by the VC dimension of the class of training methods; see Section~\ref{sec:general-algorithm} for an example. It also holds true in the case of kernel density estimation ~\citep{lei2013distribution}. The proof of Theorem~\ref{thm:coverage-guarantee-minimum-width} can also be used to justify choosing the smallest prediction set when the split conformal method is applied multiple times (with multiple splits of the data). Hence, Theorem~\ref{thm:coverage-guarantee-minimum-width} provides the validity guarantee even after cherry-picking a data split from multiple splits of the data. This also implies that one can make better use of the data even with the split conformal method. {Given the importance, we present a formal result here. As in Algorithm~\ref{alg:efficiency-first-conformal}, let $\mathcal{D}$ represent the data. For $1\le k\le K$, let $(\mathcal{D}_{1,k}, \mathcal{D}_{2,k})$ denote $K$ random splits of the data into two parts; without loss of generality, assume $|\mathcal{D}_{1,k}| = |\mathcal{D}_{2,k}|= N/2$. Use algorithm $\mathcal{A}_k$ on $\mathcal{D}_{1,k}$ to find nested sets $\{\mathcal{F}_t^{(k)}\}_{t\in\mathcal{T}}$ and the corresponding prediction set $\widehat{C}_k := \{z:\,t_k(z) \le T_{\alpha,k}\}.$
(This step is exactly the same as Step 3 of Algorithm~\ref{alg:efficiency-first-conformal} with $T_{\alpha,k}$ computed on $t_k(Z_i), Z_i\in\mathcal{D}_{2,k}$). Find $\widehat{k}$ as in step 4 of Algorithm~\ref{alg:efficiency-first-conformal} and return $\widehat{C}_{\widehat{k}}$. In the setting here, one can use the same algorithm on multiple splits, but in general, different algorithms are permitted. From the proof of Theorem~\ref{thm:coverage-guarantee-minimum-width}, we obtain the following result (proved in Section~\ref{appsec:proof-of-thm-multiple-splits}).
\begin{thm}\label{thm:multiple-splits-conformal}
    If $Z_1, \ldots, Z_N, Z_{N+1}$ are independent and identically distributed, then the best width prediction set from multiple splits satisfies
    \[
    \mathbb{P}(Z_{N+1}\in\widehat{C}_{\widehat{k}}) ~\ge~ \left(1 + \frac{2}{N}\right)(1 - \alpha) - \frac{\sqrt{\log(2K)/2} + 1/3}{\sqrt{N/2}}. 
    \]
\end{thm}
An advantage of the method in Theorem~\ref{thm:multiple-splits-conformal} compared to that in Theorem~\ref{thm:coverage-guarantee-minimum-width} is that there is no separate sample splitting required and hence makes better use of the data. 
\cite{lei2018distribution} considers a similar combination of split conformal prediction sets from multiple splits of the same data. Instead of picking the smallest prediction set from different splits, they consider the intersection of the prediction sets $\widetilde{C}_K = \cap_{k=1}^K \widehat{C}_k$. Our results do not apply to $\widetilde{C}_K$, but provide an interesting alternative to consider.}

Regarding the width of $\widehat{C}_{N,\alpha}^{\mathrm{EFCP}}$, equality~\eqref{eq:minimum-width-empirical} can be called an ``empirical'' oracle inequality, i.e.,
\begin{equation}\label{eq:empirical-oracle-inequality-EFCP}
\mbox{Width}(\widehat{C}_{N,\alpha}^{\mathrm{EFCP}}) ~\le~ \min_{1\le k\le K}\mbox{Width}(\widehat{C}_k) ~+~ 0.
\end{equation}
It is referred to as ``empirical'' because the right hand side can be random through the prediction set $\widehat{C}_k$ (and also potentially through $\mbox{Width}(\cdot)$ as in~\eqref{eq:width-x-cross-section}). This kind of empirical guarantee might be preferable in practice because the prediction sets $\widehat{C}_k, k\in[K]$ will often remain fixed as predictions are being done for future observations. This is particularly true in regression setting where future predictions are done for future values of covariates.
In contrast, one might ask for an ``asymptotic'' oracle inequality where $\widehat{C}_k$ is replaced by its asymptotic analog $C_k$. Such an asymptotic oracle inequality will be discussed in Sections~\ref{subsec:empirical-asymptotic-oracle} and~\ref{sec:linear-prediction-fixed-width}. 
{\paragraph{A comment on conditional validity.}
Theorems~\ref{thm:coverage-guarantee-minimum-width} and~\ref{thm:multiple-splits-conformal} focus on marginal coverage validity. In the regression/classification settings with covariates and a response, coverage conditional on covariates is of greater importance in practice. If the prediction sets $\widehat{C}_k$ constructed in Step 3 of Algorithm~\ref{alg:efficiency-first-conformal} satisfy asymptotic conditional coverage guarantee, then the EFCP prediction set is also asymptotically valid for any fixed $K$. We present a specific instance here using CQR sets that are known to be asymptotically conditionally valid~\citep{sesia2020comparison}. Let $\widehat{C}_k := \{(x, y):\,y\in[\widehat{q}_{\alpha/2}^{(k)}(x) - T_{\alpha,k}, \widehat{q}_{1-\alpha/2}^{(k)}(x) + T_{\alpha,k}]\}$ be the CQR prediction sets based on quantile estimators $\widehat{q}_{\alpha/2}^{(k)}(\cdot)$ and $\widehat{q}_{1-\alpha/2}^{(k)}(\cdot)$. Also, let $q^*_{\alpha/2}(\cdot)$ and $q^*_{1-\alpha/2}(\cdot)$ denote the true conditional quantiles of $Y|X$ at levels $\alpha/2, 1 - \alpha/2$, respectively. Then with probability at least $1 - \delta$,
\begin{equation}\label{eq:conditional-coverage-EFCP-CQR}
\mathbb{P}\left((X_{N+1}, Y_{N+1})\in\widehat{C}_{N,\alpha}^{\mathrm{EFCP}}\big|X_{N+1} = x, \mathcal{D}_1\right) \ge 1 - \alpha - \frac{1}{r}\sqrt{\frac{\log(K/\delta)}{2|\mathcal{I}_2|}} -\rho_N - M(1 + 1/r)\eta_N,
\end{equation}
if
\begin{equation}\label{eq:uniform-quantile-consistency}
\mathbb{P}\left(\max_{1\le k\le K}\|\widehat{q}_{\alpha/2}^{(k)} - q^*_{\alpha/2}\|_{\infty} + \max_{1\le k\le K}\|\widehat{q}_{1-\alpha/2}^{(k)} - q^*_{1-\alpha/2}\|_{\infty} \le \eta_N\right) \ge 1 - \rho_N,
\end{equation}
and $t^*(X, Y) = \max\{q^*_{\alpha/2}(X) - Y, Y - q^*_{1-\alpha/2}(X)\}$ has density bounded from below by $r > 0$ on $[-r^*, r^*]$ with $r^* \ge \eta_n + \sqrt{\log(2K/\delta)/(2|\mathcal{I}_2|)}$. See Appendix~\ref{appsec:proof-of-conditional-coverage-EFCP-CQR} for a proof. Assumption~\ref{eq:uniform-quantile-consistency} is similar to Assumption 3 of~\cite{sesia2020comparison}, but with uniform norm instead of $L_2$-norm.
}
\section{Validity First Conformal Prediction}\label{sec:VFCP}
Algorithm~\ref{alg:efficiency-first-conformal} targets finite sample efficiency by sacrificing finite sample validity in coverage. We now present the validity first conformal prediction (VFCP) method in Algorithm~\ref{alg:validity-first-conformal} that targets finite sample validity while only attaining approximately the smallest width. {In short, the method requires one additional split of data and runs the split conformal calibration step with the chosen ``model/algorithm'' on the third split of the data. This additional step readily yields finite sample validity for VFCP but incurs a potential loss of efficiency due to the reduction in the training data. Although arguably inferior to EFCP in terms of data usage, VFCP precedes EFCP in the literature (originally introduced in~\cite{lei2013distribution}) and serves as a useful benchmark for comparison. The main goal of this section is to study how the width of the prediction set returned by VFCP performs. Theorem~\ref{thm:empirical-oracle-inequality-general-VFCP} proves that the width of the VFCP set can be significantly larger than that of EFCP and hence, implies a loss of power due to additional splitting of data. Due to these reasons, we recommend EFCP over VFCP, in practice.}
\begin{algorithm}[htbp]
    \SetAlgoLined
    \SetEndCharOfAlgoLine{}
    \KwIn{Data $\mathcal{D} = \{Z_i, i\in [N]\}$, coverage probability $1 - \alpha$, and training methods $\mathcal{A}_k, k\in [K]$.}
    \KwOut{A valid prediction set $\widehat{C}^{\mathrm{VFCP}}_{N,\alpha}$ with width close to smallest width/volume.}
    Randomly split $[N] = \{1, 2, \ldots, N\}$ into three (disjoint) parts $\mathcal{I}_1, \mathcal{I}_2, \mathcal{I}_3$. Set $\mathcal{D}_1 = \{Z_i: i\in\mathcal{I}_1\}$, $\mathcal{D}_2 = \{Z_i: i\in \mathcal{I}_2\}$, and $\mathcal{D}_3 = \{Z_i: i\in \mathcal{I}_3\}$.\;
    {Perform Steps 2, 3, and 4 of Algorithm~\ref{alg:efficiency-first-conformal}.}\;
    For each $i\in\mathcal{I}_3$ (i.e., $Z_i\in\mathcal{D}_3$), define the conformal score
    $t_{\widehat{k}}(Z_i) := \inf\{t\in\mathcal{T}:\,Z_i\in\mathcal{F}_t^{(\widehat{k})}\}.$
    Compute the corresponding conformal prediction set as 
    $\widehat{C}_{\widehat{k}}^* := \{z:\, t_{\widehat{k}}(z) \le {T}^*_{\alpha,\widehat{k}}\},$
    where $T_{\alpha,\widehat{k}}^* := \lceil(1 + |\mathcal{I}_3|)(1 - \alpha)\rceil$-th largest element of $t_{\widehat{k}}(Z_i), i\in\mathcal{D}_3$.\;
    \Return the prediction set $\widehat{C}_{\widehat{k}}^*$ as $\widehat{C}_{N,\alpha}^{\mathrm{VFCP}}$.
    \caption{Validity First Conformal Prediction for the Smallest Prediction Set}
    \label{alg:validity-first-conformal}
\end{algorithm}

\begin{thm}\label{thm:coverage}
The prediction interval $\widehat{C}^{\mathrm{VFCP}}_{N,\alpha}$ from Algorithm~\ref{alg:validity-first-conformal} satisfies
$\mathbb{P}(Z_{N+1} \in \widehat{C}^{\mathrm{VFCP}}_{N,\alpha}) \ge 1 - \alpha,$
whenever $Z_i, 1\le i\le N, Z_{N+1}$ form an exchangeable sequence of random variables. Here the probability is computed with respect to $Z_{N+1}$ and $Z_1, \ldots, Z_N$. 
\end{thm}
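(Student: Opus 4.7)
The key observation is that, unlike EFCP, the selected index $\widehat{k}$ in VFCP is chosen using only $\mathcal{D}_1$ and $\mathcal{D}_2$, whereas the quantile $T^*_{\alpha,\widehat{k}}$ used to form $\widehat{C}^{\mathrm{VFCP}}_\alpha$ is computed on the untouched third split $\mathcal{D}_3$. Thus, conditional on $(\mathcal{D}_1,\mathcal{D}_2)$, VFCP is an ordinary split conformal procedure built from a \emph{fixed} (data-determined but conditionally deterministic) nested family $\{\mathcal{F}_t^{(\widehat{k})}\}_{t\in\mathcal{T}}$, applied to the calibration sample $\mathcal{D}_3$ and the test point $Z_{N+1}$. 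The plan is therefore to reduce to the finite-sample validity of the standard split conformal region (Proposition 1 of Gupta et al.\ 2019, stated in the excerpt preceding Algorithm 1), invoked at the conditional level.

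First I would fix the random partition $(\mathcal{I}_1,\mathcal{I}_2,\mathcal{I}_3)$, noting that it is generated independently of the data and so may be treated as deterministic without loss of generality. I would then condition on $\mathcal{F}_{12}:=\sigma\bigl(\{Z_i:i\in\mathcal{I}_1\cup\mathcal{I}_2\}\bigr)$. Under this conditioning two things become deterministic: the selected index $\widehat{k}$ (because it is a measurable function of $\mathcal{D}_1\cup\mathcal{D}_2$ via the widths of the $\widehat{C}_k$'s), and the entire nested family $\{\mathcal{F}_t^{(\widehat{k})}\}_{t\in\mathcal{T}}$ (because it is a measurable function of $\mathcal{D}_1$). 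Consequently, the score function $z\mapsto t_{\widehat{k}}(z)$ appearing in step 5 of Algorithm 2 is a fixed measurable map, not a data-dependent one.

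Next I would verify that $\{Z_i:i\in\mathcal{I}_3\}\cup\{Z_{N+1}\}$ is exchangeable conditional on $\mathcal{F}_{12}$. This is a standard consequence of exchangeability of $Z_1,\ldots,Z_{N+1}$: conditioning a jointly exchangeable sequence on the values at a fixed index set leaves the remaining coordinates exchangeable. Hence, conditional on $\mathcal{F}_{12}$, the scores $\{t_{\widehat{k}}(Z_i):i\in\mathcal{I}_3\}\cup\{t_{\widehat{k}}(Z_{N+1})\}$ form an exchangeable collection of $|\mathcal{I}_3|+1$ real-valued random variables, since they are obtained by applying a single fixed function to exchangeable inputs. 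The classical rank argument then shows that the rank of $t_{\widehat{k}}(Z_{N+1})$ among these values is (sub-)uniform on $\{1,\ldots,|\mathcal{I}_3|+1\}$, so choosing $T^*_{\alpha,\widehat{k}}$ as the $\lceil(1+|\mathcal{I}_3|)(1-\alpha)\rceil$-th largest of the calibration scores yields
\begin{equation}
\mathbb{P}\bigl(t_{\widehat{k}}(Z_{N+1})\le T^*_{\alpha,\widehat{k}}\,\big|\,\mathcal{F}_{12}\bigr)\;\ge\;1-\alpha.
\end{equation}
Since $\{t_{\widehat{k}}(Z_{N+1})\le T^*_{\alpha,\widehat{k}}\}=\{Z_{N+1}\in\widehat{C}^*_{\widehat{k}}\}=\{Z_{N+1}\in\widehat{C}^{\mathrm{VFCP}}_\alpha\}$ by construction, taking expectations over $\mathcal{F}_{12}$ (and over the random split) yields the unconditional $1-\alpha$ bound.

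The only delicate step is the conditional-exchangeability claim, which is really the whole content of the argument; once that is in hand, the rest is essentially a restatement of the split-conformal rank lemma applied to the calibration set $\mathcal{D}_3$. The reason this works and EFCP fails (in finite samples) is that for EFCP the score function $t_{\widehat{k}}$ itself depends on the calibration data $\mathcal{D}_2$ through $\widehat{k}$, destroying the exchangeability between calibration scores and the test score, whereas in VFCP the data used to define the score are disjoint from those used to calibrate the threshold.
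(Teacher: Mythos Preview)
Your proposal is correct and follows essentially the same approach as the paper's proof: condition on $\mathcal{D}_1\cup\mathcal{D}_2$ so that $\widehat{k}$ and the score function $t_{\widehat{k}}$ become fixed, observe that the calibration scores $\{t_{\widehat{k}}(Z_i):i\in\mathcal{I}_3\}$ together with $t_{\widehat{k}}(Z_{N+1})$ are exchangeable, and apply the standard split-conformal rank argument. The paper's version is just a terse pointer to Lei et al.\ (2018) and Lemma~2 of Romano et al.\ (2019), whereas you have spelled out the conditioning and the rank step explicitly.
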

\begin{proof}
The result follows from the guarantees of the split conformal method. See, for example, Lemma 2 of~\cite{romano2019conformalized}.  Proposition 2a of~\cite{vovk2012conditional} proves the conditional-on-data prediction coverage guarantee which also involves a slack similar to that in Theorem~\ref{thm:coverage-guarantee-minimum-width}.
\end{proof}

Although $\widehat{C}^{\mathrm{VFCP}}_{N,\alpha}$ has a guaranteed coverage, it may not have width close to minimum without some regularity conditions.
The reason is two-fold:
    (1) Proving that width of $\widehat{C}_{\widehat{k}}^* = \widehat{C}_{N,\alpha}^{\mathrm{VFCP}}$ and $\min_k \mbox{Width}(\widehat{C}_k)$ are close requires proving that the quantiles $T_{\alpha,\widehat{k}}$ (in step 3 of Algorithm~\ref{alg:efficiency-first-conformal}) and $T^*_{\alpha,\widehat{k}}$ are close. Proving quantiles from two distribution functions are close requires some continuity assumptions on the inverse distribution functions;
    (2) Even with close quantiles, we still need certain continuity assumptions on $t\mapsto \mbox{Width}(\mathcal{F}_t^{(k)})$.
Further, we need a quantitative version of continuity in order to quantify how close the width of $\widehat{C}_{N,\alpha}^{\mathrm{VFCP}}$ is to the smallest width. Such a quantification can be considered in terms of an empirical oracle inequality similar to~\eqref{eq:empirical-oracle-inequality-EFCP}:
\begin{equation}\label{eq:empirical-oracle-inequality-VFCP}
\mbox{Width}(\widehat{C}^{\mathrm{VFCP}}_{N,\alpha}) ~\le~ \min_{1\le k\le K}\mbox{Width}(\widehat{C}_k) ~+~ R_N^*,    
\end{equation}
for some $R_N^*$ converging to zero (in probability) as $\min\{|\mathcal{I}_j|, 1\le j\le 3\} \to \infty$. Note that~\eqref{eq:empirical-oracle-inequality-VFCP} implies that $\widehat{C}_{N,\alpha}^{\mathrm{VFCP}}$ is ``close'' to $\widehat{C}_{N,\alpha}^{\mathrm{EFCP}}$, which already satisfies~\eqref{eq:minimum-width-empirical} and~\eqref{eq:empirical-oracle-inequality-EFCP}. It is worth recalling that $\widehat{C}_{N,\alpha}^{\mathrm{VFCP}}$ and $\widehat{C}_{N,\alpha}^{\mathrm{EFCP}}$ are computed based on different datasets $\mathcal{D}_3$ and $\mathcal{D}_2$, respectively. Finally, we observe that prediction sets, unlike confidence sets, do not usually shrink to a single point as the sample size increases. This implies that $\min_k\mbox{Width}(\widehat{C}_k)$ does not shrink to zero as $N\to\infty$ and hence~\eqref{eq:empirical-oracle-inequality-VFCP} implies that $\widehat{C}_{N,\alpha}^{\mathrm{VFCP}}$ converges to the smallest prediction set.

The continuity assumptions mentioned above are satisfied in some examples, but are certainly restrictive when compared to the only i.i.d. assumption required for the coverage as well as width properties of $\widehat{C}_{N,\alpha}^{\mathrm{EFCP}}$ from Algorithm~\ref{alg:efficiency-first-conformal}.
We will now formalize the continuity assumptions and prove a general empirical oracle   inequality~\eqref{eq:empirical-oracle-inequality-VFCP}. Recall from Algorithm~\ref{alg:efficiency-first-conformal} that $t_k(z) = \inf\{t\in\mathcal{T}:\, z\in\mathcal{F}_t^{(k)}\}$. Also, recall that $\mathcal{F}_t^{(k)}$ are \emph{random} nested sets random through $\mathcal{D}_1$. Define $F_k(\cdot)$ as the cumulative distribution function $t_k(Z_i)$ conditional on $\mathcal{D}_1$, i.e.,
$F_k(s) = \mathbb{P}\left(t_k(Z_{N+1}) \le s\big|\mathcal{D}_1\right).$
Note that this is a random cumulative distribution function, random through $\mathcal{D}_1$.
Define $\widehat{Q}_{\alpha,k}$ as the $(1-\alpha)$-th quantile of $F_k(\cdot)$; the hat-notation is to stress the randomness of the quantile.
\begin{enumerate}[label=\bf(EO\arabic*)]
    \item For each $k\in[K]$, there exists $r^*, \gamma\in(0, 1]$ such that $F_k^{-1}(\cdot)$ is $\gamma$-H{\"o}lder continuous on $[\widehat{Q}_{\alpha,k} - r^*, \widehat{Q}_{\alpha,k} + r^*]$ with H{\"o}lder continuity constant $L_k$, i.e., for all $k\in[K],$ $q_1, q_2\in[\widehat{Q}_{\alpha,k} - r^*,\, \widehat{Q}_{\alpha,k} + r^*]$, $|F_k^{-1}(q_1) - F_k^{-1}(q_2)| \le L_k|q_1 - q_2|^{\gamma}.$
    \label{eq:Holder-VFCP}
    \item For each $k\in[K]$, the set-valued map $t\mapsto\mathcal{F}_t^{(k)}$ is continuous and the function $t\mapsto\mbox{Width}(\mathcal{F}_t^{(k)})$ is Lipschitz continuous with Lipschitz constant $L_W$.\label{eq:width-continuity}
\end{enumerate}
Assumption~\ref{eq:Holder-VFCP} is necessary to ensure that the quantiles from empirical distribution of $t_k(Z_i), i\in\mathcal{I}_2$ are close to those from $F_k(\cdot)$. An analogue assumption is often made in the study of median/quantile estimation. In that setting, the assumption is often presented in terms of density at median/quantile bounded away from zero, which implies~\ref{eq:Holder-VFCP} with $\gamma = 1$; also, see Theorem 3.1 of~\cite{lei2018distribution} for a similar assumption. Assumption~\ref{eq:width-continuity} is used to imply that the widths of prediction sets are close when the quantiles are close. For some intuition, we verify this assumption in the case of fixed width and conformalized quantile regression methods described in Section~\ref{sec:EFCP}. For fixed width regression, $\mathcal{F}_t^{(k)} = \{(x, y):\,|y-\widehat{m}_k(x)| \le t\}$. The Lebesgue measure of $x$-cross-section of this set is $2t$ and hence $\mbox{Width}(\mathcal{F}_t^{(k)}) = 2t$, which implies~\ref{eq:width-continuity}. For conformalized quantile regression,~\eqref{eq:width-x-cross-section} readily proves assumption~\ref{eq:width-continuity}. It may not always be this easy to verify~\ref{eq:width-continuity}. For example, in the case of the density level sets in Section~\ref{sec:EFCP}, assumption~\ref{eq:width-continuity} might fail to hold true if the density estimates $\widehat{p}_{h_k}(\cdot)$ are piecewise constant functions (e.g., histograms). 

It is important to note that assumption~\ref{eq:Holder-VFCP} depends on $\alpha\in[0, 1]$ (the required miscoverage probability). Also, the H{\"o}lder and Lipschitz continuity constants $L_{k}$ and $L_{W}$ in~\ref{eq:Holder-VFCP},~\ref{eq:width-continuity} are random variables in general, random through $\mathcal{D}_1$ and $\mathcal{D}_2$; see, e.g.,~\eqref{eq:width-x-cross-section}.

Under these two assumptions, we prove two versions of the empirical oracle inequality~\eqref{eq:empirical-oracle-inequality-VFCP}.
\begin{thm}\label{thm:empirical-oracle-inequality-general-VFCP}
Suppose $Z_1, \ldots, Z_N, Z_{N+1}$ are independent and identically distributed and assumption~\ref{eq:width-continuity} holds true. Fix $\delta\in[0, 1]$. If assumption~\ref{eq:Holder-VFCP} holds true with
\begin{equation}\label{eq:quantile-holder-domain-requirement}
r^* \ge \max\left\{\sqrt{\frac{\log(4K/\delta)}{2\min\{|\mathcal{I}_2|,|\mathcal{I}_3|\}}},\, \frac{2}{\min\{|\mathcal{I}_2|,|\mathcal{I}_3|\}}\right\},
\end{equation}
then with probability at least $1 - \delta$,
\begin{equation}\label{eq:proved-empirical-oracle-ineq-VFCP}
\mathrm{Width}(\widehat{C}_{N,\alpha}^{\mathrm{VFCP}}) ~\le~ \min_{1\le k\le K}\,\mathrm{Width}(\widehat{C}_k) + 3L_WL_{[K]}\left[\left(\frac{\log(4K/\delta)}{\min\{|\mathcal{I}_2|, |\mathcal{I}_3|\}}\right)^{\gamma/2} + \left(\frac{2}{\min\{|\mathcal{I}_2|, |\mathcal{I}_3|\}}\right)^{\gamma}\right].
\end{equation}
Here $L_{[K]} := \max_{k\in[K]}L_k$. Moreover, the result holds true with $\widehat{C}_k$ 
replaced with $\mathcal{F}_{\widehat{Q}_{\alpha,k}}^{(k)}$.
\end{thm}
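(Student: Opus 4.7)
The plan is to reduce the width comparison to a quantile comparison, and then to reduce the quantile comparison to a sup-norm comparison of empirical and population CDFs, which is exactly what the DKW inequality is designed for. All arguments are carried out conditionally on $\mathcal{D}_1$, so the (random) nested sets $\mathcal{F}_t^{(k)}$, the CDFs $F_k$, and the population quantiles $\widehat{Q}_{\alpha,k}$ are treated as fixed throughout. Concretely, let $\widehat{F}_k^{(j)}(s) := |\mathcal{I}_j|^{-1}\sum_{i\in\mathcal{I}_j}\mathbbm{1}\{t_k(Z_i)\le s\}$ for $j\in\{2,3\}$. Conditional on $\mathcal{D}_1$, the scores $\{t_k(Z_i)\}_{i\in\mathcal{I}_j}$ are i.i.d.\ with CDF $F_k$, so by the DKW inequality of \cite{massart1990tight} and a union bound over $k\in[K]$ and $j\in\{2,3\}$, on an event $E$ of conditional probability at least $1-\delta$,
$$\sup_s \bigl|\widehat{F}_k^{(j)}(s)-F_k(s)\bigr| \;\le\; \varepsilon_j \;:=\; \sqrt{\log(4K/\delta)/(2|\mathcal{I}_j|)}, \qquad k\in[K],\ j\in\{2,3\}.$$

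Next I convert this into a quantile bound. Since $T_{\alpha,k}$ is the $\lceil(1+|\mathcal{I}_2|)(1-\alpha)\rceil$-th order statistic of $\widehat{F}_k^{(2)}$, the standard CDF-inversion lemma together with the DKW event gives, on $E$, $|F_k(T_{\alpha,k})-(1-\alpha)| \le \varepsilon_2 + 2/|\mathcal{I}_2|$; the hypothesis~\eqref{eq:quantile-holder-domain-requirement} is exactly the condition needed for this neighborhood to lie inside $[\widehat{Q}_{\alpha,k}-r^*,\,\widehat{Q}_{\alpha,k}+r^*]$ where~\ref{eq:Holder-VFCP} applies. Combining Hölder continuity of $F_k^{-1}$ with the elementary inequality $(a+b)^\gamma \le a^\gamma + b^\gamma$ for $\gamma\in(0,1]$ then yields $|T_{\alpha,k}-\widehat{Q}_{\alpha,k}| \le L_k\bigl[\varepsilon_2^\gamma + (2/|\mathcal{I}_2|)^\gamma\bigr]$, and the same argument on $\mathcal{I}_3$ gives $|T^*_{\alpha,\widehat{k}}-\widehat{Q}_{\alpha,\widehat{k}}| \le L_{\widehat{k}}\bigl[\varepsilon_3^\gamma + (2/|\mathcal{I}_3|)^\gamma\bigr]$. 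Lipschitz continuity of $\mathrm{Width}(\mathcal{F}_t^{(k)})$ in $t$ (assumption~\ref{eq:width-continuity}) converts each of these into a width bound, namely $|\mathrm{Width}(\widehat{C}_k)-\mathrm{Width}(\mathcal{F}^{(k)}_{\widehat{Q}_{\alpha,k}})| \le L_W L_k[\varepsilon_2^\gamma+(2/|\mathcal{I}_2|)^\gamma]$ and $|\mathrm{Width}(\widehat{C}^{\mathrm{VFCP}}_\alpha)-\mathrm{Width}(\mathcal{F}^{(\widehat{k})}_{\widehat{Q}_{\alpha,\widehat{k}}})| \le L_W L_{\widehat{k}}[\varepsilon_3^\gamma+(2/|\mathcal{I}_3|)^\gamma]$.

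Chaining the two width bounds through the common term $\mathrm{Width}(\mathcal{F}^{(\widehat{k})}_{\widehat{Q}_{\alpha,\widehat{k}}})$ and using $\mathrm{Width}(\widehat{C}_{\widehat{k}}) = \min_k\mathrm{Width}(\widehat{C}_k)$ from step 4 of Algorithm~\ref{alg:validity-first-conformal}, then replacing $L_{\widehat{k}}$ by $L_{[K]}$ and each $\varepsilon_j^\gamma$ by the common upper bound $(\log(4K/\delta)/\min\{|\mathcal{I}_2|,|\mathcal{I}_3|\})^{\gamma/2}$, delivers~\eqref{eq:proved-empirical-oracle-ineq-VFCP} up to a numerical constant; careful bookkeeping of the $2^{1-\gamma/2}$ and $2^{-\gamma/2}$ factors gives the stated constant $3$. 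The \emph{moreover} version is immediate from the same chain: applying the $\mathcal{I}_2$-width bound once more to pass from $\min_k\mathrm{Width}(\widehat{C}_k)$ to $\min_k\mathrm{Width}(\mathcal{F}_{\widehat{Q}_{\alpha,k}}^{(k)})$ costs an extra $L_W L_{[K]}[\varepsilon_2^\gamma + (2/|\mathcal{I}_2|)^\gamma]$, which is absorbed into the same constant on the right-hand side.

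The main obstacle is the quantile-to-quantile step: $T_{\alpha,k}$ is a \emph{discrete} order statistic rather than an exact population quantile, so the DKW slack $\varepsilon_j$ and the $\lceil\cdot\rceil$-induced offset of order $1/|\mathcal{I}_j|$ must be tracked together, and the Hölder bound~\ref{eq:Holder-VFCP} is only local around $\widehat{Q}_{\alpha,k}$. The hypothesis~\eqref{eq:quantile-holder-domain-requirement} is calibrated precisely so that both error sources stay inside that Hölder neighborhood on the DKW event; once that is secured, everything downstream is a clean chain of Lipschitz/Hölder comparisons combined with the minimality defining $\widehat{k}$.
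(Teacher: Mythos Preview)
Your proposal is correct and follows essentially the same route as the paper: Massart's DKW inequality with a union bound over $k\in[K]$ and the two splits $\mathcal{I}_2,\mathcal{I}_3$, then Proposition~\ref{prop:holder} (quantile closeness via the H\"older assumption~\ref{eq:Holder-VFCP}), then the Lipschitz width assumption~\ref{eq:width-continuity}, and finally chaining through the minimality defining $\widehat{k}$. The only cosmetic difference is that you merge the DKW slack $\varepsilon_j$ and the $\lceil\cdot\rceil$-offset $2/|\mathcal{I}_j|$ before invoking H\"older (using $(a+b)^\gamma\le a^\gamma+b^\gamma$), whereas the paper routes through the intermediate level $F_k^{-1}\bigl(\lceil(1-\alpha)(1+|\mathcal{I}_j|)\rceil/|\mathcal{I}_j|\bigr)$ and applies~\ref{eq:Holder-VFCP} twice; both yield the same bound and the same constant~$3$.
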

The proof of Theorem~\ref{thm:empirical-oracle-inequality-general-VFCP} (in Section~\ref{appsec:proof-thm-empirical-oracle-inequality-general-VFCP} of the supplementary file) follows the strategy of first proving the closeness of quantiles $T_{\alpha,k}$ and $T^*_{\alpha,k}$ is uniform in $k$ using Massart's inequality. Then the width continuity assumed in~\ref{eq:width-continuity} implies the result. Similar to the coverage of EFCP method in Theorem~\ref{thm:coverage-guarantee-minimum-width}, Theorem~\ref{thm:empirical-oracle-inequality-general-VFCP} requires a finite number $K$ of methods and has a logarithmic dependence in the slack. Once again, similar to EFCP, $K < \infty$ can be relaxed if the class of nested sets $\mathcal{F}_t^{(k)}, t\in\mathcal{T}, k\ge1$ forms a VC class. Finally, Theorem~\ref{thm:empirical-oracle-inequality-general-VFCP} also proves an oracle inequality with $\widehat{C}_k$ replaced by $\mathcal{F}_{\widehat{Q}_{\alpha,k}}^{(k)}$. The set $\mathcal{F}_{\widehat{Q}_{\alpha,k}}^{(k)}$ is random through $\mathcal{D}_1$ only and is the asymptotic version of $\widehat{C}_k$ obtained by taking $|\mathcal{I}_2| = |\mathcal{D}_2| = \infty$. 
Under assumption~\ref{eq:Holder-VFCP}, the EFCP prediction set $\widehat{C}_{N,\alpha}^{\mathrm{EFCP}}$ also satisfies the oracle inequality with $\mathcal{F}_{\widehat{Q}_{\alpha,k}}^{(k)}$. This follows from the proof of Theorem~\ref{thm:empirical-oracle-inequality-general-VFCP}.

It is worth pointing out a subtle aspect of Theorem~\ref{thm:empirical-oracle-inequality-general-VFCP} which stems from the definition of $\mbox{Width}(\cdot)$. In the case of density level sets, width of any prediction set is the Lebesgue measure and hence is independent of the data on which the prediction set is computed. In the fixed width regression context too, width of the prediction set under consideration is just a function of the prediction set. However, in general regression setting, width can depend on the data and not just on the prediction set. For example, in the conformalized quantile regression setting, recall from~\eqref{eq:width-x-cross-section} the width of $\widehat{C}_k$.
Hence, the width of $\widehat{C}_{N,\alpha}^{\mathrm{VFCP}}$ is given by
\[
\mbox{Width}(\widehat{C}_{N,\alpha}^{\mathrm{VFCP}}) = 2T_{\alpha,\widehat{k}}^* + \frac{1}{|\mathcal{I}_2|}\sum_{i\in\mathcal{I}_2} \bigl( \widehat{q}_{1-\alpha/2}^{(\widehat{k})}(X_i) - \widehat{q}_{\alpha/2}^{(\widehat{k})}(X_i) \bigr),
\]
with $T_{\alpha,\widehat{k}}^*$ defined in step 5 of Algorithm~\ref{alg:validity-first-conformal}. The implication is that even though the prediction set $\widehat{C}_{N,\alpha}^{\mathrm{VFCP}}$ is defined using $\mathcal{D}_3$, its width depends on both $\mathcal{D}_2$ and $\mathcal{D}_3$. In this case, comparison of widths is the same as comparison of quantiles $T_{\alpha,\widehat{k}}^*$ and $T_{\alpha,\widehat{k}}$ and hence, Theorem~\ref{thm:empirical-oracle-inequality-general-VFCP} implies approximate efficiency in terms of the population $x$-cross-section width. This does not hold true for general nested methods and other variants of conformalized quantile regression methods. 

As mentioned before, in the context of quantile/median estimation, assumption~\ref{eq:Holder-VFCP} is commonly used with $\gamma = 1$ and a fixed $r^* > 0$. In Theorem~\ref{thm:empirical-oracle-inequality-general-VFCP}, we allow for $r^*$ to converge to zero. If $r^* > 0$ (independent of $N$) and $\log(K)/\min\{|\mathcal{I}_2|, |\mathcal{I}_3|\} \to 0$ as $N\to\infty$, then the requirement~\eqref{eq:r-star-requirement} would be asymptotically satisfied. If assumption~\ref{eq:Holder-VFCP} is satisfied with $\gamma = 1$, then interestingly the slack in the width of $\widehat{C}_{N,\alpha}^{\mathrm{VFCP}}$ is of the same order as the slack in the coverage of $\widehat{C}_{N,\alpha}^{\mathrm{EFCP}}$.
\section{Comparison of EFCP and VFCP}\label{sec:comparison-EFCP-VFCP}
Algorithms~\ref{alg:efficiency-first-conformal} and~\ref{alg:validity-first-conformal} both aim to obtain close-to-smallest width among the constructed prediction sets. Respectively, these methods primarily target finite sample efficiency and finite sample validity. Algorithm~\ref{alg:efficiency-first-conformal} attains required coverage (or validity) only approximately. Algorithm~\ref{alg:validity-first-conformal} attains the smallest width (or efficiency) only approximately. Both algorithms asymptotically are efficient and valid. These comparisons are summarized in Table~\ref{tab:comparison-VFCP-EFCP}. {Out of the two methods, EFCP is the more favorable choice.}
\begin{table}[!h]
\centering
\caption{Comparison of efficiency first conformal prediction (EFCP) algorithm~\ref{alg:efficiency-first-conformal} and validity first conformal prediction (VFCP) algorithm~\ref{alg:validity-first-conformal} in terms of coverage and width slack as well as the required assumptions. Coverage slack represents the finite sample coverage guarantee minus the required coverage of $1 - \alpha$ and width slack represents finite sample width guarantee minus the minimum width $\min_{k\in[K]}\mbox{Width}(\widehat{C}_k)$. Theorem~\ref{thm:coverage-guarantee-minimum-width} and Eq.~\eqref{eq:empirical-oracle-inequality-EFCP} prove the properties for EFCP. Theorems~\ref{thm:coverage} and~\ref{thm:empirical-oracle-inequality-general-VFCP} prove the properties of VFCP.}
\setlength\extrarowheight{-1pt}
\begin{tabular}{ccccl}
\hline
 & \begin{tabular}[c]{@{}c@{}}Number\\ of Splits\end{tabular} & Coverage Slack & Width Slack & Assumptions \\\hline\hline
\begin{tabular}[c]{@{}c@{}}EFCP\\ (Algorithm~\ref{alg:efficiency-first-conformal})\end{tabular} & 2 & $\displaystyle-\mathfrak{C} \left({\frac{\log(K)}{|\mathcal{I}_2|}}\right)^{1/2}$ & 0 & i.i.d. data. \\\hline
\begin{tabular}[c]{@{}c@{}}VFCP\\ (Algorithm~\ref{alg:validity-first-conformal})\end{tabular} & 3 & $0$ & $\displaystyle\mathfrak{C}L_WL_{[K]}\left(\frac{\log(K/\delta)}{\min\{|\mathcal{I}_2|,|\mathcal{I}_3|\}}\right)^{\gamma/2}$ & \begin{tabular}[c]{@{}l@{}}i.i.d. data,\\ H{\"o}lder continuity~\ref{eq:Holder-VFCP},\\ Width continuity~\ref{eq:width-continuity}.\end{tabular}\\
\hline
\end{tabular}
\label{tab:comparison-VFCP-EFCP}
\end{table}

Now comparing EFCP and VFCP in terms of the assumptions required, EFCP is the clear winner. The finite sample efficiency and validity properties of EFCP only require the observations to be independent and identically distributed, while the analogues results for VFCP require certain regularity assumptions on the nested sets used and the training methods via the assumptions~\ref{eq:width-continuity} and~\ref{eq:Holder-VFCP}. It is relatively easy to verify if assumption~\ref{eq:width-continuity} holds true or not for a given class of nested sets, but it seems difficult to verify assumption~\ref{eq:Holder-VFCP} in practice (at least for complicated non-parametric training methods $\mathcal{A}_1, \ldots, \mathcal{A}_K$). Finally, from a sample usage aspect, EFCP only requires two splits while VFCP requires three splits. It is worth pointing out that one can replace $\alpha$ in EFCP Algorithm~\ref{alg:efficiency-first-conformal} with $\alpha - (\sqrt{\log(2K)/2} + 1/3)/\sqrt{|\mathcal{I}_2|}$ to attain finite sample $1-  \alpha$ coverage validity with EFCP. For a fair comparison of EFCP and VFCP, we did not incorporate this change in Algorithm~\ref{alg:efficiency-first-conformal}. Moreover, in practice, we notice that EFCP Algorithm~\ref{alg:efficiency-first-conformal} without any modifications already attains the required coverage; see the simulations in following subsections. Finally, we mention that assumption~\ref{eq:Holder-VFCP} depends on $\alpha\in(0, 1)$ and might lead to a large H{\"o}lder constant $L_k$ when $\alpha$ converges to zero. This leads to a worse empirical oracle inequality for VFCP. Such a dependence on $\alpha$ does not occur with EFCP. An empirical comparison of EFCP and VFCP in the context of conformalized quantile regression is given in supplementary Appendix~\ref{appsec:application-conformalized-quantile-regression}.

\subsection{Empirical vs Asymptotic Oracle Inequality}\label{subsec:empirical-asymptotic-oracle}
It should be stressed here that the efficiency discussed in this and the previous sections is via an empirical oracle inequality where we compare the outputted prediction region to the prediction regions that are random either through $\mathcal{D}_1$ or $\mathcal{D}_2$ or both. They may not satisfy an asymptotic oracle inequality in that if we take $C_k$ to be the asymptotic counterpart of $\widehat{C}_k$ in Algorithms~\ref{alg:efficiency-first-conformal} and~\ref{alg:validity-first-conformal}, then the widths of $\widehat{C}_{N,\alpha}^{\mathrm{EFCP}}$ and $\widehat{C}_{N,\alpha}^{\mathrm{VFCP}}$ need not be close to the minimum width of $C_k$. Of course, if the width of $\widehat{C}_k$ is uniformly (in $k$) close to width of $C_k$, then the empirical oracle inequalities proved in~\eqref{eq:empirical-oracle-inequality-EFCP} and Theorem~\ref{thm:empirical-oracle-inequality-general-VFCP} readily imply asymptotic oracle inequalities. 

In the following sections, we will consider Algorithms~\ref{alg:efficiency-first-conformal} and~\ref{alg:validity-first-conformal} in the regression context where the training methods correspond either to all linear predictors or ridge regressions with different penalty parameters. Because of this, we get $K = \infty$ which means we select a training method from infinitely many methods. This is possible because the set of all training methods is a VC class~\citep[Section 2.6]{van1996weak}. We will consider oracle versions of prediction sets. With $\widehat{C}_k, 1\le k\le K$ representing prediction sets as in step 2 of Algorithm~\ref{alg:efficiency-first-conformal}, one can construct non-random sets $C_k, 1\le k\le K$ such that $\widehat{C}_k$ and $C_k$ are close (in terms of symmetric difference) asymptotically. We will prove that
\begin{equation}\label{eq:asymptotic-oracle}
\max \bigl\{\mbox{Width}(\widehat{C}^{\mathrm{EFCP}}_{N,\alpha}), \mbox{Width}(\widehat{C}^{\mathrm{VFCP}}_{N,\alpha}) \bigr\} ~\le~ \min_{1\le k\le K}\mbox{Width}(C_k) ~+~ R_N,
\end{equation}
where $R_N$ converges (in probability) to zero as $\min\{|\mathcal{I}_j|, 1\le j\le 3\} \to \infty$. One may call~\eqref{eq:asymptotic-oracle} an ``asymptotic'' oracle inequality as opposed to the empirical oracle inequality which has $\widehat{C}_k$, in place of $C_k$ on the right hand side. 

\section{Fixed Width Prediction Sets in Regression}\label{sec:linear-prediction-fixed-width}
In this section, we consider Algorithms~\ref{alg:efficiency-first-conformal} and~\ref{alg:validity-first-conformal} in the regression setting and study oracle inequalities when the training methods are linear, including ridge regression.
{Formally, we consider two types of nested sets $\mathcal{F}_t^{(\theta)} := \{(x, y): y\in [\theta^{\top}x - t, \theta^{\top}x + t]\}, t\ge0$, for either $\theta\in\Theta\subseteq\mathbb{R}^d$ (with $\Theta$ a fixed, i.e., non-random set) or $\theta\in\widehat{\Theta}\subseteq\mathbb{R}^d$ (with $\widehat{\Theta}$ obtained from an independent sample). 
Although linear trainers are simplistic, their analysis can serve as a way to understand general aggregation~\citep{bunea2007aggregation,lecue2016performance}. In detail, suppose we have data $(X_i, Y_i), 1\le i\le N$. Split this into two parts $\mathcal{D}_1$ and $\mathcal{D}_2$ (with index sets $\mathcal{I}_1$ and $\mathcal{I}_2$). Construct $L$ conditional mean estimators $\widehat{\mu}_1(\cdot), \ldots, \widehat{\mu}_L(\cdot)$ using $\mathcal{D}_1$. For each $X_i$ with $i\in\mathcal{I}_2$, we  can construct a new set of covariates $Z_i = (\widehat{\mu}_1(X_i), \ldots, \widehat{\mu}_L(X_i))^{\top}\in\mathbb{R}^L$ and create data $(Z_i, Y_i), i\in\mathcal{I}_2$. Hence considering linear combinations such as $\theta^{\top}Z_i$ with $\theta\in\mathbb{R}^L$ for prediction corresponds to linear aggregation of $\widehat{\mu}_1(\cdot), \ldots, \widehat{\mu}_L(\cdot)$. Because $\widehat{\mu}_{\ell}(\cdot)$ are thought of as estimators of the conditional mean, one often imposes the constraint $e_{\ell}^{\top}\theta \ge 0, 1\le \ell\le L$ and $\mathbf{1}^{\top}\theta = 1$ where $e_{\ell}, 1\le \ell\le L$ are the canonical basis vectors in $\mathbb{R}^L$ and $\mathbf{1}\in\mathbb{R}^L$ is a vector of all $1$'s. Hence, we can take $\Theta = \{\theta\in\mathbb{R}^L:\,e_{\ell}^{\top}\theta \ge 0, 1\le \ell\le L$ and $\mathbf{1}^{\top}\theta = 1\}$ in Algorithm~\ref{alg:general-conformal}. Our results in this section do not impose any restrictions on the covariate distribution and hence apply to the setting of linear aggregation.

Given the comparison of EFCP and VFCP in Section~\ref{sec:comparison-EFCP-VFCP}, in this section, we restrict our discussion to the performance of EFCP alone. Similar results for VFCP are in Appendix~\ref{appsec:fixed-width-VFCP-analysis}.}

\subsection{Best Linear Fixed Width Prediction Set}\label{sec:general-algorithm}
In this section, we will consider finding the best linear combination of features that leads to the smallest width prediction set (i.e., non-random $\Theta \subseteq \mathbb{R}^d$). 
Because the sets $\mathcal{F}_t^{(\theta)}$ are fully specified without requiring any observations, we need no splitting of the data for EFCP and one splits the data into two parts for VFCP.
It can be easily verified that for $z = (x, y)$, $t_{\theta}(z) = |y - x^{\top}\theta|$ which implies
that the conformal prediction set corresponding to $\theta$ based on one split $\mathcal{D}_1$ of the data is given by
\begin{equation}\label{eq:prediction-set-theta}
\widehat{C}_{\theta} ~:=~ \{z = (x, y)\in\mathbb{R}^p\times\mathbb{R}:\, |y - \theta^{\top}x| \le T_{\alpha,\theta}\},
\end{equation}
where $T_{\alpha,\theta}$ is the $\lceil(1 - \alpha)(1 + |\mathcal{I}_1|)\rceil$-th largest value of $|Y_i - X_i^{\top}\theta|, i\in\mathcal{I}_1$. The $x$-cross-section width of $\widehat{C}_{\theta}$ is given by $T_{\alpha,\theta}$ and we choose the coefficient vector $\widehat{\theta}$ that minimizes $T_{\alpha,\theta}$ over all $\theta\in\Theta$, for a bounded set $\Theta\subseteq\mathbb{R}^d$. The restriction of minimization to a bounded set $\Theta$ is for technical reasons. If the joint distribution of $(X_i, Y_i)$ is very ``well-behaved'', then $\Theta$ can be taken to be $\mathbb{R}^d$.
The complete algorithm is presented in Algorithm~\ref{alg:general-conformal}. It includes both EFCP and VFCP versions.

\begin{algorithm}[htbp]\label{alg:general-conformal}
\SetAlgoLined
\KwIn{Data $(X_i,Y_i),i\in[N]$, coverage probability $1 - \alpha \in (0,1)$, and a bounded set $\Theta$.}
\KwResult{Optimal fixed width prediction set based on linear function of $X_{N+1}$.}
  (Splitting) Randomly split $\{1, \dots, N\}$ into two subsets $\mathcal{I}_1,\mathcal{I}_2$. Set $\mathcal{D}_1 = \{(X_i, Y_i):i\in\mathcal{I}_1\}$ and $\mathcal{D}_2 = \{(X_i, Y_i):i\in\mathcal{I}_2\}.$ For EFCP, $\mathcal{I}_2 = \emptyset$ and for VFCP, $\mathcal{I}_2$ is non-empty.
  
  (Initial Prediction)\label{me:general-prediction} Define
   $T_{\alpha,\theta}$ as $\lceil(1+|\mathcal{I}_1|)(1-\alpha)\rceil$-th largest element of $|Y_i -\theta^\top X_i|, i \in \mathcal{I}_1.$
  
  (Selection) \label{me:general-selection} Select the vector  
  $$\widehat{\theta} ~:=~ \argmin_{\theta\in\Theta} T_{\alpha,\theta} ~=~ \argmin_{\theta\in\Theta}\mbox{Width}([\theta^{\top}X - T_{\alpha,\theta}, \theta^{\top}X + T_{\alpha,\theta}]).$$
  
   (Final Prediction for VFCP)\label{me:general-final-prediction} Find
   $T_{\alpha,\widehat{\theta}}^*$, $\lceil(1+|\mathcal{I}_2|)(1-\alpha)\rceil$-th largest element of $|Y_i -X_i^{\top}\widehat\theta|, i \in \mathcal{I}_2.$
  
  \Return $\widehat{C}_{\alpha}^{\texttt{EF-lin}} :=\{(x, y):\,|y-\widehat\theta^\top x| \le T_{\alpha,\widehat{\theta}}\}$ and $\widehat{C}_{\alpha}^{\texttt{VF-lin}} := \{(x, y):\, |y-\widehat{\theta}^{\top}x| \le T_{\alpha,\widehat{\theta}}^{*}\}$.

  \caption{Best Linear Fixed Width Prediction}
\end{algorithm}

It readily follows from Algorithm~\ref{alg:general-conformal} (also see~\eqref{eq:empirical-oracle-inequality-EFCP}) that
\begin{equation}\label{eq:empirical-general-conformal-EFCP}
\mbox{Width}(\widehat{C}_{\alpha}^{\texttt{EF-lin}}) ~\le~ \min_{\theta\in\Theta}\,\mbox{Width}(\widehat{C}_{\theta}),
\end{equation}
which is the empirical oracle inequality for $\widehat{C}_{\alpha}^{\texttt{EF-lin}}.$
The validity guarantee for $\widehat{C}_{\alpha}^{\texttt{VF-lin}}$ readily follows from Theorem~\ref{thm:coverage}. Recall, once again, that the probability here is with respect to $(X_{N+1}, Y_{N+1})$ and the full data $\mathcal{D}_1\cup\mathcal{D}_2$. As mentioned in the discussion of Theorems~\ref{thm:coverage-guarantee-minimum-width} and~\ref{thm:empirical-oracle-inequality-general-VFCP}, validity of $\widehat{C}_{\alpha}^{\texttt{EF-lin}}$ and efficiency of $\widehat{C}_{\alpha}^{\texttt{VF-lin}}$ follow from an analogue of Massart's inequality. With Massart's inequality, we got $\log(K)$ dependence, but the fact that linear spaces are VC allows us to use empirical process techniques to prove validity and efficiency for $\widehat{C}_{\alpha}^{\texttt{EF-lin}}$ and $\widehat{C}_{\alpha}^{\texttt{VF-lin}}$, respectively.
The validity guarantee for $\widehat{C}_{\alpha}^{\texttt{EF-lin}}$ does not require any more assumptions than just independent and identically distributed observations. The efficiency gurantee of $\widehat{C}_{\alpha}^{\texttt{VF-lin}}$ requires the analogue of assumption~\ref{eq:Holder-VFCP}. Note that in the current setting $\mbox{Width}(\mathcal{F}_t^{(\theta)}) = 2t$ which is a Lipschitz function in $t$ and hence satisfies~\ref{eq:width-continuity} with $L_W = 2$. {See Appendix~\ref{appsec:fixed-width-VFCP-analysis} for details on the efficiency guarantee of $\widehat{C}_{\alpha}^{\texttt{VF-lin}}$.}

{Let us first consider the validity guarantee of $\widehat{C}_{\alpha}^{\texttt{EF-lin}}$. We define some notation that will be used in this section.
For every $\theta\in\mathbb{R}^d$ and $t\ge0$, define
\begin{equation}\label{F}
 F_{\theta}(t):= \mathbb{P} \bigl( |Y- X^\top \theta | \leq t \bigr).
\end{equation}
If $\theta$ is random, then $F_\theta (t)$ is a random variable, i.e., the probability in the definition of $F_{\theta}(t)$ is not taken with respect to $\theta$, but just with respect to $(X, Y)$.
\begin{prop}\label{prop:validity-EFCP-fixed-width}
For any $\mathcal{I}\subseteq[N] = \{1, 2, \ldots, N\}$, set
\begin{equation}\label{eq:general-distribution-function-assumptions}
R (\mathcal{I}) ~:=~ \sup_{t\ge 0}\sup_{\theta \in \mathbb{R}^d}\, \biggl|\frac{1}{ |\mathcal{I}| } \sum_{i \in \mathcal{I}} \mathbbm{1}{\{ |Y_i - \theta^\top X_i | \leq t \} } -F_{\theta}(t) \biggr|.
\end{equation}
Then $\mathbb{P}\left((X_{N+1}, Y_{N+1})\in\widehat{C}_{\alpha}^{\texttt{EF-lin}}\big|\mathcal{D}_1\right) \ge 1 - \alpha - R([N]).$ Moreover, there exist universal constants $\mathfrak{C}_1, \mathfrak{C}_2 > 0$ such that for independent and identically distributed random vectors $(X_i, Y_i), 1\le i\le N$,
\begin{equation}\label{eq:VC-class-bound}
\mathbb{P}\left(R([N]) \le \mathfrak{C}_1\sqrt{\frac{d + \log(1/\delta)}{N}}\right) \ge 1 - \delta\quad\mbox{and}\quad \mathbb{E}[R([N])] \le \mathfrak{C}_1\sqrt{\frac{d}{N}}.
\end{equation}
\end{prop}
See Appendix~\ref{appsec:proof-of-prop-validity-EFCP-fixed-width} for a proof.
Combining~\eqref{eq:empirical-general-conformal-EFCP} and Proposition~\ref{prop:validity-EFCP-fixed-width}, we obtain empirical oracle inequality and validity guarantee for EFCP. To prove an asymptotic oracle inequality for EFCP, we need an assumption similar to~\ref{eq:Holder-VFCP}.}
Define the quantile function $F_\theta^{-1}:[0,1] \rightarrow \mathbb{R}$ as
$F_\theta^{-1}(p):=\inf \{t \in \mathbb{R}: F_\theta(t) \geq p\}.$
With this notation, the set
\begin{equation}\label{eq:def-quantile-theta}
C_{\alpha,\theta}^{\texttt{orc-lin}} := \{(x, y):\,y\in[\theta^{\top}x - Q_{\alpha,\theta}, \theta^{\top}x + Q_{\alpha,\theta}]\},\quad\mbox{with }Q_{\alpha,\theta} ~:=~ F_{\theta}^{-1}(1 - \alpha),
\end{equation}
can be thought of as an asymptotic oracle of the conformal prediction set $\widehat{C}_{\theta}$ defined in~\eqref{eq:prediction-set-theta}. 
Therefore, an asymptotic oracle inequality for EFCP sets in Algorithm~\ref{alg:general-conformal} would be
\begin{equation}\label{eq:asymptotic-oracle-general-conformal}
    \mbox{Width}(\widehat{C}_{\alpha}^{\texttt{EF-lin}}) ~=~ 2 T_{\alpha,\widehat{\theta}} ~\le~ 2 \min_{\theta\in\Theta} Q_{\alpha,\theta} + R_N,
\end{equation}
for some rate $R_N$. 
Now, consider the following analog of assumption~\ref{eq:Holder-VFCP}. 
{Let $\theta^*\in\Theta$ be any vector such that $Q_{\alpha,\theta^*} \le \min_{\theta\in\Theta} Q_{\alpha,\theta} + 1/N.$} 
\begin{enumerate}[label=\bf(A0)]
\item \label{assump:general-holder-EFCP} There exists $r^*, \gamma \in (0, 1]$, such that $F_{\theta^*}^{-1}(\cdot)$ is $\gamma$-H{\"o}lder continuous on $[1-\alpha-r^*,1-\alpha+r^*]$ with  H\"older continuity constant $L_{\theta^*}$, i.e., for all $\theta\in\Theta$, and all $q_1, q_2\in [1-\alpha-r^*,1-\alpha+r^*]$,
$
| F_{\theta^*}^{-1}(q_1) - F_{\theta^*}^{-1}(q_2) | \le L_{\theta^*}|q_1 - q_2|^{\gamma}.
$
\end{enumerate}
Note that Assumption~\ref{assump:general-holder-EFCP} depends on $\alpha\in(0, 1)$. If the density of $Y - X^{\top}\theta^*$ is bounded away from zero on $[F_\theta^{-1}\bigl( 1-\alpha - r^* \bigr),F_\theta^{-1}\bigl( 1-\alpha + r^* \bigr) ]$, then~\ref{assump:general-holder-EFCP} will hold true with $\gamma = 1$; see Proposition~\ref{prop:simple-sufficient-quantile-holder}. If $(X, Y)$ has a joint normal distribution, then Assumption~\ref{assump:general-holder-EFCP} holds true with $\gamma = 1$ and a constant $L_{\theta^*}$ satisfying $L_{\theta} \le C(1 + \|\theta\|_2)$ for some constant $C$ (Proposition \ref{prop:joint-normal}). {A simple sufficient condition for~\ref{assump:general-holder-EFCP} when $X$ is compactly supported is that the conditional distribution of $Y$ given $X$ is bounded away from zero on bounded sets; see Proposition~\ref{prop:simple-sufficient-quantile-holder-conditional}.
Under~\ref{assump:general-holder-EFCP}, we have the following asymptotic oracle inequality for EFCP.
\begin{thm}\label{thm:general-oracle-final}
Fix any $\delta\in(0, 1)$. If Assumption~\ref{assump:general-holder-EFCP} holds true with $r^* \ge \sqrt{\log(2/\delta)/(2N)}$, then with probability $1 - \delta$,
\begin{equation}\label{eq:general-conformal-oracle-EFCP}
\mathrm{Width}(\widehat{C}_{\alpha}^{\texttt{EF-lin}})
~\leq~ \min_{\theta \in \Theta }\,\mathrm{Width}(C_{\alpha,\theta}^{\texttt{orc-lin}}) ~+~  \frac{2}{N} +  2L_{\theta^*}\left(\frac{\log(2/\delta)}{2N}\right)^{\gamma/2}.
\end{equation}
\end{thm}
}
We defer the proof of Theorem~\ref{thm:general-oracle-final} to Appendix \ref{sec:Proof of Theorem general-oracle-final} of the supplementary. {It is interesting to note that the slack of the asymptotic oracle inequality does not depend on the ``size'' of $\Theta$.}

Algorithm~\ref{alg:general-conformal} does not require any estimation and only uses two splits of the data (for VFCP). These advantages are countered with the disadvantages of choosing $\Theta$ and minimization over a multidimensional space $\Theta\subseteq\mathbb{R}^d$. For this reason, in practice, one may consider the version of Algorithm~\ref{alg:validity-first-conformal} when the training methods correspond to ridge regressions with different penalty parameters. This reduces $\Theta\subseteq\mathbb{R}^d$ to $\{\widehat{\beta}_{\lambda}:\,\lambda \ge C\}$ for some constant $C\in\mathbb{R}$, where $\widehat{\beta}_{\lambda}$ is the ridge regression estimator with penalty parameter $\lambda$ computed on an independent dataset. {One can prove asymptotic oracle inequalities for this setting as well, but these results require strong assumptions such as bounded moments and non-singularity of the covariance matrix of covariates. Although these assumptions are much weaker in the linear model theory, they are nevertheless stronger in the application to aggregation. Hence, we present the ridge regression results in Appendix~\ref{appsec:Best-ridge-fixed-width} and consider the application to aggregation here.}
\subsection{Aggregation and Conditional Coverage}\label{subsec:aggregation-conditional-coverage}
{
In this section, we consider the application of results in Section~\ref{sec:general-algorithm} to aggregation and in this context, we also show (under some regularity conditions) that $\widehat{C}_{\alpha}^{\texttt{EF-lin}}$ is asymptotically conditionally valid. As described in the beginning of Section~\ref{sec:linear-prediction-fixed-width}, let $\mathcal{D}_1, \mathcal{D}_2$ be two random parts of the data $\mathcal{D} = \{(X_i, Y_i), 1\le i\le N\}$; let the corresponding indices to $\mathcal{I}_1, \mathcal{I}_2$. Let $\widehat{\mu}_1(\cdot), \ldots, \widehat{\mu}_L(\cdot)$ be estimators of the conditional mean $\mu^*(x) = \mathbb{E}[Y|X = x]$ based on $\mathcal{D}_1$. With $e_{\ell}, 1\le \ell \le L$ representing the canonical basis of $\mathbb{R}^L$, take
\begin{equation}\label{eq:simplex-definition}
\Theta = \{\theta\in\mathbb{R}^L:\,e_{\ell}^{\top}\theta \ge 0, 1\le \ell \le L, \mathbf{1}^{\top}\theta = 1\},
\end{equation}
and set $\widehat{C}_{\alpha}^{\texttt{agg}}$ to be EFCP prediction set returned by Algorithm~\ref{alg:general-conformal} with $\{(Z_i, Y_i), i\in\mathcal{I}_2\}$:
\[
\widehat{C}_{\alpha}^{\texttt{agg}} := \left\{(x, y):\, \left|y - \widehat{\mu}_{\widehat{\theta}}(x)\right| \le T_{\alpha,\widehat{\theta}}\right\},\quad\mbox{with}\quad \widehat{\mu}_{\widehat{\theta}}(x) := \sum_{\ell = 1}^L \widehat{\theta}_{\ell}\widehat{\mu}_{\ell}(x),
\]
where $\widehat{\theta} = (\widehat{\theta}_1, \ldots, \widehat{\theta}_L)$ is the width minimizing vector in $\Theta$.
These types of fixed width prediction sets are optimal, if the true data-generating process for $(X, Y)$ is $Y = \mu_0(X) + \xi$ where $\xi$ is a mean zero symmetric random variable that is independent of $X$~\citep{lei2018distribution}. Proposition~\ref{prop:validity-EFCP-fixed-width} already implies asymptotic validity of $\widehat{C}_{\alpha}^{\texttt{agg}}$ if $L = o(N)$; this requirement could be improved to $\log(L) = o(n)$ because $\Theta$ is a simplex and much smaller than $\mathbb{R}^{L}$. We do not pursue such a generalization here.

For any function $g:\mathcal{X}\to\mathbb{R}$, define $\|g\|_{\infty} := \sup_{x\in\mathcal{X}}|g(x)|$ and $\|g\|_2^2 := \int g^2(x)dP_X(x)$. (Here $P_X$ is the probability measure of covariates on $\mathcal{X}$). To study the conditional coverage and efficiency properties, consider the following assumptions:
\begin{enumerate}[label=\bf (A\arabic*)]
    \item $(X_i, Y_i), 1\le i\le N$ are independent and identically distributed random vectors from the model $Y = \mu_0(X) + \xi,$ where $\xi$ is a symmetric unimodal distribution with mean zero.\label{assump:symmetric-error-dist}
    \item There exists $r, \kappa > 0$ such that the Lebesgue density $f$ of $\xi$ is lower bounded by $r > 0$ on $[q_{\alpha} - \kappa, q_{\alpha} + \kappa]$, where $q_{\alpha}$ is the $(1-\alpha)$-th quantile of $|\xi|$. Moreover, $f$ is uniformly bounded by $M \in (0, \infty)$ and is continuously differentiable with the derivative bounded by $M.$\label{assump:lower-bound-differentiable}
    \item There exists $r_{n,2}, r_{n,\infty}, \eta_{n,2}, \eta_{n,\infty} \ge 0$ such that\label{assump:rates-of-convergence}
    \begin{equation}\label{eq:assump-conditional-coverage}
    \mathbb{P}\left(\max_{1\le \ell \le L}\|\widehat{\mu}_{\ell} - \mu_0\|_{\infty} \ge r_{n,\infty}\right) \le \eta_{n,\infty}\quad\mbox{and}\quad \mathbb{P}\left(\inf_{\theta\in\Theta}\|\widehat{\mu}_{\theta} - \mu_0\|_2^2 \ge r_{n,2}^2\right) \le \eta_{n,2},
    \end{equation}
\end{enumerate}
Assumptions~\ref{assump:symmetric-error-dist} and~\ref{assump:lower-bound-differentiable} are taken from~\cite{lei2018distribution}. They are restrictive but serve as suitable assumptions to study the fixed width prediction sets based on non-parametric regression estimators. Under these assumptions,~\cite{lei2018distribution} shows that $C_{\alpha}^{\texttt{orac}} = \{(x, y):\,|y - \mu_0(x)| \le q_{\alpha}\}$ is the optimal (i.e., the smallest width) prediction set with conditional coverage. Assumption~\ref{assump:rates-of-convergence} is an extension of the assumptions A2 and A4 of~\cite{lei2018distribution}. There are numerous estimators that satisfy these conditions with a minimax rate of convergence. See, for example,~\cite{gyorfi2002distribution,brown2008robust}. Note that under~\ref{assump:symmetric-error-dist}, $\mu_0(\cdot)$ can be regarded as a conditional location measure such as conditional median, conditional $M$-estimator which allows for the construction of estimators with sub-Gaussian tail behavior. Hence, the requirement of uniform convergence of $1\le \ell \le L$ in~\ref{assump:rates-of-convergence} is not a strong requirement. 
\begin{thm}\label{thm:width-efficiency-conditional-coverage}
    Fix any $\delta\in(0, 1)$. Under assumptions~\ref{assump:symmetric-error-dist} and~\ref{assump:lower-bound-differentiable}, with probability at least $1 - \delta$, 
    \begin{equation}\label{eq:oracle-width-efficiency-aggregation}
    \mathrm{Width}(\widehat{C}_{\alpha}^{\texttt{agg}}) \le \mathrm{Width}(C_{\alpha}^{\texttt{orac}}) + \frac{M}{4r}\inf_{\theta\in\Theta}\|\widehat{\mu}_{\theta} - \mu_0\|_2^2 + \sqrt{\frac{\log(2/\delta)}{2|\mathcal{I}_2|}} + \frac{2}{|\mathcal{I}_2|},
    \end{equation}
    whenever $\kappa > (M/4r)\inf_{\theta\in\Theta}\|\widehat{\mu}_{\theta} - \mu_0\|_2^2 + \sqrt{\log(2/\delta)/(2|\mathcal{I}_2|)} + 2/|\mathcal{I}_2|$.

    Furthermore,
    conditional on $\mathcal{D}_1$, with probability at least $1 - \delta$, for all $x\in\mathcal{X}$,
    \begin{align*}
    \mathbb{P}\left((X_{N+1}, Y_{N+1})\in\widehat{C}_{\alpha}^{\texttt{agg}}\bigg|X_{N+1} = x, \mathcal{D}_1\right) &\ge 1 - \alpha- \sqrt{\frac{\log(2/\delta)}{2|\mathcal{I}_1|}} - \frac{(1 + M/2)}{|\mathcal{I}_2|}\\ 
    &\quad- 2M\max_{1\le \ell \le L}\|\widehat{\mu}_{\ell} - \mu_0\|_{\infty} - (M/2)\inf_{\theta\in\Theta}\|\widehat{\mu}_{\theta} - \theta^*\|_2^2 .
    \end{align*}
    Under assumption~\ref{assump:rates-of-convergence}, there exists a universal constant $\mathfrak{C}$,
    \[
    \mathbb{P}\left((X_{N+1}, Y_{N+1})\in\widehat{C}_{\alpha}^{\texttt{agg}}\bigg|X_{N+1} = x\right) \ge 1 - \alpha - (2Mr_{n,\infty} + Mr_{n,2}^2/2) - \frac{\mathfrak{C} + M/2}{|\mathcal{I}_2|^{1/2}} - \eta_{n,2} - \eta_{2,\infty}.
    \]
\end{thm}
Inequality~\eqref{eq:oracle-width-efficiency-aggregation} proves that the width of the aggregation prediction set asymptotically matches that of the optimal prediction set. Note that, for this result, we only need the existence of a linear combination estimator that is consistent and allows for some inconsistent estimators in the collection $\{\widehat{\mu}_{\ell}:1\le \ell \le L\}$. The last inequality of Theorem~\ref{thm:width-efficiency-conditional-coverage} proves that consistency of all non-parametric estimators implies asymptotic conditional coverage validity of the aggregation prediction set.   
}
\subsection{Simulations: EFCP and VFCP with Ridge Regression}\label{sec:simulations}
We now present some simulations to compare the performance of EFCP and VFCP when the training methods corresponds to ridge regression with different penalty parameters. Because ridge regression is a linear prediction methodology, we also consider comparison with two other linear predictors based prediction regions.
Define $X=(X_1^\top, \dots, X_N^\top)^\top$ and $Y=(Y_1, \dots, Y_N)^\top$.
\begin{enumerate}
    \item \textbf{Linear}: Under the classical linear model assumptions, wherein the response vector $Y$ conditional on the covariate matrix $X$ is generated from a multivariate Gaussian with mean zero and a scaled identity covariance $\sigma^2I_n$, one can construct a finite sample valid prediction region using normality:
    $
    \widehat{C}_{ \alpha}^{\text {linear }}:= \{(x, y):\,|y - x^\top (X^\top X)^{+} X^\top Y| \le \widehat{\sigma}z_{\alpha/2}\sqrt{1+g_N(x)} \},
    $
    where $g_N(x)=x (X^\top X)^{+} x$, $z_{\alpha/2}$ is the $(1-\alpha/2)$-quantile of the standard normal distribution $\Normal(0,1)$ and $\widehat{\sigma}^2=\| Y-X^\top (X^\top X)^{+} X^\top Y \|^2_2/N$. 
    See~\cite{vovk2014efficiency} for a discussion and comparison with the full conformal prediction interval.
    \item \textbf{Naive}: For any point prediction method, one can also consider a naive prediction interval using the residuals on the full data.
    Let $T_\alpha$ denote the $\lceil(1-\alpha)(N+1)\rceil$-th largest value of $\left|Y_{i}-X_i^\top (X^\top X)^{+} X^\top Y \right|, i = 1,\ldots, N$.
    With this notation, the ``naive" conformal prediction interval as defined in~\cite{barber2019predictive} is
    $
    \widehat{C}_{ \alpha}^{\text {naive }}:=\{(x, y):\, |y - x^\top (X^\top X)^{+} X^\top Y| \le T_\alpha\}.
    $
    If the point prediction algorithm is consistent to the true conditional expectation and the true distribution of the errors is homoscedastic as well as symmetric around zero, then the naive prediction interval has an \emph{asymptotic} valid coverage guarantee.
\end{enumerate}

In our simulations below, the target coverage level is $1-\alpha=0.9.$ To demonstrate the validity and efficiency, we consider both linear and nonlinear models and repeat the experiment at each dimension $d=10, \ldots, 300$ with i.i.d. data points. The result is averaged over 100 trials with each trial having an independent draw of a sample of size $N=200$ and the test sample of size 100. 
In the following, we will 
show the performance in two models: one linear and the other non-linear. In both these models, we consider heavy-tailed data.
In the implementation of EFCP and VFCP with ridge regression, we minimized over a grid of 100 $\lambda$ values from $0$ to $200$ and did not attempt to obtain the global minimum as in Algorithm~\ref{alg:conformal-ridge}.

\paragraph{Linear Model:} Following assumption \ref{assump:finite-X}, each row of  $X$ is generated independently following a multivariate student distribution $t_d(\nu, \Sigma)$ with $d \times d$ covariance matrix $\Sigma$ satisfying $\Sigma_{ij}=\rho$ if $i \neq j$ (equi correlation) and $\Sigma_{jj}=1$.
$Y$ is generated from 
\begin{equation}\label{eq:linear_fm}
 Y=X \beta + \xi, \text{ where }\beta_j=1+(j-1)\mbox{ mod }5, 1 \leq j \leq d,
\end{equation}
and the coordinates of $\xi$ are independently generated from $t(\nu) \times (1+ \sqrt{ X_1^2+X_2^2})$, conditional on $X$. Note that irrespective of the dimension, the conditional variance of $\xi$ depends only on the first two covariates. The degrees of freedom is $\nu=3$ in the first experiment and $\nu=5$ for the second one. Note that~\eqref{eq:linear_fm} is a well-specified model but with heteroscedastic noise. Because of heteroscedasticity, $\widehat{C}_{\alpha}^{\mathrm{linear}}$ does not have a finite sample valid coverage guarantee.
 

\paragraph{Nonlinear Model:} Following assumption \ref{assump:finite-X}, each row of  $X$ is generated independently following a multivariate student distribution $t_d(\nu, \Sigma)$ with $d \times d$ covariance matrix $\Sigma$ satisfying $\Sigma_{ij}=0.5$ if $i \neq j$ (equi correlation) and $\Sigma_{jj}=1$.
$Y$ is generated from 
\begin{equation}\label{eq:nonlinear_pois_fm2}
Y \sim \mathrm{Pois}\left(\sin ^{2}(X_1)+\cos^4(X_2)+0.01\right)+0.03 X_1 \epsilon_{1}+25 \mathbbm{1}_{\{u<0.01\}} \epsilon_{2},
\end{equation}
and each noise component of $\epsilon_k, k=1,2$, follows $t(\nu) \times (1+ \sqrt{ X_1^{2k}+X_2^{2k}})$ independently. Note that irrespective of the dimension $d$, the conditional distribution of the response only depends on the first two covariates. Similar to the linear case, we consider the degrees of freedom of $\nu=3$ in the first experiment and $\nu=5$ in the second one. 

The results for both linear and non-linear models with $\nu = 3, 5$ are presented in Figure~\ref{fig:combined_fm_rep_10_ggplot}. The columns labelled ``3rd moment'' represent the cases with $\nu = 3$ and the columns labelled ``5th moment'' represent the cases with $\nu = 5$. ``Linear'' and ``Non Linear'' represent respectively the linear~\eqref{eq:linear_fm} and non-linear~\eqref{eq:nonlinear_pois_fm2} models stated above.

\begin{figure}[htbp]
\centering
\includegraphics[width=1\linewidth,height=4in]{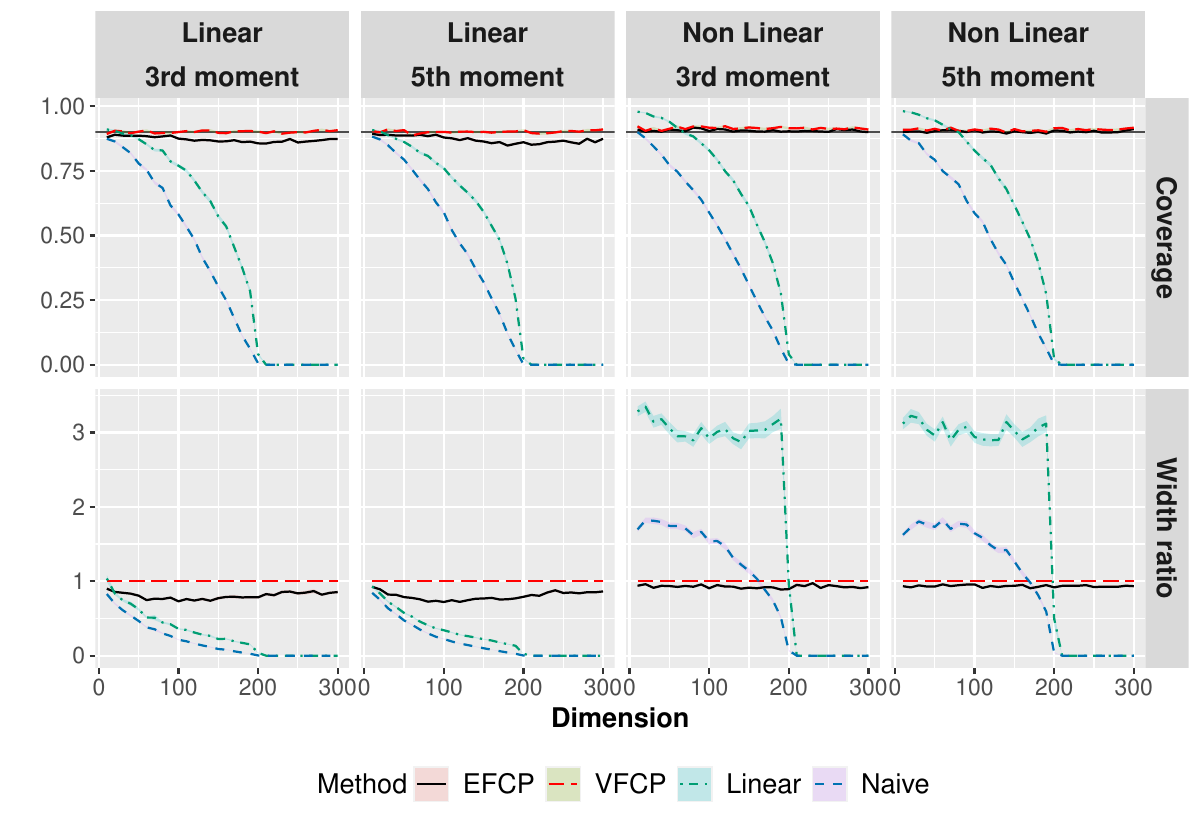}
\caption{Simulation results, showing the mean coverage and mean width ratio of EFCP, VFCP, Naive, and Linear over 100 independent trials. The $x$-axis shows the dimension of the model changing from $10$ to $300$ while the total sample size is fixed at 200. The second row of the plot ``Width ratio'' shows the ratio of the width of each method prediction interval to that of the VFCP's prediction interval. The standard error of coverage and width ratio is shown as a colored band around the mean coverage and mean width ratio. Given the large number of replications, the standard error is too small for most methods except for Linear. In all the models, the covariates and errors are heavy-tailed. EFCP and VFCP methods attain validity while Naive and Linear undercover the true response as expected.}
\label{fig:combined_fm_rep_10_ggplot}
\end{figure}
Below, we note some observations from our simulations:
\begin{itemize}
    \item VFCP has valid coverage (near perfect at $0.9$), while EFCP slightly undercovers attaining at worst coverage of $0.88$ across all dimensions. For smaller dimensions $(\le 100)$, the coverage of EFCP is also near perfect. Neither the Linear nor the Naive methods give valid coverage, and this is expected because the two settings are both heteroskedastic. Further, the coverage for these methods drop significantly as dimension increases.  
    \item When $d \geq N$, we note that the Linear method and the Naive method prediction will always have zero length intervals since all $N$ residuals will be equal to zero. Consequently, in such a situation, these two methods will have coverage equal to zero.
    \item Compared with VFCP, EFCP has an improved efficiency resulting in 10--30\% smaller widths across the dimensions. On average, EFCP improves the width of VFCP by 20\%. 
    \item Before computing the ratio of the widths, we find that the standard error of the widths of EFCP is, on average, 30\% smaller than that of VFCP over 100 repetitions.
\end{itemize}

\section{Summary and Future Directions}\label{sec:conclusions}
In this paper, we have proposed two selection algorithms for conformal prediction regions to obtain the smallest prediction set in practice; these are called ``efficiency first'' and ``validity first'' conformal prediction algorithms. We have studied the width and coverage properties of both these algorithms. The efficiency first method attains finite sample efficiency but only has an approximate coverage validity guarantee. The validity first method attains finite sample coverage validity but only attains approximate efficiency. Both methods are asymptotically efficient and asymptotically valid. In both these methods, we have made use of the split conformal method for the construction of prediction regions. The efficiency, as well as approximate validity of the efficiency first conformal method, applies to any machine learning technique; it is as general as the split conformal method itself. {Comparison of both methods shows that the ``efficiency first'' method is the preferable one in practice. We also considered aggregation of non-parametric regression methods for the smallest width prediction set using absolute residual as the conformal score.} 

Throughout the paper, we have focused on minimizing the width of the prediction sets. The quantity $\mbox{Width}(\cdot)$ in our algorithms can be replaced by \emph{any} other property of the prediction sets. This could be useful when the domain $\mathcal{Z}$ of the observations is more complicated, e.g., the space of images or the space of functions~\citep{lei2015conformal,bates2021distribution}. 


Theorem~\ref{thm:multiple-splits-conformal}
implies that one can repeat the split conformal method several times on the data and choose the smallest prediction set to obtain an approximately valid prediction region. It would be interesting to compare the width of such a prediction region with other conformal methods such as jackknife+ or cross-conformal~\citep{barber2019predictive}. Note that by repeating the split conformal procedure multiple times, one uses the full data for training as well as prediction. 

\textit{Conflict of interest:} The authors report there are no competing interests to declare.

\bibliographystyle{plainnat}
\bibliography{ref}

\newpage
\setcounter{section}{0}
\setcounter{equation}{0}
\setcounter{figure}{0}
\renewcommand{\thesection}{A.\arabic{section}}
\renewcommand{\theequation}{E.\arabic{equation}}
\renewcommand{\thefigure}{A.\arabic{figure}}
\setcounter{page}{1}
  \begin{center}
  \Large {\bf Supplement to ``Selection and Aggregation of Conformal Prediction Sets''}
  \end{center}
       
\begin{abstract}
This supplement contains the proofs to all the main results in the paper and some supporting lemmas. 
\end{abstract}

\section{Analysis of VFCP for Fixed Width Prediction Sets}\label{appsec:fixed-width-VFCP-analysis}
In this section, we consider oracle inequalities for the width of VFCP sets for the best linear fixed width nested sequence. 
Recall the asymptotic oracle prediction sets $C_{\alpha, \theta}^{\texttt{orc-lin}}$ from~\eqref{eq:def-quantile-theta} and that an asymptotic oracle inequality for VFCP sets in Algorithm~\ref{alg:general-conformal} would be
\begin{equation}\label{eq:asymptotic-oracle-general-conformal-VFCP}
   \mbox{Width}(\widehat{C}_{\alpha}^{\texttt{VF-lin}}) ~=~ 2T_{\alpha,\widehat{\theta}}^* ~\le~ 2 \min_{\theta\in\Theta} Q_{\alpha,\theta} + R_N,
\end{equation}
and the empirical oracle inequality for VFCP sets in Algorithm~\ref{alg:general-conformal} would be
\begin{equation}\label{eq:empirical-oracle-general-conformal-VFCP}
\mbox{Width}(\widehat{C}_{\alpha}^{\texttt{VF-lin}}) ~=~ 2T_{\alpha,\widehat{\theta}}^* ~\le~ 2 \min_{\theta\in\Theta}T_{\alpha,\theta} + R_N',
\end{equation}
for some rates $R_N, R_N'$. 
Recall $Q_{\alpha,\theta}$ defined in~\eqref{eq:def-quantile-theta}.
Consider the analog of assumption~\ref{eq:Holder-VFCP}.
\begin{enumerate}[label=\bf(A0')]
\item \label{assump:general-holder-VFCP} For all $ \theta \in \Theta$, there exists $r^*, \gamma \in (0, 1]$, such that $F_{\theta}^{-1}(\cdot)$ is $\gamma$-H{\"o}lder continuous on $[1-\alpha-r^*,1-\alpha+r^*]$ with  H\"older continuity constant $L_{\theta}$, i.e., for all $\theta\in\Theta$, and all $q_1, q_2\in [1-\alpha-r^*,1-\alpha+r^*]$,
$
| F_{\theta}^{-1}(q_1) - F_{\theta}^{-1}(q_2) | \le L_{\theta}|q_1 - q_2|^{\gamma}.
$
\end{enumerate}
Assumption~\ref{assump:general-holder-VFCP} is an extension of~\ref{assump:general-holder-EFCP} and is clearly more restrictive. Note that the comments on the verification of~\ref{assump:general-holder-EFCP} also apply to verifying~\ref{assump:general-holder-VFCP}.
\begin{thm}\label{thm:general-oracle-VFCP}
Recall $R(\mathcal{I})$ from~\eqref{eq:general-distribution-function-assumptions}.
Suppose Algorithm~\ref{alg:general-conformal} is used with a bounded set $\Theta\subseteq\mathbb{R}^d$ and suppose assumption~\ref{assump:general-holder-VFCP} holds true with $r^* \geq \max \{ R (\mathcal{I}_1 ), R (\mathcal{I}_2 ), 2|\mathcal{I}_1|^{-1}, 2|\mathcal{I}_2|^{-1} \} $. Then setting $L_{\Theta} := \sup_{\theta\in\Theta}L_{\theta}$, we obtain
\begin{equation}\label{eq:general-final_quantile_strong}
\mathrm{Width}(\widehat{C}_{\alpha}^{\texttt{VF-lin}})
~\leq~ \min_{\theta \in \Theta }\,\mathrm{Width}(C_{\alpha,\theta}^{\texttt{orc-lin}}) ~+~  4 L_{\Theta}   \left( R (\mathcal{I}_1 )^\gamma+R (\mathcal{I}_2 )^\gamma + \frac{2^{\gamma}}{|\mathcal{I}_1|^{\gamma}} + \frac{2^{\gamma}}{|\mathcal{I}_2|^{\gamma}}\right).
\end{equation}
The conclusion continues to hold true if $C_{\alpha,\theta}^{\texttt{orc-lin}}$ is replaced with $\widehat{C}_{\theta}$ on the right hand side. 
\end{thm}
We defer the proof of Theorem~\ref{thm:general-oracle-VFCP} to Appendix \ref{sec:Proof of Theorem general-oracle-final} of the supplementary. Inequality~\eqref{eq:general-final_quantile_strong} is an asymptotic oracle inequality. 
Further, because the conclusion of Theorem~\ref{thm:general-oracle-VFCP} holds true when $C_{\alpha,\theta}^{\texttt{orc-lin}}$ is replaced by $\widehat{C}_{\theta}$, we also obtain an empirical oracle inequality for $\widehat{C}_{\alpha}^{\texttt{VF-lin}}$. However, the rate of convergence in the oracle inequalities is not explicit. They depend on $R(\mathcal{I}_1)$ and $R(\mathcal{I}_2)$. Combined with Proposition~\ref{prop:R1-bound}, Theorem~\ref{thm:general-oracle-VFCP} does imply an explicit rate for the slack in the oracle inequalities and the rate depends on the ``size'' of $\Theta$. This is in contrast to how the slack behaves for EFCP.
\section{Best Ridge Regression Fixed Width Prediction Set}\label{appsec:Best-ridge-fixed-width}
In this section, we consider EFCP and VFCP with ridge regression training methods and prove asymptotic oracle inequalities. Although we reduce the set $\Theta$ of linear combinations, we will require more assumptions than in Section~\ref{sec:general-algorithm} to prove an asymptotic oracle inequality. This is mostly because we need to show the ridge regression estimators converge to ``population'' ridge regression functionals.

For any $\lambda\in\mathbb{R}$, define $\beta^*_{\lambda}$ as the ``population'' ridge regression functional given by
\begin{equation}\label{eq:population-ridge-regression}
    \beta^*_{\lambda}:= \argmin_{ \theta  \in\mathbb{R}^d}\, \E [ ( Y - X^{\top}\theta )^2 ] +\lambda \| \theta \|_2^2.
\end{equation}
With $\{1,2, \ldots, N\}$ randomly split into three parts $\mathcal{I}_1, \mathcal{I}_2,$ and $\mathcal{I}_3$, we have the ridge regression estimator from $(X_i, Y_i), i\in\mathcal{I}_1$ as
\begin{equation}\label{eq:empirical-ridge-regression}
\widehat{\beta}_{\lambda} = \argmin_{\theta\in\mathbb{R}^d}\,\frac{1}{|\mathcal{I}_1|}\sum_{i\in\mathcal{I}_1}(Y_i - X_i^{\top}\theta)^2 + \lambda\|\theta\|^2.
\end{equation}
The EFCP, VFCP algorithms with ridge regression training methods are presented in Algorithm \ref{alg:conformal-ridge}.

\begin{algorithm}[htbp]
\SetAlgoLined
\KwIn{Data $(X_i,Y_i),i=1,\dots,N$, coverage level $1 - \alpha \in (0,1)$, and a parameter $\kappa < 1$.}
\KwResult{Optimal fixed width prediction set based on ridge regression.}
  (Splitting) Randomly split $\{1, \dots, N\}$ into three subsets $\mathcal{I}_1,\mathcal{I}_2, \mathcal{I}_3$. For EFCP, $\mathcal{I}_3 = \emptyset$. Set $\mathcal{D}_1 = \{(X_i, Y_i):i\in\mathcal{I}_1\}, \mathcal{D}_2 = \{(X_i, Y_i):i\in\mathcal{I}_2\}$, and $\mathcal{D}_3 = \{(X_i, Y_i):i\in\mathcal{I}_3\}$. 
  
  (Estimation) \label{me:estimation} Set $\widehat{\Sigma}_{1} := |\mathcal{I}_1|^{-1}\sum_{i\in\mathcal{I}_1} X_iX_i^{\top}$. For $\lambda \ge - \kappa\lambda_{\min}(\widehat{\Sigma}_{1}),$ calculate $\widehat{\beta}_{\lambda}$ in~\eqref{eq:empirical-ridge-regression}.
  
  (Initial Prediction)\label{me:prediction} Define
   $T_{\alpha,\lambda}:= \lceil(1-\alpha)(1+|\mathcal{I}_2|)\rceil \text{-th quantile of }|Y_i - X_i^{\top}\widehat{\beta}_{\lambda}|,\;i\in\mathcal{I}_2.$
  
  (Selection) \label{me:selection} Select the parameter as $$
  \widehat{\lambda}_{\alpha}: = \argmin_{\lambda \ge -\kappa \lambda_{\min}(\widehat{\Sigma}_1)}\, T_{\alpha,\lambda}.$$
  
   (Final Prediction for VFCP)\label{me:final-prediction} Define
   $T_{\alpha,\widehat{\lambda}_{\alpha}}^*:= 
   \lceil(1-\alpha)(1+|\mathcal{I}_3|)\rceil\text{-th quantile of } |Y_i - X_i^{\top}\widehat{\beta}_{\widehat{\lambda}_{\alpha}}|,\;i\in\mathcal{I}_3.$
  
  \Return $\widehat{C}_{\alpha}^{\texttt{EF-ridge}} :=\{(x, y):\,| y-x^{\top}\widehat{\beta}_{\widehat{\lambda}_{\alpha}}|\le T_{\alpha,\widehat{\lambda}_{\alpha}}\}$ and $\widehat{C}_{\alpha}^{\texttt{VF-ridge}} :=\{(x, y):\, |y-x^{\top}\widehat{\beta}_{\widehat{\lambda}_{\alpha}}| \le T_{\alpha,\widehat{\lambda}_{\alpha}}^*\}$.

  \caption{Best Ridge Regression Fixed Width Prediction Set}
  \label{alg:conformal-ridge}
\end{algorithm}
The parameter $\kappa$ in Algorithm~\ref{alg:conformal-ridge} is used to ensure that the empirical ridge regression problem in~\eqref{eq:empirical-ridge-regression} is convex. In contrast to the most commonly used setting, we allow the penalty parameter in ridge regression to be negative; this is motivated by considerations in~\cite{kobak2020optimal}. Theorems~\ref{thm:coverage} and~\ref{thm:general-oracle-final} imply that the prediction intervals $\widehat{C}_{\alpha}^{\texttt{EF-ridge}}$ and $\widehat{C}_{\alpha}^{\texttt{VF-ridge}}$ returned by Algorithm~\ref{alg:conformal-ridge} satisfy
\[
\mathbb{P}\left((X_{N+1},Y_{N+1})\in\widehat{C}_{\alpha}^{\texttt{EF-ridge}}\right) \ge 1 - \alpha - \mathfrak{C}\sqrt{\frac{d}{N-|\mathcal{I}_1|}},\quad\mbox{and}\quad \mathbb{P}\left((X_{N+1},Y_{N+1})\in\widehat{C}_{\alpha}^{\texttt{VF-ridge}}\right) \ge 1 - \alpha.
\]
Additionally, Theorem~\ref{thm:general-oracle-final} implies two versions of empirical oracle inequalities. We will use the notation
\[
\widehat{C}_{\theta} = \{(x, y):\,|y - x^{\top}\theta| \le T_{\alpha,\theta}\},\quad\mbox{and}\quad {C}_{\alpha,\theta}^{\texttt{orc-lin}} := \{(x, y):\,|y - x^{\top}\theta| \le Q_{\alpha,\theta}\}
\]
from Section~\ref{sec:general-algorithm} to describe the implications of Theorem~\ref{thm:general-oracle-final}. Suppose the assumption of i.i.d. observations and~\ref{assump:general-holder-VFCP} holds true. Set $L_{\Lambda_1} = \sup\{L_{\widehat{\beta}_{\lambda}}:\, \lambda \ge -\kappa\lambda_{\min}(\widehat{\Sigma}_1)\}$, then Theorems~\ref{thm:general-oracle-final} and~\ref{thm:general-oracle-VFCP} along with~\eqref{eq:VC-class-bound} implies that with probability at least $1 - \delta$,
\begin{equation}\label{eq:empirical-oracle-inequality-ridge}
    \begin{split}
        \mbox{Width}(\widehat{C}_{\alpha}^{\texttt{EF-ridge}}) ~&\le~ \min_{\lambda \ge -\kappa\lambda_{\min}(\widehat{\Sigma}_1)}\mbox{Width}({C}_{\alpha,\widehat{\beta}_{\lambda}}^{\texttt{orc-lin}}) + \mathfrak{C}L_{\Lambda_1}\left(\frac{\log(2/\delta)}{|\mathcal{I}_2|}\right)^{\gamma/2},\\
        \mbox{Width}(\widehat{C}_{\alpha}^{\texttt{VF-ridge}}) ~&\le~ \min_{\lambda \ge -\kappa\lambda_{\min}(\widehat{\Sigma}_1)}\mbox{Width}({C}_{\alpha,\widehat{\beta}_{\lambda}}^{\texttt{orc-lin}}) + \mathfrak{C}L_{\Lambda_1}\left(\frac{d + \log(1/\delta)}{\min\{|\mathcal{I}_2|, |\mathcal{I}_3|\}}\right)^{\gamma/2}.\\
    \end{split}
\end{equation}
The first inequality follows from~\eqref{eq:general-conformal-oracle-EFCP} and the second follows from~\eqref{eq:general-final_quantile_strong}. The same inequalities hold true with ${C}_{\alpha,\widehat{\beta}_{\lambda}}^{\texttt{orc-lin}}$ replaced by $\widehat{C}_{\widehat{\beta}_{\lambda}}.$ For the inequalities in~\eqref{eq:empirical-oracle-inequality-ridge}, we need $r^*$ in~\ref{assump:general-holder-EFCP} and~\ref{assump:general-holder-VFCP} to satisfy (respectively) $$r^* \ge \mathfrak{C}\sqrt{(\log(2/\delta))/|\mathcal{I}_2|} \quad\mbox{and}\quad r^* \ge \mathfrak{C}\sqrt{(d + \log(2/\delta))/\min\{|\mathcal{I}_2|, |\mathcal{I}_3|\}}.$$ 

Note that~\eqref{eq:empirical-oracle-inequality-ridge} refers to empirical oracle inequalities because $\widehat{C}_{\alpha,\widehat{\beta}_{\lambda}}^{\texttt{orc-lin}}$ still depends on the data through $\widehat{\beta}_{\lambda}$. The asymptotic version of $\widehat{C}_{\widehat{\beta}_{\lambda}}$ and $C_{\alpha,\widehat{\beta}_{\lambda}}^{\texttt{orc-lin}}$ is given by $C_{\alpha,\beta_{\lambda}^*}^{\texttt{orc-lin}}$. Hence, the asymptotic oracle inequality for Algorithm~\ref{alg:conformal-ridge} is given by
\begin{equation}\label{eq:asymptotic-oracle-to-prove-ridge}
\max\{\mbox{Width}(\widehat{C}_{\alpha}^{\texttt{EF-ridge}}), \mbox{Width}(\widehat{C}_{\alpha}^{\texttt{VF-ridge}})\} ~\le~ \min_{\lambda \ge -\kappa\lambda_{\min}(\widehat{\Sigma}_1)}\mbox{Width}(C_{\alpha,\beta^*_{\lambda}}^{\texttt{orc-lin}}) + R_N,
\end{equation}
for some $R_N$ converging to zero. Following~\eqref{eq:empirical-oracle-inequality-ridge}, it suffices to show that
\begin{equation}\label{eq:suffices-to-prove-asymp-oracle-ridge}
\min_{\lambda \ge -\kappa\lambda_{\min}(\widehat{\Sigma}_1)}\mbox{Width}({C}_{\alpha,\widehat{\beta}_{\lambda}}^{\texttt{orc-lin}}) ~\le~ \min_{\lambda \ge -\kappa\lambda_{\min}(\widehat{\Sigma}_1)}\mbox{Width}(C_{\alpha,\beta^*_{\lambda}}^{\texttt{orc-lin}}) + R_N',
\end{equation}
for some $R_N'$ converging to zero. Then the asymptotic oracle inequality~\eqref{eq:asymptotic-oracle-to-prove-ridge} holds true with $R_N = R_N' + O((d + \log(1/\delta))/\min\{|\mathcal{I}_2|,|\mathcal{I}_3|\})^{\gamma/2}$.
Proving~\eqref{eq:suffices-to-prove-asymp-oracle-ridge} involves multiple steps:
\begin{enumerate}
    \item Prove that $\widehat{\beta}_{\lambda} - \beta^*_{\lambda}$ converges to zero uniformly over $\lambda \ge - \kappa\lambda_{\min}(\widehat{\Sigma}_1)$.
    \item Prove that $F_{\widehat{\beta}_{\lambda}}(\cdot)$ is close to $F_{\beta^*_{\lambda}}(\cdot)$ uniformly in $\lambda \ge - \kappa\lambda_{\min}(\widehat{\Sigma}_1)$; recall $F_{\theta}(\cdot)$ from~\eqref{F}.
    \item Prove that $T_{\alpha,\lambda}$ is close to $Q_{\alpha,\beta^*_{\lambda}}$ uniformly over $\lambda \ge - \kappa\lambda_{\min}(\widehat{\Sigma}_1)$.
\end{enumerate}
The first two steps above require several moment conditions which is what distinguishes the results that follow from the ones above. We will first list the assumptions required to complete these steps. Most commonly found results on rate of convergence of ridge regression estimator assume sub-Gaussian moment assumptions on the response and covariates. For generality, we also consider response and covariate distributions with only a fixed number of finite moments. Hence the resulting bounds are of independent interest in the study of ridge regression with increasing dimension.

\begin{enumerate}[label=\bf(A\arabic*)]
\setcounter{enumi}{0}
\item \label{assump:DGP}The observations $(X_i, Y_i) \in \mathbb{R}^d \times \mathbb{R}, 1 \leq i \leq N$ are independent and identically distributed.
\item \label{assump:finite-Y} There exists some $q_y \ge 2$ and a constant $K_{y} \in(0, \infty)$ such that
$
\left(\mathbb{E}\left[\left|Y_{i} \right|^{q_y}\right]\right)^{1 / q_y} \leq K_{y}<\infty.
$
\item \label{assump:subGauss} Random vectors $X_1, \ldots, X_N$ are i.i.d., and for $\Sigma := \mathbb{E}[X_1X_1^{\top}]$, there exists a constant $K_{x} \in(0, \infty)$ such that for all $u\in S^{d-1}$,
$\E [\exp ({ K_{x}^{-2}}{|u^{\top} \Sigma^{-1 / 2} X_{i}|^{2}})] \leq 2.$ 
\item \label{assump:finite-X} Random vectors $X_1, \ldots, X_N$ are i.i.d., and for $\Sigma := \mathbb{E}[X_1X_1^{\top}]$, there exists some constant $q_x \ge 2$, and a constant $K_{x} \in (0, \infty)$ such that for all $u\in S^{d-1}$,
$
\mathbb{E}\bigl|u^{\top} \Sigma^{-1 / 2} X_i\bigr|^{q_x} \leq K_x^{q_x}.
$
\item \label{assump:finite-X-indep} Random vectors $X_{1}, \ldots, X_{N}$ satisfy \ref{assump:finite-X} and also a structural condition that for any $1 \le i \le N$, there exists a random vector $Z_i$ such that $X_i = \Sigma^{1/2} Z_i$, and that $Z_{ij}, 1 \leq j \leq d$, are independent random variables. 
\item \label{assump:density} For
all $\lambda \ge -\kappa \lambda_{\min}(\widehat{\Sigma}_1)$, $ Y_1-X_1^{\top}\beta^{*}_{\lambda} $ has a Lebesgue density bounded by $ \Psi \in (0,\infty).$
\item \label{assump:density_conditional} The conditional distribution of $Y_1$ given $X_1$ has a Lebesgue density bounded by $\psi \in (0,\infty)$.
\end{enumerate}
Assumptions~\ref{assump:DGP}--\ref{assump:finite-X-indep} are standard moment assumptions on features and covariates. Assumption~\ref{assump:density_conditional} is stronger than~\ref{assump:density} and holds true under a well-specified linear model $Y = \theta_0^{\top}X + \epsilon$ with a bounded conditional density of $\epsilon$ given $X$.

We now prove~\eqref{eq:suffices-to-prove-asymp-oracle-ridge}. We start with a deterministic statement which serves as a road map for results under the stated assumptions. For notational convenience, set
$\Lambda_1 := \{\lambda\in\mathbb{R}:\, \lambda \ge -\kappa\lambda_{\min}(\widehat{\Sigma}_1)\}.
$
\begin{prop}\label{prop:oracle-final}
Set
$
\mathcal{E}(\mathcal{I}_1):=  \sup_{t \ge 0}\,\sup_{\lambda\in\Lambda_1}\,  |  F_{\widehat{\beta}_{\lambda}} (t) - F_{ \beta_{\lambda}^*} (t) |.
$
Suppose assumption~\ref{assump:general-holder-VFCP} holds true with 
\begin{equation}\label{eq:r-star-requirement}
r^* ~\geq~ \mathcal{E}(\mathcal{I}_1).
\end{equation}
Then setting $L_{\Lambda_1} := \sup\{L_{\lambda}:\,\lambda\in\Lambda_1\}$ for the parameter $\kappa < 1$ in Algorithm~\ref{alg:conformal-ridge}, we obtain
\begin{equation}\label{eq:final_quantile}
\min_{\lambda \ge -\kappa\lambda_{\min}(\widehat{\Sigma}_1)}\mathrm{Width}({C}_{\alpha,\widehat{\beta}_{\lambda}}^{\texttt{orc-lin}}) ~\le~ \min_{\lambda \ge -\kappa\lambda_{\min}(\widehat{\Sigma}_1)}\mathrm{Width}(C_{\alpha,\beta^*_{\lambda}}^{\texttt{orc-lin}})
~+~  L_{\Lambda_1}\mathcal{E}^{\gamma}(\mathcal{I}_1).
\end{equation}
\end{prop}
Proposition~\ref{prop:oracle-final} follows from Proposition~\ref{prop:holder} (in Section~\ref{appsec:useful-propositions} of the supplementary file). 
Recall from~\eqref{eq:def-quantile-theta} that $Q_{\alpha,\widehat{\beta}_{\lambda}}$ and $Q_{\alpha,\beta^*_{\lambda}}$ are quantiles of $F_{\widehat{\beta}_{\lambda}}(\cdot)$ and $F_{\beta^*_{\lambda}}(\cdot)$, respectively. Proposition~\ref{prop:oracle-final} essentially follows from the fact that quantiles of close distributions are close when one of the distributions has density bounded away from zero. All the assumptions~\ref{assump:DGP}--\ref{assump:density_conditional} listed above are used in bounding $\mathcal{E}(\mathcal{I}_1)$. It might of interest to note that for Proposition~\ref{prop:oracle-final} we only need assumption~\ref{assump:general-holder-VFCP} for $\Theta = \{\beta^*_{\lambda}:\, \lambda\in\Lambda_1\}.$

Regarding the requirement~\eqref{eq:r-star-requirement} on $r^*$, we will show in the following that $\mathcal{E}(\mathcal{I}_1)$ converges to zero as $|\mathcal{I}_1| \to \infty$. This also requires some control on the growth of the dimension $d$ of the covariates. For many distributions of $(X, Y),$ assumption~\ref{assump:general-holder-VFCP} holds true with a fixed positive number $r^*$ and hence the requirement~\eqref{eq:r-star-requirement} holds true with a large enough number of observations in each split of the data. For this reason and to reduce the notational burden, we will not mention this requirement on $r^*$ in the following results.

We will now
bound $\mathcal{E}(\mathcal{I}_1)$ for which we make use of assumptions~\ref{assump:subGauss},~\ref{assump:finite-X},~\ref{assump:density}, and~\ref{assump:density_conditional}. We now state a simple deterministic inequality that bounds $|  F_{\widehat{\beta}_{\lambda}} (t) - F_{ \beta_{\lambda}^*} (t) |$ in terms of the estimation error $\|\widehat{\beta}_{\lambda} - \beta^*_{\lambda}\|_{\Sigma}$; recall the definition $\|\theta\|_{\Sigma} = \sqrt{\theta^{\top}\Sigma\theta}$.
\begin{lemma}(Bounds for $\mathcal{E}(\mathcal{I}_1)$)\label{lem:R2-bound}
For any $\lambda \in \Lambda_1$, we have
\begin{equation}\label{eq:cdf-beta-hat-beta-bound}
\sup_{t \ge 0} \,| F_{\widehat{\beta}_{\lambda}} (t) - F_{ \beta_{\lambda}^*} (t) | ~\le~ 
\mathfrak{C}
\begin{cases}
\Psi K_{x}  \| \widehat{\beta}_{\lambda} - \beta_{\lambda}^* \|_{\Sigma} \log \bigl(1/ ( \Psi K_{x} \|  \widehat{\beta}_{\lambda} - \beta_{\lambda}^* \|_{\Sigma}) \bigr), &\mbox{under } \ref{assump:density},~\ref{assump:subGauss},\\
(\Psi K_{x}\| \widehat{\beta}_{\lambda} - \beta_{\lambda}^* \|_{\Sigma} )^{q_x/(1+q_x)}, &\mbox{under } \ref{assump:density},~\ref{assump:finite-X},\\
\psi \| \widehat{\beta}_{\lambda} - \beta_{\lambda}^* \|_{\Sigma}, &\mbox{under }\ref{assump:density_conditional}.
\end{cases}
\end{equation}


\end{lemma}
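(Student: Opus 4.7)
I would reduce all three cases to a common elementary identity and then branch based on which of \ref{assump:subGauss}, \ref{assump:finite-X}, \ref{assump:density_conditional} controls the ``tail'' term. Fix $\lambda\in\Lambda_1$, condition on $\mathcal{D}_1$ so that $\widehat{\beta}_{\lambda}$ is treated as a fixed vector, and set
\[
W := Y - X^{\top}\beta^*_{\lambda},\qquad \Delta := \widehat{\beta}_{\lambda} - \beta^*_{\lambda},\qquad S := X^{\top}\Delta,
\]
so that $Y - X^{\top}\widehat{\beta}_{\lambda} = W - S$. The reverse triangle inequality $\bigl||W-S|-|W|\bigr|\le |S|$ yields the set inclusion
\[
\{|W-S|\le t\}\triangle \{|W|\le t\} \subseteq \bigl\{t-|S| \le |W| \le t+|S|\bigr\},
\]
and therefore, uniformly in $t\ge 0$,
\[
|F_{\widehat{\beta}_{\lambda}}(t)-F_{\beta^*_{\lambda}}(t)| \;\le\; \mathbb{P}\bigl(t-|S|\le |W|\le t+|S|\bigr).
\]
This is the single deterministic fact from which all three bounds will follow.

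\textbf{Third bound (under \ref{assump:density_conditional}).} Conditional on $X$, the random variable $W=Y-X^{\top}\beta^*_{\lambda}$ is just a shift of $Y$, so its conditional density is bounded by $\psi$, and hence the folded variable $|W|$ has conditional density bounded by $2\psi$. Since the interval $[t-|S|,t+|S|]$ has length $2|S|$, we obtain the conditional bound $4\psi|S|$, and taking expectations followed by Jensen's inequality gives $\mathbb{E}|S|\le (\mathbb{E}|X^{\top}\Delta|^{2})^{1/2}=\|\Delta\|_{\Sigma}$. This delivers the third case with $\mathfrak{C}=4$.

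\textbf{First two bounds (under \ref{assump:density}).} Here only the \emph{unconditional} density of $W$ is controlled, so I peel off the $|S|$ contribution first: for any $M>0$,
\[
\mathbb{P}\bigl(t-|S|\le |W|\le t+|S|\bigr) \;\le\; \mathbb{P}(|S|>M) \;+\; \mathbb{P}\bigl(|W|\in[t-M,t+M]\bigr),
\]
and the density bound on $W$ together with the folding argument gives $\mathbb{P}(|W|\in[t-M,t+M])\le 4\Psi M$. The tail of $S$ is controlled via $S=(\Sigma^{1/2}\Delta)^{\top}(\Sigma^{-1/2}X)$ of norm $\|\Delta\|_{\Sigma}$: under \ref{assump:subGauss}, $\mathbb{P}(|S|>M)\le 2\exp(-cM^{2}/(K_{x}\|\Delta\|_{\Sigma})^{2})$, while under \ref{assump:finite-X}, Markov's inequality gives $\mathbb{P}(|S|>M)\le (K_{x}\|\Delta\|_{\Sigma}/M)^{q_{x}}$. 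Balancing the two terms by choosing $M\asymp K_{x}\|\Delta\|_{\Sigma}\sqrt{\log(1/(\Psi K_{x}\|\Delta\|_{\Sigma}))}$ in the sub-Gaussian case and $M\asymp(K_{x}\|\Delta\|_{\Sigma})^{q_{x}/(1+q_{x})}\Psi^{-1/(1+q_{x})}$ in the moment case produces the first and second bounds, respectively.

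\textbf{Anticipated pitfalls.} Two minor points need care. First, the optimized sub-Gaussian bound actually carries a factor $\sqrt{\log(1/(\Psi K_{x}\|\Delta\|_{\Sigma}))}$; the stated $\log$ factor is a (harmless) weakening valid when $\Psi K_{x}\|\Delta\|_{\Sigma}$ is small, and I would record this explicitly so the constant $\mathfrak{C}$ is unambiguous. Second, the sub-Gaussian tail for $S$ must be invoked conditionally on $\mathcal{D}_{1}$ because $\Delta$ depends on the training split; since \ref{assump:subGauss} controls $u^{\top}\Sigma^{-1/2}X$ for every deterministic unit vector $u$, the conditioning trivially resolves this. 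Everything else is an algebraic optimization and tracking of constants; no new probabilistic input is required.
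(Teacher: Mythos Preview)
Your proof is correct and takes essentially the same approach as the paper's: split the probability into a density-controlled piece (via \ref{assump:density} or \ref{assump:density_conditional}) and a tail piece for $X^{\top}\Delta$ (via \ref{assump:subGauss} or \ref{assump:finite-X}), then optimize the cutoff. Your symmetric-difference starting point is a slightly cleaner wrapper than the paper's one-sided treatment, and in the sub-Gaussian case you invoke the tail bound directly whereas the paper routes through moment bounds $\mathbb{E}|u^{\top}\Sigma^{-1/2}X|^{q}\le K_{x}^{q}q^{q/2}$ and optimizes over $q$; both arrive at the same $\sqrt{\log}$ factor, and your anticipated weakening to $\log$ is exactly what the paper does as well.
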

The proof of Lemma~\ref{lem:R2-bound} is provided in Appendix \ref{sec:Proof of Lemma R2-bound} of the supplementary. Hence, to obtain an explicit rate in the oracle inequality~\eqref{eq:final_quantile}, it suffices to prove bounds for the estimation error of $\widehat{\beta}_{\lambda}$.

We will further reduce the problem of bounding $\|\widehat{\beta}_{\lambda} - \beta^*_{\lambda}\|_{\Sigma}$ to the problem of bounding averages of random vectors and random matrices via a simple deterministic inequality. This is similar to the deterministic inequality results of~\cite{kuchibhotla2018model}. Recall the definition of $\widehat{\beta}_{\lambda}$ and $\beta^*_{\lambda}$ from~\eqref{eq:empirical-ridge-regression} and~\eqref{eq:population-ridge-regression}, respectively. 
\begin{lemma}\label{lem:ridge}(Deterministic Inequality for Ridge Regression)
Set
\begin{equation}\label{eq:gammahat_def}
\Sigma := \mathbb{E}[XX^{\top}],\quad\Gamma := \mathbb{E}[XY],\quad\widehat{\Sigma}_1 := \frac{1}{|\mathcal{I}_1|}\sum_{i\in\mathcal{I}_1} X_i X_i^{\top},\quad\mbox{and}\quad \widehat{\Gamma}_1 := \frac{1}{|\mathcal{I}_1|}\sum_{i\in\mathcal{I}_1} X_i Y_i.
\end{equation}
Define
\begin{equation}\label{eq:D_Sigma_def}
\mathcal{D}_{\Sigma} := \|\Sigma^{-1/2}\widehat{\Sigma}_1\Sigma^{-1/2} - I_d\|_{\mathrm{op}}    .
\end{equation}
Then for any $c\le 1$, we have
\begin{equation}\label{eq:oracle-ridge}
\sup_{\lambda \ge -c\lambda_{\min}(\Sigma)}\|\widehat{\beta}_{\lambda} - \beta^*_{\lambda}\|_{\Sigma} ~\le~ \frac{1}{(1 - c - \mathcal{D}_{\Sigma})_+}\left[\|\Sigma^{-1/2}(\widehat{\Gamma}_1 - \Gamma)\|_2 + \mathcal{D}_{\Sigma}\frac{\|\Sigma^{-1/2}\Gamma\|_2}{1 - c}\right].
\end{equation}
If assumption \ref{assump:finite-Y} holds true, then $\| \Sigma^{-1/2} \Gamma \|_2 \leq K_y.$
\end{lemma}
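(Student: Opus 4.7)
The plan is to prove the inequality by pure algebra: write both estimators in closed form, expand the difference, and then pass to a whitened coordinate system so that all operator-norm control is reduced to $\mathcal{D}_{\Sigma}$ and Weyl's inequality. For $\lambda \ge -c\lambda_{\min}(\Sigma)$ with $c < 1$, both regularized problems are strongly convex, so $\beta^*_{\lambda} = (\Sigma + \lambda I_d)^{-1}\Gamma$ and $\widehat{\beta}_{\lambda} = (\widehat{\Sigma}_1 + \lambda I_d)^{-1}\widehat{\Gamma}_1$ (the case $c = 1$ is vacuous because both denominators on the right-hand side either vanish or are handled by the $(\cdot)_+$ convention). The identity $A^{-1} - B^{-1} = A^{-1}(B-A)B^{-1}$ yields the decomposition
\[
\widehat{\beta}_{\lambda} - \beta^*_{\lambda} ~=~ (\widehat{\Sigma}_1 + \lambda I_d)^{-1}(\widehat{\Gamma}_1 - \Gamma) ~+~ (\widehat{\Sigma}_1 + \lambda I_d)^{-1}(\Sigma - \widehat{\Sigma}_1)(\Sigma + \lambda I_d)^{-1}\Gamma,
\]
which splits the error into a ``response term'' and a ``design term.''

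The key step is to multiply on the left by $\Sigma^{1/2}$ and introduce $A := \Sigma^{-1/2}\widehat{\Sigma}_1\Sigma^{-1/2}$, a symmetric matrix with $\|A - I_d\|_{\mathrm{op}} = \mathcal{D}_{\Sigma}$. Using the factorizations $\widehat{\Sigma}_1 + \lambda I_d = \Sigma^{1/2}(A + \lambda\Sigma^{-1})\Sigma^{1/2}$ and $\Sigma + \lambda I_d = \Sigma^{1/2}(I_d + \lambda\Sigma^{-1})\Sigma^{1/2}$, the outer $\Sigma^{1/2}$ factors cancel and I obtain
\[
\Sigma^{1/2}(\widehat{\beta}_{\lambda} - \beta^*_{\lambda}) ~=~ (A + \lambda\Sigma^{-1})^{-1}\Sigma^{-1/2}(\widehat{\Gamma}_1 - \Gamma) ~+~ (A + \lambda\Sigma^{-1})^{-1}(I_d - A)(I_d + \lambda\Sigma^{-1})^{-1}\Sigma^{-1/2}\Gamma.
\]
Taking Euclidean norms, submultiplicativity and $\|\widehat{\beta}_{\lambda} - \beta^*_{\lambda}\|_{\Sigma} = \|\Sigma^{1/2}(\widehat{\beta}_{\lambda} - \beta^*_{\lambda})\|_2$ deliver the desired inequality, \emph{provided} one can bound the two regularized-inverse operator norms by $1/(1-c-\mathcal{D}_{\Sigma})_+$ and $1/(1-c)$ uniformly in $\lambda$.

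These two eigenvalue bounds are the only real content and follow from Weyl's inequality: $\lambda_{\min}(A) \ge 1 - \|A - I_d\|_{\mathrm{op}} = 1 - \mathcal{D}_{\Sigma}$, and for every $\lambda \ge -c\lambda_{\min}(\Sigma)$ the eigenvalues of $\lambda\Sigma^{-1}$ lie in $[-c,\infty)$ (the worst case is $\lambda = -c\lambda_{\min}(\Sigma)$ along the direction of the smallest eigenvalue of $\Sigma$, giving $\lambda/\lambda_{\min}(\Sigma) = -c$). Applying Weyl once more yields $\lambda_{\min}(A + \lambda\Sigma^{-1}) \ge 1 - c - \mathcal{D}_{\Sigma}$ and $\lambda_{\min}(I_d + \lambda\Sigma^{-1}) \ge 1 - c$, hence the stated operator-norm bounds. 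Since the resulting bound is independent of $\lambda$, taking the supremum is free. Finally, the supplementary claim follows from a one-line duality plus Cauchy--Schwarz: $\|\Sigma^{-1/2}\Gamma\|_2 = \sup_{\|u\|_2 = 1}\mathbb{E}[(u^{\top}\Sigma^{-1/2}X)Y] \le (\mathbb{E}[Y^2])^{1/2} \le K_y$ under~\ref{assump:finite-Y}, using that $\mathbb{E}[(u^{\top}\Sigma^{-1/2}X)^2] = \|u\|_2^2 = 1$. There is no genuine obstacle to the argument; it is purely algebraic, and the only care needed is correctly identifying the range of eigenvalues of $\lambda\Sigma^{-1}$ as $\lambda$ varies over the allowed negative values.
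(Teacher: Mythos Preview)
Your proof is correct and follows essentially the same route as the paper: both arguments whiten by $\Sigma^{-1/2}$, reduce to bounding $\lambda_{\min}(A+\lambda\Sigma^{-1})$ and $\lambda_{\min}(I_d+\lambda\Sigma^{-1})$ via Weyl-type perturbation, and finish the $\|\Sigma^{-1/2}\Gamma\|_2\le K_y$ claim with Cauchy--Schwarz. The only cosmetic difference is that the paper starts from the normal equations and writes the design term as $(\widehat{\Sigma}_1-\Sigma)\beta^*_\lambda$ (then bounds $\|\Sigma^{1/2}\beta^*_\lambda\|_2$ separately), whereas you invoke the resolvent identity and keep $\Gamma$ throughout; the resulting inequalities coincide.
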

The proof is provided in Appendix \ref{sec:Proof of Lemma ridge} of the supplementary. This result provides some insights into the behavior of the algorithm and possible refinements. Firstly, the estimation error of the ridge regression estimator can be bounded in terms of the estimation error of $\widehat{\Gamma}_1$ and $\widehat{\Sigma}_1$. Because these are averages of $d$-dimensional random vectors and random matrices, the behavior of the estimation errors depends crucially on the moments and tails of the underlying distribution of $(X, Y)$. Secondly, the result does not actually depend on the definitions of $\widehat{\Sigma}_1$ and $\widehat{\Gamma}_1$. In particular, if the empirical ridge regression estimator $\widehat{\beta}_{\lambda}$ is redefined as 
$\widehat{\beta}_{\lambda} = \argmin_{\theta\in\mathbb{R}^d}\,-2\theta^{\top}\widehat{\Gamma}_1 + \theta^{\top}(\widehat{\Sigma}_1 + \lambda I_d)\theta,$
then the conclusion~\eqref{eq:oracle-ridge} continues to hold true. This suggests that if $\Sigma$ and $\Gamma$ can be estimated better using alternative estimators even under heavy-tailed observations, then this would give better estimation error of ridge regression and also yields a better oracle inequality. For instance, we mention~\cite{mendelson2020robust}, where the authors propose improved estimators of $\Sigma$ that have a sub-Gaussian tail behavior even when $X$ only has four moments. The same work also provides algorithms for estimating $\Gamma$ with heavy-tailed data; also see~\cite{cherapanamjeri2020optimal}. Lemma~\ref{lem:ridge}, hence, can be readily combined with these works to obtain a modified ridge regression estimator and an improvement of the oracle inequality that will be derived below. Given the popularity of ridge regression, we will restrict ourselves to the standard ridge regression. 

Getting back to the problem of oracle inequality, note that Lemma~\ref{lem:ridge} does not readily bound $\mathcal{E}(\mathcal{I}_1)$ in Proposition~\ref{prop:oracle-final}. The reason is that Lemma~\ref{lem:ridge} provides the bound on the estimation error for $\lambda \ge -c\lambda_{\min}(\Sigma)$ while in $\mathcal{E}(\mathcal{I}_1)$, we want the estimation error controlled over $\lambda\in\Lambda_1 = \{\lambda \ge -\kappa\lambda_{\min}(\widehat{\Sigma}_1)\}.$ The resolution is to prove that for any given $\kappa < 1$, there exists a $c \le 1$ such that with high-probability $\Lambda_1 = \{\lambda:\, \lambda \ge -\kappa\lambda_{\min}(\widehat{\Sigma}_1)\} ~\subseteq~ \{\lambda:\,\lambda \ge -c\lambda_{\min}(\Sigma)\}.$
This is equivalent to $\kappa\lambda_{\min}(\widehat{\Sigma}_1) \le c\lambda_{\min}(\Sigma)$.
Therefore, it suffices to take $c = \kappa(1 + \mathcal{D}_{\Sigma})$. For small sample sizes, this choice of $c$ need not be smaller than $1$, but with increasing sample sizes $\mathcal{D}_{\Sigma}$ converges to zero and for any $\kappa < 1$, $c$ will eventually be less than $1$. If $\mathcal{D}_{\Sigma}\le (1 - \kappa)/(2 + \kappa)$, then $c = \kappa(1 + \mathcal{D}_{\Sigma}) < 1$ for any $\kappa < 1$ and Lemma~\ref{lem:ridge} implies that
\begin{equation}\label{eq:final-ridge-regression}
\sup_{\lambda \ge -\kappa\lambda_{\min}(\widehat{\Sigma}_1)}\|\widehat{\beta}_{\lambda} - \beta^*_{\lambda}\|_{\Sigma} ~\le~ \frac{2+\kappa}{1-\kappa}\left[\|\Sigma^{-1/2}(\widehat{\Gamma}_1 - \Gamma)\|_2 + \mathcal{D}_{\Sigma}\frac{\|\Sigma^{-1/2}\Gamma\|_2(2 + \kappa)}{2(1-\kappa)}\right].    
\end{equation}
The only random quantities on the right hand side are $\|\Sigma^{-1/2}(\widehat{\Gamma}_1 - \Gamma)\|_2$ and $\mathcal{D}_{\Sigma}$, which we will bound under the moment assumptions listed above. The condition $\mathcal{D}_{\Sigma} \le (1-\kappa)/(2 + \kappa)$ will also be shown to hold with a high probability under these conditions.
Lemmas~\ref{lem:D_Sigma} and~\ref{lem:covariance-bound} provided in supplementary Appendix~\ref{appsec:finite-sample-results-D-Sigma-covariance} provide bounds on $\mathcal{D}_{\Sigma}$ and $\|\Sigma^{-1/2}(\widehat{\Gamma} - \Gamma)\|_2$ under assumptions~\ref{assump:DGP}--\ref{assump:finite-X-indep}. These results may be of independent interest.

Combining the conclusions of Lemmas~\ref{lem:D_Sigma},~\ref{lem:covariance-bound} with~\eqref{eq:oracle-ridge}, Lemma~\ref{lem:R2-bound}, Proposition~\ref{prop:oracle-final}, and~\eqref{eq:empirical-oracle-inequality-ridge}, we can obtain a total of 9 different asymptotic oracle inequalities with different combinations of assumptions. Given that this is basic substitution, we will not present all these results separately and only consider two extreme cases: sub-Gaussian covariates with~\ref{assump:density_conditional} and heavy-tailed covariates with~\ref{assump:density}. Also, for these cases, we will assume that $|\mathcal{I}_1| = |\mathcal{I}_2| = |\mathcal{I}_3| \asymp N/3$ for simplicity. 

Under assumptions~\ref{assump:general-holder-VFCP}--\ref{assump:subGauss}, and~\ref{assump:density_conditional}, we obtain with probability at least $1 - 4\sqrt{d/N}$,
\begin{equation}\label{eq:best-assump-asymp-oracle-ineq}
\begin{split}
\mbox{Width}(\widehat{C}_{\alpha}^{\texttt{EF-ridge}}) &\le \min_{\lambda\in\Lambda_1}\mbox{Width}({C}_{\alpha,\beta^*_{\lambda}}^{\texttt{orc-lin}})\\ 
&\quad+ \mathfrak{C}(K_x, K_y, q_y)L_{\Lambda_1}\left(\left(\frac{d + \log(N/d)}{N}\right)^{\gamma/2} + \left(\frac{d^{q_y - 1}}{N^{2q_y - 3}}\right)^{\gamma/(2q_y)}\right).
\end{split}
\end{equation}
Here $\mathfrak{C}(K_x, K_y, q_y)$ is a constant depending on $K_x, K_y,$ and $q_y$. It is easier to think of this as an asymptotic result rather than a finite sample result because to satisfy $\mathcal{D}_{\Sigma} \le (1-\kappa)/(2 + \kappa)$ on this event, one would require $d/N$ to be small enough. If $q_y \ge 2$, asymptotic oracle inequality~\eqref{eq:best-assump-asymp-oracle-ineq} implies that $\widehat{C}_{\alpha}^{\texttt{EF-ridge}}$ is asymptotically close to the optimal ridge regression based fixed width prediction set whenever $d = o(N)$.

Similarly, under assumptions~\ref{assump:general-holder-VFCP}--\ref{assump:finite-Y},~\ref{assump:finite-X}, and~\ref{assump:density}, with $p = q_xq_y/(q_x + q_y)$, we obtain with probability at least $1 - C(d/N^{1-2/q_x})^{q_x/8} - (d/N^{2-2/p})^{p/4}$,
\begin{equation}\label{eq:worst-assump-asymp-oracle-ineq}
\begin{split}
\mbox{Width}(\widehat{C}_{\alpha}^{\texttt{EF-ridge}}) &\le \min_{\lambda\in\Lambda_1}\mbox{Width}({C}_{\alpha,\beta^*_{\lambda}}^{\texttt{orc-lin}})\\ 
&+ \mathfrak{C}(K_x, K_y, q_y)L_{\Lambda_1}\left(\left(\frac{d + \log(N/d)}{N}\right)^{\gamma/2} + \left(\frac{dN^{2/q_x}}{N}\right)^{\frac{3q_{x}\gamma}{(4+4q_x)}} + \left(\frac{d^{1/2}N^{1/p}}{N}\right)^{\frac{q_x}{(2+2q_x)}}\right).
\end{split}
\end{equation}
The rate under these heavy-tailed assumptions converges to zero if $d = o(N^{1 - 2/q_x})$. This result holds true without assuming $q_x \ge 4$, we only need $q_x \ge 2$. If $q_x = 2$, then the oracle inequality yields a constant rate and \emph{fails} to imply that $\widehat{C}_{\alpha}^{\texttt{EF-ridge}}$ is asymptotically close to optimal ridge regression based fixed width prediction set. Both the inequalities above also hold true for $\widehat{C}_{\alpha}^{\texttt{VF-ridge}}$.
\section{Finite sample results related to ridge regression}\label{appsec:finite-sample-results-D-Sigma-covariance}
The following two lemmas prove deviation bounds for $\|\Sigma^{-1/2}(\widehat{\Gamma}_1 - \Gamma)\|_2$ and $\mathcal{D}_{\Sigma}$, and might be of independent interest. Recall $\mathcal{D}_{\Sigma}$ and $\widehat{\Gamma}_1$ defined in \eqref{eq:D_Sigma_def} and~\eqref{eq:gammahat_def}, respectively. In the following two lemmas, $C_{q_x}$ and $C_{q_y}$ represent constants depending only on $q_x$ and $q_y$, respectively. Proof of Lemma~\ref{lem:D_Sigma} is provided in supplementary Appendix \ref{sec:proof of Lemma D_Sigma} and the proof of Lemma~\ref{lem:covariance-bound} is given in Appendix \ref{sec:proof of Lemma covariance-bound} of the supplementary.
\begin{lemma}\label{lem:D_Sigma}
Let $n:=| \mathcal{I}_1|$ and fix any $\delta \in (0,1)$. 
\begin{itemize}
\item Under assumptions \ref{assump:DGP} and \ref{assump:subGauss},
with probability at least $1-\delta$,
\begin{equation}\label{eq:D_subgaussian}
\mathcal{D}_{\Sigma} \leq \mathfrak{C} K_{x}^{2} \left\{ \sqrt{\frac{d+\log (1 / \delta)}{ n }}+ \frac{d+\log (1 / \delta)}{ n } \right\}.
\end{equation}
{\color{black}With $\delta =\sqrt{d/n}$, the right hand side tends to zero if $d = o(n)$.}

\item Under assumptions \ref{assump:DGP} and \ref{assump:finite-X},
with probability at least $1-1/n-\delta$,
\begin{equation}\label{eq:D_moment}
\mathcal{D}_{\Sigma} \leq C_{q_x}K_x^2 \left\{  \frac{d\delta^{-2/q_x}}{n^{1-2/q_x}}  + \left(\frac{d}{n}\right)^{1-2 / q_x} \log ^{4}\left(\frac{n}{d}\right)+ \left(\frac{d}{n}\right)^{1-2/\min\{q_x,4 \} } \right\}.
\end{equation}
{\color{black}With $\delta = (d/n^{1-2/q_x})^{q_x/8}$, the right hand side tends to zero if $d = o(n^{1-2/q_x})$.}
\item Under assumptions \ref{assump:DGP} and \ref{assump:finite-X-indep},
with probability at least $1-1/n-\delta$, 
\begin{equation}\label{eq:D_structure}
\mathcal{D}_{\Sigma} \leq C_{q_x}K_x^2 \left\{ \frac{ \sqrt{d} }{n} \sqrt{2 \log \left(\frac{n}{\delta} \right) }+   \frac{(d/\delta)^{2/q_x}}{n^{1-2/q_x}}   + \left(\frac{d}{n}\right)^{1-2 / q_x} \log ^{4}\left(\frac{n}{d}\right)+ \left(\frac{d}{n}\right)^{1-2/\min\{q_x,4 \} } \right\}.
\end{equation}
{\color{black}With $\delta = d^{1/2}/n^{ \min \{q_x/4,1 \} - 1/2}$, the right hand side tends to zero if $d = o(n)$.}
\end{itemize}
\end{lemma}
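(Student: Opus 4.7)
The plan is to reduce all three parts to bounding the deviation of the isotropized sample covariance from the identity. Set $Z_i := \Sigma^{-1/2} X_i$, so $\mathbb{E}[Z_i Z_i^\top] = I_d$ and
\[
\mathcal{D}_\Sigma ~=~ \Bigl\|\tfrac{1}{n}\sum_{i\in\mathcal{I}_1} Z_i Z_i^\top - I_d\Bigr\|_{\mathrm{op}}.
\]
For each setting, the distribution of $Z_i$ inherits the moment hypothesis on $\Sigma^{-1/2}X_i$ stated in \ref{assump:subGauss}, \ref{assump:finite-X}, or \ref{assump:finite-X-indep}, and the three bounds are in exact correspondence with three standard regimes in the covariance-estimation literature. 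The natural tool everywhere is a $1/4$-net $\mathcal{N}$ of the sphere $S^{d-1}$ with $|\mathcal{N}| \leq 9^d$, together with the identity $\|M\|_{\mathrm{op}} \leq 2\sup_{u\in\mathcal{N}} |u^\top M u|$ valid for symmetric $M$.

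For the sub-Gaussian case \ref{assump:subGauss}, the random variable $(u^\top Z_i)^2 - 1$ is centered sub-exponential with parameter $\lesssim K_x^2$, so Bernstein's inequality bounds $|n^{-1}\sum_i((u^\top Z_i)^2 - 1)|$ for each fixed $u$ by the displayed $\sqrt{(d+\log(1/\delta))/n} + (d+\log(1/\delta))/n$ rate after a union bound over $\mathcal{N}$; this is the standard argument (e.g., Theorem 4.6.1 in Vershynin's high-dimensional probability text) and no more is needed.

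For the finite-moment case \ref{assump:finite-X}, Bernstein is not available; the plan is a truncation decomposition $Z_i Z_i^\top = W_i + R_i$ with $W_i := Z_i Z_i^\top \mathbbm{1}\{\|Z_i\|_2 \leq \tau\}$ for a level $\tau$ to be optimized. The truncated sum has $\|W_i\|_{\mathrm{op}} \leq \tau^2$, so matrix Bernstein controls $\|n^{-1}\sum_i (W_i - \mathbb{E}W_i)\|_{\mathrm{op}}$ in terms of $\tau^2/n$ and the variance proxy $\tau^2 \mathbb{E}\|Z_i\|_2^2 = \tau^2 d$. The untruncated part is handled via Markov on $\mathbb{E}\|Z_i\|_2^{q_x} \leq (C_{q_x}K_x \sqrt{d})^{q_x}$ under \ref{assump:finite-X}, which gives $\mathbb{P}(\|Z_i\|_2 > \tau) \leq (C_{q_x}K_x\sqrt{d}/\tau)^{q_x}$; one then bounds both $\mathbb{E}\|R_i\|_{\mathrm{op}}$ and the high-probability event $\{\max_i \|Z_i\|_2 \leq \tau\}$ separately. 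Choosing $\tau \asymp K_x (d\,n/\delta)^{1/q_x}$ balances the truncation bias and the Bernstein variance; the first term $d\delta^{-2/q_x}/n^{1-2/q_x}$ is the Markov contribution, the $(d/n)^{1-2/q_x}\log^4(n/d)$ term is the Bernstein contribution refined by a peeling (iterated-truncation) argument that is essentially the one from Tikhomirov (and Mendelson--Paouris), and the $(d/n)^{1-2/\min\{q_x,4\}}$ term comes from bounding the fourth-moment weak-variance piece separately when $q_x < 4$.

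For the structural case \ref{assump:finite-X-indep}, the independence of the coordinates of $Z_i$ gives sharper tail behavior for the linear form $u^\top Z_i$: a Rosenthal/Fuk--Nagaev inequality yields $\mathbb{E}|u^\top Z_i|^{q_x} \leq C_{q_x} K_x^{q_x}$ with an implicit constant that permits a Gaussian-like tail for $(u^\top Z_i)^2 - 1$ on all but an $\varepsilon$-exceptional event. Plugging this into the $\varepsilon$-net argument yields a $\sqrt{d/n}\sqrt{\log(n/\delta)}$ leading term, while the heavy-tail contribution is bounded by the same truncation/matrix-Bernstein combination as in \ref{assump:finite-X} but with the improved sphere-net estimate. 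The expected main obstacle is the bookkeeping in the truncation argument of \ref{assump:finite-X}: matching the exact exponents and the $\log^4(n/d)$ factor requires a careful peeling decomposition (over dyadic shells $\|Z_i\|_2 \asymp 2^j \sqrt{d}$), and one must verify that the dependence on $\delta$ enters only through the $\delta^{-2/q_x}$ polynomial factor as stated rather than spreading into the Bernstein terms.
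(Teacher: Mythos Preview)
Your Part~1 is the same as the paper's (both defer to the Vershynin/Koltchinskii sub-Gaussian covariance bound).

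For Parts~2 and~3, however, the paper takes a much shorter route than what you outline. It does not redo any truncation/peeling argument. Instead, it invokes Tikhomirov's Theorem~1.1 as a black box, which already gives (with probability $\ge 1-1/n$)
\[
\mathcal{D}_\Sigma \;\le\; \frac{C_{q_x}}{n}\max_{1\le i\le n}\|Z_i\|_2^2 \;+\; C_{q_x}K_x^2\Bigl(\tfrac{d}{n}\Bigr)^{1-2/q_x}\log^4\bigl(\tfrac{n}{d}\bigr) \;+\; C_{q_x}K_x^2\Bigl(\tfrac{d}{n}\Bigr)^{1-2/\min\{q_x,4\}}.
\]
The only remaining work is to bound $\max_i\|Z_i\|_2^2$. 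Under~\ref{assump:finite-X} this is Markov plus a union bound over $i$, giving the $d\delta^{-2/q_x}/n^{1-2/q_x}$ term. Under~\ref{assump:finite-X-indep} the point is that $\|Z_i\|_2^2=\sum_j Z_{ij}^2$ is a sum of \emph{independent scalars}, so a scalar Fuk--Nagaev/Rosenthal inequality (Rio, or the weak-law bound of Chen for $q_x<4$) yields $\|Z_i\|_2^2 \le d + K_x^2\sqrt{d}\sqrt{2\log(n/\delta)} + C K_x^2 (nd/\delta)^{2/q_x}$, and dividing by $n$ produces exactly the first two terms in~\eqref{eq:D_structure}.

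This reveals a gap in your plan for Part~3. You locate the gain from~\ref{assump:finite-X-indep} in the linear form $u^\top Z_i$, but the moment bound $\mathbb{E}|u^\top Z_i|^{q_x}\le K_x^{q_x}$ is already assumption~\ref{assump:finite-X}; coordinate independence adds nothing there (consider $u=e_1$). The improvement under~\ref{assump:finite-X-indep} is in the \emph{quadratic} form $\|Z_i\|_2^2$, not the linear one. Moreover, an $\varepsilon$-net over $9^d$ directions combined with a polynomial tail cannot survive the exponential union bound, and even the sub-Gaussian piece of Fuk--Nagaev would only deliver a $\sqrt{d/n}$-scale term, not the $\sqrt{d}/n\cdot\sqrt{\log(n/\delta)}$ that appears in~\eqref{eq:D_structure}. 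That small term has nothing to do with an $\varepsilon$-net; it is precisely the $\sqrt{d}$-fluctuation of $\|Z_i\|_2^2$ around its mean $d$, divided by $n$. Your Part~2 plan is not wrong, just needlessly hard: you would be reproving Tikhomirov's theorem to get the $\log^4$ term you already acknowledge is his.
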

It is interesting to note that $\mathcal{D}_{\Sigma}$ can converge to zero even for $d = o(n)$ and heavy-tailed covariates when assumption~\ref{assump:finite-X-indep} is satisfied. The dimension requirement here matches the one for sub-Gaussian covariates and this stems from the independence assumed in~\ref{assump:finite-X-indep}. Without such an independence, the growth rate of dimension is dampened by the heavy-tailed nature of the covariate, i.e., we require $d = o(n^{1 - 2/q_x})$.


\begin{lemma}\label{lem:covariance-bound}
Let $n:=| \mathcal{I}_1|$ and fix any $\delta \in (0,1)$.
\begin{itemize}

\item Under assumptions \ref{assump:DGP}, \ref{assump:finite-Y}, and \ref{assump:subGauss}, with probability at least $1-\delta$,
\begin{equation}\label{eq:gammahat_subgaussian}
 \|\Sigma^{-1/2} (\widehat{\Gamma}_1 - \Gamma ) \|_2 ~\leq~ C_{q_y} K_x K_y   \left\{  \sqrt{ \frac{d+ \log( {1}/{ \delta } )}{ n } }+  \frac{\sqrt{d}\delta^{-1/q_y}}{ n ^{1-1/q_y} } \left( \log \bigl( \frac{  n  }{\delta} \bigr) \right)^{-1/2} \right\}.
\end{equation}
{\color{black}With $\delta = \sqrt{d/n}$, the right hand side tends to 0 if $d = o(n)$}

\item Under assumptions \ref{assump:DGP}, \ref{assump:finite-Y} and \ref{assump:finite-X}, with probability at least $1-\delta$,

\begin{equation}\label{eq:gammahat_moment}
\|\Sigma^{-1/2} (\widehat{\Gamma}_1 - \Gamma ) \|_2 ~\leq~ C_q  K_x K_y \left\{\sqrt{ \frac{d +\log (1/\delta) }{n}  } + \frac{  \sqrt{d}\delta^{-(q_x+q_y)/(q_x q_y)}}{n^{1 - (q_x+q_y) / q_x q_y}}\right\} .
\end{equation}
{\red{With $p:={q_x q_y}/{(q_x +q_y)}$, and $\delta=(d/n^{2-2/p})^{p/4}$, the right hand side tends to zero if $d=o(n^{2-2/p})$.}}
\end{itemize}
\end{lemma}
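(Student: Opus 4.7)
The plan is to bound $\|\Sigma^{-1/2}(\widehat{\Gamma}_1 - \Gamma)\|_2$ by combining (i) a sub-Gaussian-type concentration for the main fluctuation with (ii) a polynomial tail correction obtained by truncating $Y$ at a carefully chosen threshold. The overall architecture mirrors a Fuk--Nagaev-style decomposition but is carried out by hand so that the two terms in~\eqref{eq:gammahat_subgaussian}--\eqref{eq:gammahat_moment} appear transparently.

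First I would reduce the Euclidean norm to a supremum over an $\epsilon$-net of the unit sphere. Let $\mathcal{N}$ be a fixed $\tfrac{1}{2}$-net of $S^{d-1}$ with $|\mathcal{N}|\le 5^{d}$; then $\|\Sigma^{-1/2}(\widehat{\Gamma}_1 - \Gamma)\|_2 \le 2\max_{u\in\mathcal{N}}\, \bigl|\tfrac{1}{n}\sum_{i\in\mathcal{I}_1}(W_{u,i}-\mathbb{E}W_{u,i})\bigr|$ with $W_{u,i}:=u^{\top}\Sigma^{-1/2}X_iY_i$. This reduces the problem to a uniform scalar concentration bound, and the $5^d$ cardinality of $\mathcal{N}$ is what will produce the $d$ inside the square root in the leading term of the bound.

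Second, I introduce a truncation level $M>0$, to be optimized at the end, and split $Y_i = \tilde Y_i + R_i$ with $\tilde Y_i := Y_i \mathbbm{1}\{|Y_i|\le M\}$ and $R_i := Y_i\mathbbm{1}\{|Y_i|>M\}$. For the truncated contribution $W^{\mathrm{tr}}_{u,i}:=u^{\top}\Sigma^{-1/2}X_i\tilde Y_i$, I would derive a variance bound $\mathbb{E}[(W^{\mathrm{tr}}_{u,i})^2]\le \mathfrak{C}_{q_y} K_x^2K_y^2$ via H\"older with exponents $q_y/(q_y-2)$ and $q_y/2$, together with the sub-Gaussian moment growth of $u^{\top}\Sigma^{-1/2}X_i$ coming from \ref{assump:subGauss}. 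On the event where the $n$ sub-Gaussian samples $u^{\top}\Sigma^{-1/2}X_i$ are all of magnitude at most $K_x\sqrt{2\log(n|\mathcal{N}|/\delta)}$, we further have the a.s. bound $|W^{\mathrm{tr}}_{u,i}|\le M\cdot K_x\sqrt{2\log(n/\delta)}$. Classical Bernstein, followed by a union bound over $u\in\mathcal{N}$, then yields a uniform control of the form $\mathfrak{C} K_x K_y\sqrt{(d+\log(1/\delta))/n} + \mathfrak{C} K_x M\sqrt{\log(n/\delta)}(d+\log(1/\delta))/n$, which already accounts for the first summand in~\eqref{eq:gammahat_subgaussian}.

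Third I would control the untruncated tail. On the event $\mathcal{E}:=\{\max_{i\in\mathcal{I}_1}|Y_i|\le M\}$ the sum $\sum_i u^{\top}\Sigma^{-1/2}X_iR_i$ is identically zero, and Markov's inequality plus a union bound yields $\mathbb{P}(\mathcal{E}^c)\le nK_y^{q_y}/M^{q_y}$. Choosing $M=\mathfrak{C} K_y(n/\delta)^{1/q_y}/\sqrt{\log(n/\delta)}$ (so that $M\sqrt{\log(n/\delta)}/n$ has the shape demanded by the second term of~\eqref{eq:gammahat_subgaussian}) makes $\mathbb{P}(\mathcal{E}^c)\le \delta$ up to constants, and substituting this $M$ into the Bernstein term produces exactly the factor $\sqrt{d}\,\delta^{-1/q_y}/(n^{1-1/q_y}\sqrt{\log(n/\delta)})$ claimed in~\eqref{eq:gammahat_subgaussian}. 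The recentering bias $\|\mathbb{E}[\Sigma^{-1/2}XR]\|_2$ is handled by Cauchy--Schwarz against $\mathbb{E}|u^{\top}\Sigma^{-1/2}X|^2\le K_x^2$ and $\mathbb{E}[Y^2\mathbbm{1}\{|Y|>M\}]\le K_y^{q_y}/M^{q_y-2}$, which is of lower order.

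For~\eqref{eq:gammahat_moment} under~\ref{assump:finite-X} the strategy is identical, but the high-probability a.s. bound on $\max_{i,u}|u^{\top}\Sigma^{-1/2}X_i|$ now comes from the $q_x$-th moment assumption rather than from sub-Gaussianity. A union bound over $i\in\mathcal{I}_1$ and $u\in\mathcal{N}$ gives a bound of order $K_x(n|\mathcal{N}|/\delta)^{1/q_x}$, so the Bernstein envelope picks up an extra $\delta^{-1/q_x}$; combined with the $\delta^{-1/q_y}$ coming from the $Y$-truncation, this produces the compound exponent $\delta^{-(q_x+q_y)/(q_xq_y)}$ and the matching $n$-exponent in~\eqref{eq:gammahat_moment}. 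The main obstacle is not the concentration itself but the joint balancing of the two failure events (Bernstein success on the truncated part and the event $\mathcal{E}$) so that the optimal $M$ collapses the two contributions into the precise two-term form in the claim; the most delicate bookkeeping is the $q_y$-dependent constant in the variance computation, which must be controlled uniformly even as $q_y\downarrow 2$.
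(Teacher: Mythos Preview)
There is a genuine dimensional gap in your approach. When you scalarize via the $\tfrac12$-net $\mathcal{N}$ with $|\mathcal{N}|\le 5^{d}$, the union bound over directions forces the second-order Bernstein term to carry a full factor $d+\log(1/\delta)$, not $\sqrt{d}$. Concretely, your own Bernstein display reads $\mathfrak{C}K_xM\sqrt{\log(n/\delta)}\,(d+\log(1/\delta))/n$; substituting $M=\mathfrak{C}K_y(n/\delta)^{1/q_y}/\sqrt{\log(n/\delta)}$ gives
\[
\mathfrak{C}K_xK_y\,\frac{(d+\log(1/\delta))\,\delta^{-1/q_y}}{n^{1-1/q_y}},
\]
which is linear in $d$, not the claimed $\sqrt{d}$. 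Moreover, your envelope is internally inconsistent: you correctly write that on the good event $\max_i|u^{\top}\Sigma^{-1/2}X_i|\le K_x\sqrt{2\log(n|\mathcal{N}|/\delta)}\asymp K_x\sqrt{d+\log(n/\delta)}$, but in the next line the $|\mathcal{N}|$ silently disappears. Keeping it adds another $\sqrt{d}$ factor to the Bernstein envelope, so the honest second term is of order $d^{3/2}\delta^{-1/q_y}/n^{1-1/q_y}$, not $\sqrt{d}\,\delta^{-1/q_y}/n^{1-1/q_y}$. In the finite-moment case the situation is worse: a Markov/union bound over $u\in\mathcal{N}$ for $|u^{\top}\Sigma^{-1/2}X_i|$ yields an envelope of order $K_x(n|\mathcal{N}|/\delta)^{1/q_x}$, which is \emph{exponential} in $d$ and destroys the bound.

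The paper avoids this loss by never scalarizing. It applies the vector Fuk--Nagaev inequality of Einmahl and Li (2008, Theorem~3.1) directly to $\xi_i=\Sigma^{-1/2}(X_iY_i-\mathbb{E}[X_iY_i])/n$, obtaining a sub-Gaussian term governed by $\sigma^2=\sup_{\|\theta\|\le1}\sum_i\mathbb{E}(\theta^{\top}\xi_i)^2$ and a polynomial correction $t^{-s}\sum_i\mathbb{E}\|\xi_i\|_2^s$. The key point is that $\sum_i\mathbb{E}\|\xi_i\|_2^s\le d^{s/2}(K_xK_y)^s/n^{s-1}$ (via H\"older on $Y$ and the moment bound on projections of $\Sigma^{-1/2}X$), so after taking the $s$-th root the polynomial term contributes exactly $\sqrt{d}$, with no extra factor from a union bound over directions. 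Choosing $s$ close to $q_y$ (sub-Gaussian case) or $s=q_xq_y/(q_x+q_y)$ (moment case) then yields the stated exponents. Your truncation idea is morally the same as the Fuk--Nagaev split, but routing it through a net and scalar Bernstein is what costs the extra $d$; to recover the lemma as stated you need a vector-level concentration inequality.
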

Note that the dimension requirement for convergence to zero of $\|\Sigma^{-1/2}(\widehat{\Gamma}_1 - \Gamma)\|_2$ is weaker than that required for convergence of $\mathcal{D}_{\Sigma}$. In the case of heavy-tailed covariates and response, if $q_xq_y/(q_x + q_y) \ge 2$, then the dimension requirement is always $d = o(n)$. 
\section{Application: Conformalized Quantile Regression}
\label{appsec:application-conformalized-quantile-regression}
In the regression context, prediction sets centered at the conditional mean estimators are popular~\citep{lei2018distribution} but are not necessarily optimal even asymptotically because they are symmetric around the estimator and cannot account for conditional skewness of the response. Citing this reason,~\cite{romano2019conformalized} proposed conditional quantile based prediction sets. Formally, $\widehat{C}_k$ in Algorithm~\ref{alg:efficiency-first-conformal} is given by
\[
\widehat{C}_{k,\beta}^{(1)} := \{(x, y)\in\mathcal{X}\times\mathbb{R}:\,y\in[\widehat{q}_{\beta}^{(k)}(x) - T_{\alpha,k,\beta}^{(1)}, \widehat{q}_{1-\beta}^{(k)}(x) + T_{\alpha,k,\beta}^{(1)}]\},
\]
where $\widehat{q}_{\beta}^{(k)}(\cdot), k\in[K]$ represent estimators of the $\beta$-th conditional quantile function and $\widehat{q}_{1-\beta}^{(k)}(\cdot), k\in[K]$ represent estimators of the $(1-\beta)$-th conditional quantile function. The different $k$ here can represent different tuning parameters in the estimation of conditional quantiles. The prediction sets $\widehat{C}_{k,\beta}^{(1)}$ consider fixed width inflation/deflation of the set $[\widehat{q}_{\beta}(x), \widehat{q}_{1-\beta}(x)]$. For a possible improvement,~\cite{kivaranovic2020adaptive} and~\cite{sesia2020comparison} proposed alternatives that consider width inflation/deflation of the set $[\widehat{q}_{\beta}(x), \widehat{q}_{1-\beta}(x)]$. These alternatives are formally given by
\begin{align*}
\widehat{C}_{k,\beta}^{(2)} ~&:=~ \left\{(x, y)\in\mathcal{X}\times\mathbb{R}:\, \max\left\{\frac{\widehat{q}_{\beta}^{(k)}(x) - y}{\widehat{q}_{1/2}^{(k)}(x) - \widehat{q}_{\beta}^{(k)}(x)},\,\frac{y - \widehat{q}_{1-\beta}^{(k)}(x)}{\widehat{q}_{1-\beta}^{(k)}(x) - \widehat{q}_{1/2}^{(k)}(x)}\right\} \le T_{\alpha,k,\beta}^{(2)}\right\},\\
\widehat{C}_{k,\beta}^{(3)} ~&:=~ \left\{(x, y)\in\mathcal{X}\times\mathbb{R}:\,\max\left\{\frac{\widehat{q}_{\beta}^{(k)}(x) - y}{\widehat{q}_{1-\beta}^{(k)}(x) - \widehat{q}_{\beta}^{(k)}(x)},\,\frac{y - \widehat{q}_{1-\beta}^{(k)}(x)}{\widehat{q}_{1-\beta}^{(k)}(x) - \widehat{q}_{\beta}^{(k)}(x)}\right\} \le T_{\alpha,k,\beta}^{(3)}\right\}.
\end{align*}
\cite{sesia2020comparison} consider the comparison of these three prediction sets theoretically and empirically. Using EFCP and VFCP, we can take $\widehat{C}_k, k\in[K]$ to be the set of all prediction sets above with a grid of $\beta$ values. If we take a grid of size $10$ for $\beta\in(0, 1/2)$ and consider quantile regression estimators based on 100 different values of tuning parameters, then we get a total of $3000$ conformal predictions to choose from. The final prediction set would be of the form $\widehat{C}_{\widehat{k},\widehat{\beta}}^{(\widehat{j})}$. Note that $\log(3000)\approx 8$,  hence it does not adversely impact the slack in coverage and width shown in Table~\ref{tab:comparison-VFCP-EFCP}. We mention that the choice of $\beta$ and the number of trees $(k)$ in quantile random forests for $\widehat{q}_{\beta}^{(k)}(\cdot)$ is also discussed in~\citet[Section 6]{gupta2019nested}.

Note that in this setting, it is unclear how to apply cross-validation to choose tuning parameters (such as $\beta$) here. 
In the following, we will compare EFCP and VFCP in real as well as synthetic data settings. For our experiments, we use random forest quantile estimators from~\cite{meinshausen2006quantile} with 10 values for \texttt{mtry} from $d/10$ to $d$ (the dimension of covariates) and 4 values for \texttt{ntree} from 100 to 400. Finally, we use 10 values for $\beta$ as an equi-spaced sequence from $10^{-4}\alpha$ to $4\alpha$. Along with the three types of prediction sets above, we get $K = 1200$ prediction sets to choose from.

\medskip
\noindent\textbf{Real Data.} We use the six datasets from UCI repository: blog feedback, concrete strength, kernel performance, news popularity, protein structure, and superconductivity; see Appendix G of~\cite{gupta2019nested} for details about these datasets. In order to estimate the coverage and assess the variance in the obtained width, we create 100 independent versions of the datasets by drawing 1000 observations without replacement from each of the six datasets; 768 of these observations are used to construct EFCP and VFCP prediction regions, and the remaining 232 observations are used to estimate the coverage probability. This procedure is taken from Section 6 of~\cite{gupta2019nested}. 

\medskip
\noindent\textbf{Synthetic Data.} The generative model for synthetic data is similar to the one from Section 6.5 of~\cite{gupta2019nested}; also see~\cite{romano2019conformalized}. The data generating process is as follows: $X_i$ is generated from a multivariate student distribution $t_d(\nu, \Sigma)$ with $\nu = 3$ and $d \times d$ covariance matrix $\Sigma$ satisfying $\Sigma_{ij}=0.5$ if $i \neq j$ (equi correlation) and $\Sigma_{jj}=1$ with dimension $d=3$. Conditional on $X_i$, $Y_i$ is generated as 
\begin{equation}\label{eq:nonlinear_pois_fm}
 Y \sim \mathrm{Pois}\left(\sin ^{2}(X_1)+\cos^4(X_2)+0.01\right)+0.03 X_1 \epsilon_{1}+25 \mathbbm{1}_{\{u<0.01\}} \epsilon_{2},
\end{equation}
and each noise component of $\epsilon_k, k=1,2$, follows $t(\nu) \times (1+ \sqrt{ X_1^{2k}+X_2^{2k}})$ independently with $\nu=3$.

The performance of EFCP and VFCP based on conformalized quantile regression sets are presented in the two columns of Figure \ref{fig:cqr}, where the left column corresponds to the real data and the right column corresponds to the synthetic one.
\begin{figure}[!h]
    \centering
    \includegraphics[width=\textwidth]{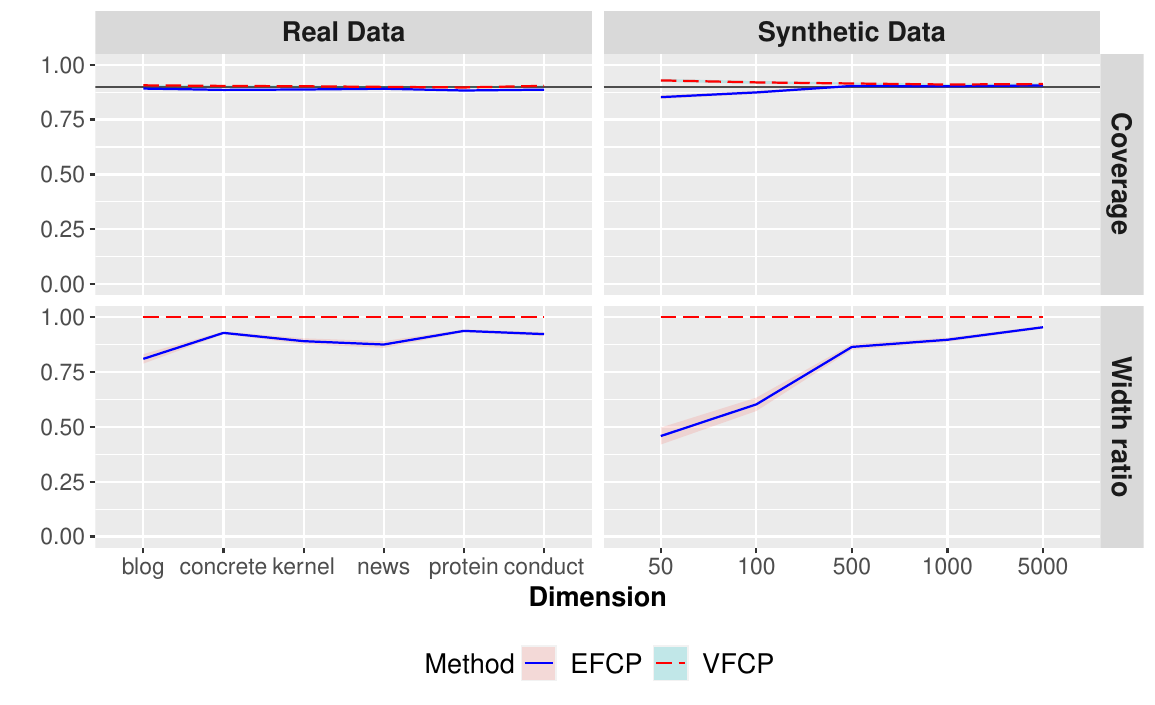}
    \caption{Coverage and width ratio of VFCP and EFCP for conformalized quantile regression on synthetic and real data. The standard deviation for both coverage and width are shown as a ribbon around the average value. Clearly, EFCP improves on VFCP in terms of width while maintaining coverage close to the nominal level of $0.9$.}
    \label{fig:cqr}
\end{figure}

Below, we note some observations from our simulations:
\begin{itemize}
    \item For the real data, VFCP and EFCP has near perfect valid coverage $0.9$. For synthetic data, VFCP maintains valid coverage while EFCP slightly undercovers attaining at worst coverage of $0.85$ across all sample sizes. The undercoverage happens only at smaller sample sizes which is expected from Theorem~\ref{thm:coverage-guarantee-minimum-width}.
    \item Compared with VFCP, EFCP has an improved efficiency (smaller average width) across real datasets and sample sizes. For the real data, EFCP improves on VFCP with 7--18\% smaller widths and on average EFCP has 10\% smaller width. Notice that the improvement of efficiency of EFCP is particularly noticeable with the ``blog" and ``news" datasets, where the dimension is relatively large (280 and 59 respectively) compared to the sample size 1000. For synthetic data, EFCP results in 50\% smaller width when the sample size is small. As the sample size becomes larger, the gap closes because VFCP becomes efficient. On average, EFCP improves the width of VFCP by 25\%. 
    \item Before computing the ratio of the widths, we find that the standard error of the widths of EFCP is, on average, over 50\% smaller than that of VFCP over 100 repetitions.
\end{itemize}
\section{Proof of Theorem~\ref{thm:coverage-guarantee-minimum-width}}\label{appsec:proof-coverage-guarantee-minimum-width}
Recall from step 1 of Algorithm~\ref{alg:efficiency-first-conformal} that $\mathcal{I}_1, \mathcal{I}_2$ represent the splits of $[n]$ and $\mathcal{D}_1, \mathcal{D}_2$ represent the corresponding datasets.
Recall that the nested sequences $\mathcal{F}_t^{(k)}$ are indexed by $t\in\mathcal{T}\subseteq\mathbb{R}$.
This implies that $t_k(Z_i), i\in\mathcal{I}_2$ are independent and identically distributed real-valued random variables. 
Corollary 1 of~\cite{massart1990tight} implies that for all $u \ge 0$,
\[
\mathbb{P}\left(\sup_{t\in\mathcal{T}}\left|\frac{1}{|\mathcal{I}_2|}\sum_{i\in\mathcal{I}_2} \mathbbm{1}\{t_k(Z_i)\le t\} - \mathbb{P}\left(t_k(Z_{N+1})\le t\big|\mathcal{D}_1, \mathcal{D}_2\right)\right| \ge \frac{u}{\sqrt{|\mathcal{I}_2|}}\,\bigg|\,\mathcal{D}_1\right) ~\le~ 2e^{-2u^2}.
\]
Applying the union bound over $k\in[K]$ implies that for all $u \ge 0$,
\begin{equation}\label{eq:talagrand-union-bound}
\mathbb{P}\left(\max_{k\in[K]}\sup_{t\in\mathcal{T}}\left|\frac{1}{|\mathcal{I}_2|}\sum_{i\in\mathcal{I}_2} \mathbbm{1}\{t_k(Z_i)\le t\} - \mathbb{P}\left(t_k(Z_{N+1})\le t\big|\mathcal{D}_1, \mathcal{D}_2\right)\right| \ge \frac{u}{\sqrt{|\mathcal{I}_2|}}\,\bigg|\,\mathcal{D}_1\right) ~\le~ 2Ke^{-2u^2}.
\end{equation}
For notational convenience, set
\[
W := \sqrt{|\mathcal{I}_2|}\max_{k\in[K]}\sup_{t\in\mathcal{T}}\left|\frac{1}{|\mathcal{I}_2|}\sum_{i\in\mathcal{I}_2} \mathbbm{1}\{t_k(Z_i)\le t\} - \mathbb{P}\left(t_k(Z_{N+1})\le t\big|\mathcal{D}_1,\mathcal{D}_2\right)\right|.
\]
It is clear that
\[
\frac{1}{|\mathcal{I}_2|}\sum_{i\in\mathcal{I}_2} \mathbbm{1}\{t_{\widehat{k}}(Z_i)\le T_{\alpha,\widehat{k}}\} - \mathbb{P}\left(t_{\widehat{k}}(Z_{N+1})\le T_{\alpha,\widehat{k}}\big|\mathcal{D}_1,\mathcal{D}_2\right) ~\le~ \frac{W}{\sqrt{|\mathcal{I}_2|}}.
\]
This implies that
\begin{equation}\label{eq:simple-implication}
    \mathbb{P}\left(t_{\widehat{k}}(Z_{N+1})\le T_{\alpha,\widehat{k}}\big|\mathcal{D}_1,\mathcal{D}_2\right) ~\ge~ \frac{1}{|\mathcal{I}_2|}\sum_{i\in\mathcal{I}_2} \mathbbm{1}\{t_{\widehat{k}}(Z_i)\le T_{\alpha,\widehat{k}}\} - \frac{W}{\sqrt{|\mathcal{I}_2|}}.
\end{equation}
Note that $\{t_{\widehat{k}}(Z_{N+1})\le T_{\alpha,\widehat{k}}\} = \{Z_{N+1}\in\widehat{C}_{\widehat{k}}\}$ and $(\widehat{k}, T_{\alpha,\widehat{k}})$ are measurable with respect to $\mathcal{D}_2$ so that the probability on the left hand side only integrates the randomness of $Z_{N+1}$.
Recall that by definition of $T_{\alpha,k}$ in step 3 of Algorithm~\ref{alg:efficiency-first-conformal}, we conclude that for all $k\in[K]$,
\begin{equation}\label{eq:empirical-lower-bound-frequency}
\frac{1}{|\mathcal{I}_2|}\sum_{i\in\mathcal{I}_2}\mathbbm{1}\{t_k(Z_i)\le T_{\alpha,k}\} ~=~ \frac{\lceil(1 + |\mathcal{I}_2|)(1 - \alpha)\rceil}{|\mathcal{I}_2|} ~\ge~ \left(1 + \frac{1}{|\mathcal{I}_2|}\right)(1 - \alpha).
\end{equation}
In particular, this holds for $\widehat{k}$. This implies that with probability 1,
\begin{equation}\label{eq:deterministic-union-bound}
\begin{split}
    \mathbb{P}\left(t_{\widehat{k}}(Z_{N+1})\le T_{\alpha,\widehat{k}}\big|\mathcal{D}_1, \mathcal{D}_2\right) ~&\ge~ \left(1 + \frac{1}{|\mathcal{I}_2|}\right)(1 - \alpha) - \frac{W}{\sqrt{|\mathcal{I}_2|}}.
\end{split}
\end{equation}
Using~\eqref{eq:talagrand-union-bound}, we can bound $\mathbb{E}[W|\mathcal{D}_1]$ as
\begin{equation}\label{eq:expectation-maximum-bound}
\begin{split}
\mathbb{E}[W|\mathcal{D}_1] &= \int_0^{\infty} \mathbb{P}(W \ge u|\mathcal{D}_1)du = \int_0^{\sqrt{\log(2K)/2}} \mathbb{P}(W \ge u|\mathcal{D}_1)du + \int_{\sqrt{\log(2K)/2}}^{\infty} \mathbb{P}(W \ge u|\mathcal{D}_1)du\\
&\le \sqrt{\log(2K)/2} + 2K\int_{\sqrt{\log(2K)/2}}^{\infty} e^{-2u^2}du = \sqrt{\log(2K)/2} + \frac{2K}{\sqrt{2}}\int_{\sqrt{\log(2K)}}^{\infty} e^{-t^2}dt\\
&\overset{(a)}{\le} \sqrt{\log(2K)/2} + \frac{\sqrt{2}K\exp(-\log(2K))}{\sqrt{\log(2K)} + \sqrt{\log(2K) + 4/\pi}} \le \sqrt{\log(2K)/2} + 1/3.
\end{split}
\end{equation}
Inequality (a) follows from Formula 7.1.13 of~\cite{abramowitz1964handbook}.
Taking expectations on both sides of~\eqref{eq:deterministic-union-bound} and using~\eqref{eq:expectation-maximum-bound}, we obtain
\[
\mathbb{P}\left(Z_{N+1}\in\widehat{C}_{\widehat{k}}\big|\mathcal{D}_1\right) \ge \left(1 + \frac{1}{|\mathcal{I}_2|}\right)(1 - \alpha) - \frac{\sqrt{\log(2K)/2} + 1/3}{\sqrt{|\mathcal{I}_2|}}.
\]
This implies~\eqref{eq:probability-unconditional-guarantee}.
Further, combining inequality~\eqref{eq:talagrand-union-bound} with $u = \sqrt{\log(2K/\delta)/2}$ with~\eqref{eq:deterministic-union-bound} yields~\eqref{eq:probability-conditional-guarantee}.

\section{Proof of Theorem~\ref{thm:multiple-splits-conformal}}\label{appsec:proof-of-thm-multiple-splits}
{
\begin{proof}
For any set $C$, let $\mu(C) = \mathbb{P}(Z_{N+1}\in C)$. For a random set $\widehat{C}$, $\mu(\widehat{C})$ is a random variable. The split conformal prediction set satisfies (by Corollary 1 of~\cite{massart1990tight})
\[
\mathbb{P}\left(\mu(\widehat{C}_k) \ge \left(1 + \frac{2}{N}\right)(1 - \alpha) - \frac{u}{\sqrt{N/2}}\right) \le 2e^{-2u^2}\quad\mbox{for all}\quad u\ge0.
\]
Now applying the union bound, we get
\[
\mathbb{P}\left(\min_{1\le k\le K}\mu(\widehat{C}_k) \ge \left(1 + \frac{2}{N}\right)(1 - \alpha) - \frac{u}{\sqrt{N/2}}\right) \le 2Ke^{-2u^2}\quad\mbox{for all}\quad u\ge0.
\]
This implies that
\[
\mathbb{E}\left[\min_{1\le k\le K}\mu(\widehat{C}_k)\right] \ge \left(1 + \frac{2}{N}\right)(1 - \alpha) - \frac{\sqrt{\log(2K)/2} + 1/3}{N/2}.
\]
(See the proof of Theorem 1 in Section~\ref{appsec:proof-coverage-guarantee-minimum-width} for details on converting the probability tail bound to expectation bound.) Now $\widehat{k}\in\{1, 2, \ldots, K\}$, and hence,
\[
\mathbb{P}(Z_{N+1}\in\widehat{C}_{\widehat{k}}) \ge \mathbb{E}[\mu(\widehat{C}_{\widehat{k}})] \ge \mathbb{E}\left[\min_{1\le k\le K}\mu(\widehat{C}_k)\right] \ge \left(1 + \frac{2}{N}\right)(1 - \alpha) - \frac{\sqrt{\log(2K)/2} + 1/3}{N/2}.
\]
This completes the proof.
\end{proof}
\section{Proof of Conditional Coverage of CQR-EFCP}\label{appsec:proof-of-conditional-coverage-EFCP-CQR}
Observe that
\begin{align*}
&\mathbb{P}\left((X_{N+1}, Y_{N+1})\in\widehat{C}_{N,\alpha}^{\mathrm{EFCP}}\bigg|X_{N+1} = x, \mathcal{D}\right)\\
&= \mathbb{P}\left(Y_{N+1} \le q^*_{1-\alpha/2}(x) + \{\widehat{q}_{1-\alpha/2}^{(\widehat{k})}(x) - q^*_{1-\alpha/2}(x)\} + T_{\alpha,\widehat{k}}\bigg|X_{N+1} = x, \mathcal{D}\right)\\
&\quad- \mathbb{P}\left(Y_{N+1} \le q^*_{\alpha/2}(x) + \{\widehat{q}_{\alpha/2}^{(\widehat{k})}(x) - q^*_{\alpha/2}(x)\} -T_{\alpha,\widehat{k}}\bigg|X_{N+1} = x, \mathcal{D}\right).
\end{align*}
Using the assumption of bounded conditional density of $Y$ given $X$, we get
\begin{equation}\label{eq:error-conditional-coverage}
\begin{split}
&\sup_{x}\left|\mathbb{P}\left((X_{N+1}, Y_{N+1})\in\widehat{C}_{N,\alpha}^{\mathrm{EFCP}}\bigg|X_{N+1} = x, \mathcal{D}\right) - (1 - \alpha)\right|\\ 
&\quad\le M\left\{\max_{1\le k\le K}\|\widehat{q}_{\alpha/2}^{(k)} - q^*_{\alpha/2}\|_{\infty} + \max_{1\le k\le K}\|\widehat{q}_{1-\alpha/2}^{(k)} - q^*_{1-\alpha/2}\|_{\infty} + 2\max_{1\le k\le K}|T_{\alpha,k}|\right\}.
\end{split}
\end{equation}
Hence, to show that the EFCP prediction set has asymptotic conditional coverage, it suffices to show that $\max_{1\le k\le K}|T_{\alpha,k}| = o_p(1)$ under the assumption that the quantile estimators are uniformly consistent. Following the same argument as above and using the DKW inequality as in the proof of Theorem~\ref{thm:coverage-guarantee-minimum-width}, it follows that conditional on $\mathcal{D}_1$, with probability at least $1 - \delta$,
\begin{align*}
&\max_{1\le k\le K}\sup_{t\in\mathbb{R}}\left|\frac{1}{|\mathcal{I}_2|}\sum_{i\in\mathcal{I}_2} \mathbbm{1}\{t_k(X_i, Y_i) \le t\} - F(t)\right|\\ &\le \sqrt{\frac{\log(2K/\delta)}{2|\mathcal{I}_2|}} + \max_{1\le k\le K}\|\widehat{q}_{\alpha/2}^{(k)} - q^*_{\alpha/2}\|_{\infty} + \max_{1\le k\le K}\|\widehat{q}_{1-\alpha/2}^{(k)} - q^*_{1-\alpha/2}\|_{\infty},
\end{align*}
where $F(t) = \mathbb{P}(t^*(X, Y) \le t)$ with $t^*(x, y) := \max\{\widehat{q}_{\alpha/2}^{*}(x) - y, y - \widehat{q}_{1-\alpha/2}^*(x)\}$.
Taking $t = T_{\alpha,k}$, we get
\[
\max_{1\le k\le K}|(1 - \alpha) - F(T_{\alpha,k})| \le \sqrt{\frac{\log(2K/\delta)}{2|\mathcal{I}_2|}} + \max_{1\le k\le K}\|\widehat{q}_{\alpha/2}^{(k)} - q^*_{\alpha/2}\|_{\infty} + \max_{1\le k\le K}\|\widehat{q}_{1-\alpha/2}^{(k)} - q^*_{1-\alpha/2}\|_{\infty}.
\]
Under the assumption that the Lebesgue density of $t^*(X, Y)$ is bounded away from zero in a neighborhood of $0$ by $r$, this implies that
\[
\max_{1\le k\le K}|T_{\alpha,k}| \le \frac{1}{r}\left[\sqrt{\frac{\log(2K/\delta)}{2|\mathcal{I}_2|}} + \max_{1\le k\le K}\|\widehat{q}_{\alpha/2}^{(k)} - q^*_{\alpha/2}\|_{\infty} + \max_{1\le k\le K}\|\widehat{q}_{1-\alpha/2}^{(k)} - q^*_{1-\alpha/2}\|_{\infty}\right].
\]
Combining this with~\eqref{eq:error-conditional-coverage}, we conclude that with probability at least $1 - \delta$,
\begin{align*}
&\sup_{x}\left|\mathbb{P}\left((X_{N+1}, Y_{N+1})\in\widehat{C}_{N,\alpha}^{\mathrm{EFCP}}\bigg|X_{N+1} = x, \mathcal{D}\right) - (1 - \alpha)\right|\\ 
&\quad\le M(1 + 1/r)\left[\max_{1\le k\le K}\|\widehat{q}_{\alpha/2}^{(k)} - q^*_{\alpha/2}\|_{\infty} + \max_{1\le k\le K}\|\widehat{q}_{1-\alpha/2}^{(k)} - q^*_{1-\alpha/2}\|_{\infty}\right] + \frac{1}{r}\sqrt{\frac{\log(2K/\delta)}{2|\mathcal{I}_2|}}.
\end{align*}
Hence, the result follows using~\eqref{eq:uniform-quantile-consistency}.
}
\section{Proof of Theorem~\ref{thm:empirical-oracle-inequality-general-VFCP}}\label{appsec:proof-thm-empirical-oracle-inequality-general-VFCP}
The continuity of $t\mapsto\mathcal{F}_t^{(k)}$ implies that
\[
\widehat{C}_k = \mathcal{F}_{T_{\alpha,k}}^{(k)},\quad\mbox{and}\quad \widehat{C}_k^* = \mathcal{F}_{T^*_{\alpha,k}}^{(k)}\quad\Rightarrow\quad \widehat{C}_{N,\alpha}^{\mathrm{EFCP}} = \mathcal{F}_{T_{\alpha,\widehat{k}}}^{(\widehat{k})}.
\]
By definition of $\widehat{k}$ and assumption~\ref{eq:width-continuity},
\begin{align*}
\mbox{Width}\left(\widehat{C}_{N,\alpha}^{\mathrm{VFCP}}\right) = \mbox{Width}\left(\mathcal{F}_{T_{\alpha,\widehat{k}}^*}^{(\widehat{k})}\right) &\le \mbox{Width}\left(\mathcal{F}_{T_{\alpha,\widehat{k}}}^{(\widehat{k})}\right) + L_W|T_{\alpha,\widehat{k}}^* - T_{\alpha,\widehat{k}}|\\
&= \min_{1\le k\le K}\mbox{Width}\left(\mathcal{F}_{T_{\alpha,k}}^{(k)}\right) + L_W|T_{\alpha,\widehat{k}}^* - T_{\alpha,\widehat{k}}|\\
&= \min_{1\le k\le K}\mbox{Width}\left(\widehat{C}_k\right) + L_W|T_{\alpha,\widehat{k}}^* - T_{\alpha,\widehat{k}}|.
\end{align*}
Furthermore, $$\mbox{Width}(\mathcal{F}_{T_{\alpha,k}}^{(k)}) \le \mbox{Width}(\mathcal{F}_{\widehat{Q}_{\alpha,k}}^{(k)}) + L_W|T_{\alpha,k} - \widehat{Q}_{\alpha,k}|.$$
Therefore,
\begin{equation}\label{eq:penultimate-general-oracle-VFCP}
    \begin{split}
        \mbox{Width}\left(\widehat{C}_{N,\alpha}^{\mathrm{VFCP}}\right) &\le \min_{1\le k\le K}\mbox{Width}\left(\widehat{C}_k\right) + L_W\max_{k\in[K]}|T_{\alpha,k}^* - T_{\alpha,k}|\\
        &\le \min_{1\le k\le K}\mbox{Width}(\mathcal{F}_{\widehat{Q}_{\alpha,k}}^{(k)}) + L_W\max_{k\in[K]}|T_{\alpha,k}^* - T_{\alpha,k}| + L_W\max_{k\in[K]}|T_{\alpha,k} - \widehat{Q}_{\alpha,k}|.
    \end{split}
\end{equation}
Hence, to prove the result it suffices to bound $|T^*_{\alpha,k} - \widehat{Q}_{\alpha,k}|$ and $|T_{\alpha,k} - \widehat{Q}_{\alpha,k}|$ for all $k\in[K]$. Recall that $\widehat{Q}_k = F_k^{-1}(1-\alpha)$ and that $T_{\alpha,k}$ is the $\widehat{F}_{k,2}^{-1}(\lceil(1-\alpha)(1 + |\mathcal{I}_2|)\rceil/|\mathcal{I}_2|)$, where $\widehat{F}_{k,2}(s) = |\mathcal{I}_2|^{-1}\sum_{i\in\mathcal{I}_2} \mathbbm{1}\{t_k(Z_i) \le s\}$. Similarly, $\widehat{F}_{k,3}(\cdot)$ and $T_{\alpha,k}^*$ can be defined. By Corollary 1 of~\cite{massart1990tight}, it follows that
\[
\mathbb{P}\left(\sup_{s\in\mathbb{R}}|\widehat{F}_{k,2}(s) - F_k(s)| \ge \sqrt{\frac{\log(2/\delta)}{2|\mathcal{I}_2|}}\,\bigg|\,\mathcal{D}_1\right) \le \delta.
\]
Employing a union bound, we obtain
\begin{equation}\label{eq:split-2-ecdf-massart}
\mathbb{P}\left(\max_{k\in[K]}\sup_{s\in\mathbb{R}}|\widehat{F}_{k,2}(s) - F_k(s)| \ge \sqrt{\frac{\log(2K/\delta)}{2|\mathcal{I}_2|}}\,\bigg|\,\mathcal{D}_1\right) \le \delta.
\end{equation}
Similarly,
\begin{equation}\label{eq:split-3-ecdf-massart}
\mathbb{P}\left(\max_{k\in[K]}\sup_{s\in\mathbb{R}}|\widehat{F}_{k,3}(s) - F_k(s)| \ge \sqrt{\frac{\log(2K/\delta)}{2|\mathcal{I}_3|}}\,\bigg|\,\mathcal{D}_1\right) \le \delta.
\end{equation}
Combining~\eqref{eq:split-2-ecdf-massart},~\eqref{eq:split-3-ecdf-massart} with Proposition~\ref{prop:holder}, we conclude with probability at least $1 - \delta$, the following inequalities hold true simultaneously:
\begin{equation}\label{eq:T-T-star-VFCP}
\begin{split}
\left|T_{\alpha,k} - F_k^{-1}\left(\frac{\lceil(1-\alpha)(1+|\mathcal{I}_2|)\rceil}{|\mathcal{I}_2|}\right)\right| &\le L_k\left(\frac{\log(4K/\delta)}{|\mathcal{I}_2|}\right)^{\gamma/2},\quad\mbox{for all}\quad k\in[K],\\
\left|T_{\alpha,k}^* - F_k^{-1}\left(\frac{\lceil(1-\alpha)(1+|\mathcal{I}_3|)\rceil}{|\mathcal{I}_3|}\right)\right| &\le L_k\left(\frac{\log(4K/\delta)}{|\mathcal{I}_3|}\right)^{\gamma/2},\quad\mbox{for all}\quad k\in[K].\\
\end{split}
\end{equation}
Furthermore, an application of~\ref{eq:Holder-VFCP} yields
\begin{equation}\label{eq:conformal-to-non-conformal-VFCP}
\left|\widehat{Q}_{\alpha,k} - F_k^{-1}\left(\frac{\lceil(1-\alpha)(1+|\mathcal{I}_2|)\rceil}{|\mathcal{I}_2|}\right)\right| \le L_k\left(\frac{2}{|\mathcal{I}_2|}\right)^{\gamma},\quad\mbox{for all}\quad k\in[K].
\end{equation}
Similar inequality holds with $\mathcal{I}_3$. Combining~\eqref{eq:T-T-star-VFCP} and~\eqref{eq:conformal-to-non-conformal-VFCP} with triangle inequality in~\eqref{eq:penultimate-general-oracle-VFCP}, we obtain the result. 

{
\section{Proof of Proposition~\ref{prop:validity-EFCP-fixed-width}}\label{appsec:proof-of-prop-validity-EFCP-fixed-width}
Observe that for $\widehat{C}_{\alpha}^{\texttt{EF-lin}}$, $\mathcal{D}_1 = \mathcal{D} = \{(X_i, Y_i), 1\le i\le n\}$. Hence, 
\begin{align*}
\mathbb{P}\left((X_{N+1}, Y_{N+1})\in\widehat{C}_{\alpha}^{\texttt{EF-lin}}\big|\mathcal{D}_1\right) &= F_{\widehat{\theta}}(T_{\alpha,\widehat{\theta}})\\ 
&\ge \frac{1}{N}\sum_{i=1}^n \mathbbm{1}\{|Y_i - X_i^{\top}\widehat{\theta}| \le T_{\alpha,\widehat{\theta}}\} - R([N])\\
&\ge \frac{\lceil(1 + N)(1 - \alpha)\rceil}{N} - R([N]), 
\end{align*}
which proves the first statement; the inequalities here follow from the definition of $T_{\alpha,\widehat{\theta}}$ and that of $R([N])$. The second statement about $R([N])$ follows from Proposition~\ref{prop:R1-bound}.
}
\section{Proof of Theorems~\ref{thm:general-oracle-final} and~\ref{thm:general-oracle-VFCP}}\label{sec:Proof of Theorem general-oracle-final}
\subsection{Proof of Theorem~\ref{thm:general-oracle-final}}
Recall that
\[
\mathrm{Width}(\widehat{C}_{\alpha}^{\texttt{EF-lin}}) = 2T_{\alpha, \widehat{\theta}} = 2\min_{\theta\in\Theta}T_{\alpha,\theta} \le 2T_{\alpha,\theta^*}.
\]
It suffices to bound $T_{\alpha,\theta^*}$ in terms of the $Q_{\alpha,\theta^*}$. By DKW inequality, we have with probability at least $1 - \delta$,
\[
\sup_{t\ge0}\left|\frac{1}{N}\sum_{i=1}^N \mathbbm{1}\{|Y_i - X_i^{\top}\theta^*| \le t\} - F_{\theta^*}(t)\right| ~\le~ \sqrt{\frac{\log(2/\delta)}{2N}}.
\]
Taking $t = F_{\theta^*}^{-1}(1 - \alpha + \sqrt{\log(2/\delta)/(2N)})$, we get
\[
T_{\alpha,\theta^*} \le F_{\theta^*}^{-1}(1 - \alpha + \sqrt{\log(2/\delta)/(2N)}).
\]
Now, assumption~\ref{assump:general-holder-EFCP} implies
\[
T_{\alpha,\theta^*} \le F_{\theta^*}^{-1}(1 - \alpha) + L_{\theta^*}\left(\frac{\log(2/\delta)}{2N}\right)^{\gamma/2} = Q_{\alpha,\theta^*} + L_{\theta^*}\left(\frac{\log(2/\delta)}{2N}\right)^{\gamma/2}.
\]
The definition of $\theta^*$ implies the result.
\subsection{Proof of Theorem~\ref{thm:general-oracle-VFCP}}
We start off by defining some quantiles:
\begin{equation}\label{def:general-q}
\begin{cases}
T_{\alpha,\theta} ~:=~ \lceil(1 + |\mathcal{I}_1|)(1 - \alpha)\rceil\mbox{-th quantile of } |\mathcal{I}_1|^{-1}\sum_{i \in \mathcal{I}_1} \mathbbm{1}{ \{ |Y_i - \theta^\top X_i| \leq t \}};\\
T_{\alpha,\theta}^* ~:=~ \lceil(1 + |\mathcal{I}_2|)(1 - \alpha)\rceil\mbox{-th quantile of } |\mathcal{I}_2| ^{-1}\sum_{i \in \mathcal{I}_2} \mathbbm{1}{ \{ |Y_i - \theta^\top X_i| \leq t \}};\\
Q_{\alpha,\theta} ~:=~ (1 - \alpha)\mbox{-th quantile of } F_{\theta} (t).
\end{cases}
\end{equation}
By definition of $R(\mathcal{I}_1)$ and $R(\mathcal{I}_2)$, we have that
\[
\sup_{\theta\in\mathbb{R}^d}\sup_{t\ge0}\left|\frac{1}{|\mathcal{I}_1|}\sum_{i\in\mathcal{I}_1}\mathbbm{1}\{|Y_i - X_i^{\top}\theta| \le t\} - F_{\theta}(t)\right| \le R(\mathcal{I}_1).
\]
Similar inequality holds true for $\mathcal{I}_2$.
Applying Proposition~\ref{prop:holder}, we conclude that
\[
\left|T_{\alpha,\theta} - F_{\theta}^{-1}\left(\frac{\lceil(1 + |\mathcal{I}_1|)(1 - \alpha)\rceil}{|\mathcal{I}_1|}\right)\right| \le L_{\theta}R(\mathcal{I}_1)^{\gamma}\quad\mbox{for all}\quad \theta\in\mathbb{R}^d. 
\]
Because $Q_{\alpha,\theta} = F_{\theta}^{-1}(1 - \alpha)$, we get under~\ref{assump:general-holder-VFCP},
\[
\left|Q_{\alpha,\theta} - F_{\theta}^{-1}\left(\frac{\lceil(1 + |\mathcal{I}_1|)(1 - \alpha)\rceil}{|\mathcal{I}_1|}\right)\right| \le L_{\theta}\left(\frac{2}{|\mathcal{I}_1|}\right)^{\gamma}.
\]
Combining both inequalities above, we obtain
\begin{equation}\label{eq:I1-quantile-bound}
\left|T_{\alpha,\theta} - Q_{\alpha,\theta}\right| ~\le~ L_{\theta}\left[R(\mathcal{I}_1)^{\gamma} + \left(\frac{2}{|\mathcal{I}_1|}\right)^{\gamma}\right],\quad\mbox{for all}\quad \theta\in\mathbb{R}^d.
\end{equation}
Similarly, 
\begin{equation}\label{eq:I2-quantile-bound}
\left|T_{\alpha,\theta}^* - Q_{\alpha,\theta}\right| ~\le~ L_{\theta}\left[R(\mathcal{I}_2)^{\gamma} + \left(\frac{2}{|\mathcal{I}_2|}\right)^{\gamma}\right],\quad\mbox{for all}\quad \theta\in\mathbb{R}^d.
\end{equation}
Recall that $\widehat{\theta}$ in Algorithm~\ref{alg:general-conformal} satisfies
\[
T_{\alpha,\widehat{\theta}} = \min_{\theta\in\Theta}T_{\alpha,\theta},
\]
which coupled with~\eqref{eq:I1-quantile-bound} implies that
\begin{equation}\label{eq:second-last-inequality-empirical-oracle}
\begin{split}
T_{\alpha,\widehat{\theta}} ~=~ \min_{\theta\in\Theta}T_{\alpha,\theta} ~&\le~ \min_{\theta\in\Theta}\left\{Q_{\alpha,\theta} + L_{\theta}\left[R(\mathcal{I}_1)^{\gamma} + \left(\frac{2}{|\mathcal{I}_1|}\right)^{\gamma}\right]\right\}\\ ~&\le~ \min_{\theta\in\Theta}Q_{\alpha,\theta} + L_{\Theta}\left[R(\mathcal{I}_1)^{\gamma} + \left(\frac{2}{|\mathcal{I}_1|}\right)^{\gamma}\right].
\end{split}
\end{equation}
Inequality~\eqref{eq:I1-quantile-bound} also implies that
\[
Q_{\alpha,\widehat{\theta}} ~\le~ T_{\alpha,\widehat{\theta}} +  L_{\widehat{\theta}}\left[R(\mathcal{I}_1)^{\gamma} + \left(\frac{2}{|\mathcal{I}_1|}\right)^{\gamma}\right] ~\le~ T_{\alpha,\widehat{\theta}} +  L_{\Theta}\left[R(\mathcal{I}_1)^{\gamma} + \left(\frac{2}{|\mathcal{I}_1|}\right)^{\gamma}\right]. 
\]
Combining the inequalities above with~\eqref{eq:I2-quantile-bound} for $\theta = \widehat{\theta}$ yields
\begin{align*}
T_{\alpha,\widehat{\theta}}^* &\le~ Q_{\alpha,\widehat{\theta}} + L_{\widehat{\theta}}\left[R(\mathcal{I}_2)^{\gamma} + \left(\frac{2}{|\mathcal{I}_2|}\right)^{\gamma}\right]\\ 
&\le~ Q_{\alpha,\widehat{\theta}} + L_{\Theta}\left[R(\mathcal{I}_2)^{\gamma} + \left(\frac{2}{|\mathcal{I}_2|}\right)^{\gamma}\right]\\
&\le~ T_{\alpha,\widehat{\theta}} ~+~  L_{\Theta}\left[R(\mathcal{I}_1)^{\gamma} + \left(\frac{2}{|\mathcal{I}_1|}\right)^{\gamma}\right] + L_{\Theta}\left[R(\mathcal{I}_2)^{\gamma} + \left(\frac{2}{|\mathcal{I}_2|}\right)^{\gamma}\right]\\
&\le~ \min_{\theta\in\Theta}Q_{\alpha,\theta} + 2L_{\Theta}\left[R(\mathcal{I}_1)^{\gamma} + \left(\frac{2}{|\mathcal{I}_1|}\right)^{\gamma}\right] + L_{\Theta}\left[R(\mathcal{I}_2)^{\gamma} + \left(\frac{2}{|\mathcal{I}_2|}\right)^{\gamma}\right].
\end{align*}
Hence, inequality~\eqref{eq:general-final_quantile_strong} holds true. The penultimate inequality here also proves that the oracle inequality holds true with $Q_{\alpha,\theta}$ replaced by $T_{\alpha,\theta}$. 
\section{Proof of Theorem~\ref{thm:width-efficiency-conditional-coverage}}\label{appsec:proof-of-width-efficiency}
Recall that 
\[
\mbox{Width}(\widehat{C}_{\alpha}^{\texttt{agg}}) = 2T_{\alpha,\widehat{\theta}},
\]
and
by the definition of $T_{\alpha,\widehat{\theta}}$, we know
\[
T_{\alpha,\widehat{\theta}} = \min_{\theta\in\Theta}T_{\alpha,\theta}.
\]
Let $\theta^*\in\Theta$ be any vector such that
\[
\|\widehat{\mu}_{\theta^*} - \mu_0\|_2^2 \le \inf_{\theta\in\Theta}\|\widehat{\mu}_{\theta} - \mu_0\|_2^2 + \frac{1}{|\mathcal{I}_2|}.
\]
By definition of $T_{\alpha,\widehat{\theta}}$, we know
\[
T_{\alpha,\widehat{\theta}} = \min_{\theta\in\Theta}T_{\alpha,\theta} \le T_{\alpha,\theta^*}.
\]
Note that $2T_{\alpha,\theta^*}$ is the width of the split conformal prediction set computed with $\widehat{\mu}_{\theta^*}(\cdot)$ as the non-parametric regression method. By DKW inequality, we have
\[
\mathbb{P}\left(\sup_{t\ge0}\left|\frac{1}{|\mathcal{I}_2|}\sum_{i\in\mathcal{I}_2}\mathbbm{1}\{|Y_i - \widehat{\mu}_{\theta^*}(X_i)| \le t\} - \mathbb{P}(|Y - \widehat{\mu}_{\theta^*}(X)| \le t|\mathcal{D}_1)\right| \ge \sqrt{\frac{\log(2/\delta)}{2|\mathcal{I}_2|}}\bigg|\mathcal{D}_1\right) \le \delta.
\]
Following the proof of Theorem 3.1 of~\cite{lei2018distribution}, we obtain
\begin{align*}
&\sup_{t\ge0}\left|\mathbb{P}(|Y - \widehat{\mu}_{\theta^*}(X)| \le t|\mathcal{D}_1) - (2F(t) - 1)\right| \le (M/2)\mathbb{E}\left[(\widehat{\mu}_{\theta^*}(X) - \mu_0(X))^2|\mathcal{D}_1\right].
\end{align*}
Therefore, with probability at least $1 - \delta$ (conditional on $\mathcal{D}_1$),
\begin{equation}\label{eq:uniform-consistency-aggregation}
\sup_{t\ge0}\left|\frac{1}{|\mathcal{I}_2|}\sum_{i\in\mathcal{I}_2}\mathbbm{1}\{|Y_i - \widehat{\mu}_{\theta^*}(X_i)| \le t\} - (2F(t) - 1)\right| \le (M/2)\mathbb{E}[(\widehat{\mu}_{\theta^*}(X) - \mu_0(X))^2|\mathcal{D}_1] + \sqrt{\frac{\log(2/\delta)}{2|\mathcal{I}_2|}}.
\end{equation}
Now following the steps in the proof of Theorem~\ref{thm:general-oracle-final}, proves~\eqref{eq:oracle-width-efficiency-aggregation}.

Observe that the conditional coverage under~\ref{assump:symmetric-error-dist} and~\ref{assump:lower-bound-differentiable} satisfies
\begin{align*}
\mathrm{CC}_{\alpha} &=\mathbb{P}\left((X_{N+1}, Y_{N+1})\in\widehat{C}_{\alpha}^{\texttt{agg}}\big|X_{N+1} = x, \mathcal{D}_1\right)\\ 
&= \mathbb{P}\left(\sum_{\ell = 1}^L \widehat{\theta}_{\ell}\widehat{\mu}_{\ell}(x) - T_{\alpha,\widehat{\theta}} \le Y_{N+1} \le \sum_{\ell = 1}^L \widehat{\theta}_{\ell}\widehat{\mu}_{\ell}(x) + T_{\alpha,\widehat{\theta}}\bigg|X_{N+1} = x, \mathcal{D}_1\right)\\
&= \mathbb{P}\left(\sum_{\ell = 1}^L \widehat{\theta}_{\ell}(\widehat{\mu}_{\ell}(x) - \mu_0(x)) - T_{\alpha,\widehat{\theta}} \le \xi_{N+1} \le \sum_{\ell = 1}^L \widehat{\theta}_{\ell}(\widehat{\mu}_{\ell}(x) - \mu_0(x)) + T_{\alpha,\widehat{\theta}}\bigg|\mathcal{D}_1\right)\\
&= F\left(\sum_{\ell = 1}^L \widehat{\theta}_{\ell}(\widehat{\mu}_{\ell}(x) - \mu_0(x)) + T_{\alpha,\widehat{\theta}}\right) - F\left(\sum_{\ell = 1}^L \widehat{\theta}_{\ell}(\widehat{\mu}_{\ell}(x) - \mu_0(x)) - T_{\alpha,\widehat{\theta}}\right).
\end{align*}
Under~\ref{assump:lower-bound-differentiable}, 
\begin{align*}
\left|F\left(\sum_{\ell = 1}^L \widehat{\theta}_{\ell}(\widehat{\mu}_{\ell}(x) - \mu_0(x)) \pm T_{\alpha,\widehat{\theta}}\right) - F(\pm T_{\alpha,\widehat{\theta}})\right| &\le M\left|\sum_{\ell = 1}^L \widehat{\theta}_{\ell}(\widehat{\mu}_{\ell}(x) - \mu_0(x))\right|\\ 
&\le M\max_{1\le \ell \le L}\|\widehat{\mu}_{\ell} - \mu_0\|_{\infty}.
\end{align*}
Therefore, using the symmetry of the error distribution,
\begin{equation}\label{eq:CC-conditional-coverage}
\left|\mathrm{CC}_{\alpha} - (2F(T_{\alpha,\widehat{\theta}}) - 1)\right| \le 2M\max_{1\le \ell \le L}\|\widehat{\mu}_{\ell} - \mu_0\|_{\infty.}
\end{equation}
Now it suffices to study $(2F(T_{\alpha,\widehat{\theta}}) - 1) - (1 - \alpha)$.
Taking $t = T_{\alpha,\widehat{\theta}}$, we get with probability at least $1 - \delta$ (conditional on $\mathcal{D}_1$),
\begin{equation}\label{eq:distribution-func-aggreagtion}
\left|(2F(T_{\alpha,\widehat{\theta}}) - 1) - (1 - \alpha)\right| \le (M/2)\mathbb{E}[(\widehat{\mu}_{\theta^*}(X) - \mu_0(X))^2|\mathcal{D}_1] + \sqrt{\frac{\log(2/\delta)}{2|\mathcal{I}_2|}} + \frac{1}{|\mathcal{I}_2|}.
\end{equation}
Combined with~\eqref{eq:CC-conditional-coverage}, we conclude that
\[
\mathbb{P}\left(\mathrm{CC}_{\alpha} \ge 1 - \alpha - 2M\|\widehat{\mu}_{\ell} - \mu_0\|_{\infty} - (M/2)\|\widehat{\mu}_{\theta^*} - \mu_0\|_2^2 - \sqrt{\frac{\log(2/\delta)}{2|\mathcal{I}_2|}} - \frac{1}{|\mathcal{I}_2|}\bigg|\mathcal{D}_1\right) \le \delta.
\]
Hence, under~\eqref{eq:assump-conditional-coverage}, this implies
\[
\mathbb{P}\left((X_{N+1}, Y_{N+1})\in\widehat{C}_{\alpha}^{\texttt{agg}}\big|X_{N+1} = x\right) \ge 1 - \alpha - (2Mr_{n,\infty} + Mr_{n,2}^2/2) - \frac{\mathfrak{C}}{|\mathcal{I}_2|^{1/2}} - \eta_{n,2} - \eta_{2,\infty}.
\]
\section{Proof of Results in Section~\ref{sec:linear-prediction-fixed-width}}
\subsection{Proof of Lemma \ref{lem:R2-bound}} 
\label{sec:Proof of Lemma R2-bound}
Recall our definition in \eqref{F} that $F_{\theta}(t):= \mathbb{P} \bigl( |Y- X^\top \theta | \leq t \bigr)$.
\paragraph{Proof of~\eqref{eq:cdf-beta-hat-beta-bound} under~\ref{assump:density} and~\ref{assump:subGauss}:}
For all $\forall t\ge 0,  \theta, \theta_1 \in \mathbb{R}^d$,
\begin{equation}
\begin{aligned}
    F_{\theta}(t) - F_{\theta_1}(t)& = \PP \bigl(|Y-X^\top \theta| \leq t \bigr) -\PP \bigl(|Y-X^\top \theta_1| \leq t \bigr) \\
    &\leq \PP \bigl(|Y-X^\top \theta| \le t <  |Y-X^\top \theta_1| \bigr)\\
    &\le \inf_{\epsilon \ge 0} \PP \bigl( |Y-X^\top \theta| \in (t-\epsilon, t) \bigr) + \PP \bigl( |Y-X^\top \theta_1|-|Y-X^\top \theta| > \epsilon \bigr)\\
    & \leq \inf_{\epsilon \ge 0} \biggl( \PP \bigl(|Y-X^\top \theta| \in [t-\epsilon, t] \bigr) + \PP \bigl(|X^\top (\theta-\theta_1)| \geq \epsilon \bigr) \biggr)\\
    & \leq \inf_{\epsilon \ge 0} \biggl( \Psi \epsilon + \PP \bigl(|X^\top (\theta-\theta_1)| \geq \epsilon \bigr) \biggr) \text{ by assumption } \ref{assump:density} \\
    & \leq \inf_{\epsilon \ge 0} \biggl(  \Psi \epsilon + \frac{\E |X^\top (\theta-\theta_1)|^{q} }{ \epsilon^{q}} \biggr) \,\, \forall q \geq 0\\
    & \leq \inf_{\epsilon \ge 0, q \ge 0} \biggl( \Psi \epsilon + \frac{ \|\theta-\theta_1 \|_{\Sigma}^q K_{x}^{q} {q}^{\frac{q}{2}} }{ \epsilon^{q}} \biggr) \text{ by assumption } \ref{assump:subGauss}. 
\end{aligned}
\end{equation}
Taking the derivative of the last line w.r.t $\epsilon$, we get $$\inf_{\epsilon} \biggl(  \Psi \epsilon + \frac{ \|\theta-\theta_1 \|_{\Sigma}^q K_{x}^{q} {q}^{\frac{q}{2}} }{ \epsilon^q} \biggr) =  \frac{q+1}{q} ( \Psi K_{x} \| \theta - \theta_1 \|_{\Sigma})^{\frac{q}{q+1}} \sqrt{q} q^{1/(2q+2)}.$$ And then note that $(q+1)/q \le 2$ and $q^{1/(2q+2)} \le e^{1/(2e)}$ for all $q\ge 1$. Hence, it remains to choose $q$ and bound
\[
( \Psi K_{x}\|\theta - \theta_1\|_{\Sigma})^{q/(q+1)} =  \Psi K_{x}\|\theta - \theta_1\|_{\Sigma} \, \bigl( \Psi K_{x}\|\theta - \theta_1\|_{\Sigma}\bigr)^{-1/(q+1)}.
\]
Taking $q + 1 = \max\{2,\log(1/( \Psi K_{x}\|\theta - \theta_1\|))\}$ gives us
\[
(\Psi K_{x}\|\theta - \theta_1\|_{\Sigma})^{-1/(q+1)} \leq \max\{1, ( \Psi K_{x}\|\theta - \theta_1\|_{\Sigma})^{-1/\log(1/( \Psi K_{x}\|\theta - \theta_1\|_{\Sigma}))}\} = \max\{1, e^{-1}\} = 1.
\]
Substituting this choice of $q$ yields
\[
\min_{q\ge1}\inf_{\varepsilon > 0} \biggl(  \Psi \epsilon + \frac{ \|\theta-\theta_1 \|_{\Sigma}^q K_{x}^q q^{\frac{q}{2}} }{ \epsilon^q} \biggr) \leq  \Psi K_{x}\|\theta - \theta_1\|_{\Sigma} \sqrt{\max\{2, \log\left(1/(2 \Psi K_x\|\theta - \theta_1\|_{\Sigma} )\right)\}}.
\]
The argument for $ F_{\theta_1}(t) - F_{\theta}(t)$ is similar.
Therefore, there exists a universal constant $\mathfrak{C}$ such that for all $\lambda > -\lambda_{\min}(\Sigma)$,
\begin{equation}
\sup_{t \ge 0} |F_{\widehat{\beta}_{\lambda}}(t) - F_{\beta_{\lambda}^*}(t)| \leq \mathfrak{C} \Psi K_{x} \| \widehat{\beta}_{\lambda} - \beta_{\lambda}^* \|_{\Sigma} \sqrt{ \max\{2, \log \bigl(1/ (\Psi  K_{x} \| \widehat{\beta}_{\lambda} - \beta_{\lambda}^* \|_{\Sigma}) \bigr) \}},
\end{equation}
and the desired inequality~\eqref{eq:cdf-beta-hat-beta-bound} follows.

\paragraph{Proof of~\eqref{eq:cdf-beta-hat-beta-bound} under~\ref{assump:density} and~\ref{assump:finite-X}:}
For all $t\ge 0,  \theta, \theta_1 \in \mathbb{R}^d$,
\begin{align*}
    F_{\theta}(t) - F_{\theta_1}(t)& = \PP \bigl(|Y-X^\top \theta| \leq t \bigr) -\PP \bigl(|Y-X^\top \theta_1| \leq t \bigr) \\
    & \leq \inf_{\epsilon \ge 0} \biggl( \PP \bigl(|Y-X^\top \theta| \in [t-\epsilon, t] \bigr) + \PP \bigl(|x^\top (\theta-\theta_1)| \geq \epsilon \bigr) \biggr)\\
    & \leq \inf_{\epsilon \ge 0} \biggl( \Psi \epsilon + \PP \bigl(|X^\top (\theta-\theta_1)| \geq \epsilon \bigr) \biggr) \text{ by assumption } \ref{assump:density} \\
    & \leq \inf_{\epsilon \ge 0} \biggl(  \Psi \epsilon + \frac{\E |X^\top (\theta-\theta_1)|^{q_x} }{ \epsilon^{q_x}} \biggr) \\
    & \leq \inf_{\epsilon \ge 0} \biggl( \Psi \epsilon + \frac{ \|\theta-\theta_1 \|_{\Sigma}^{q_x} K_{x}^{q_x} }{ \epsilon^{q_x}} \biggr) \text{ by assumption } \ref{assump:finite-X}.
\end{align*}
Taking the derivative of the last line w.r.t $\epsilon$, we get 
$$
\inf_{\epsilon \ge 0} \biggl(  \Psi \epsilon + \frac{ \|\theta-\theta_1 \|_{\Sigma}^{q_x} K_{x}^{q_x}  }{ \epsilon^{q_x}} \biggr) =  \frac{q_x+1}{q_x} ( \Psi K_{x} \| \theta - \theta_1 \|_{\Sigma})^{q_x/(q_x+1)} q_x^{1/(q_x+1)}.
$$ 
Note that $(q_x+1)/q_x \le 2$ and $q_x^{1/(q_x+1)} \le e^{1/e}$. Hence, it remains to bound
\[
(\Psi K_{x}\|\theta - \theta_1\|_{\Sigma})^{q_x/(q_x+1)} =  \Psi K_{x}\|\theta - \theta_1\|_{\Sigma} \, \bigl( \Psi K_{x}\|\theta - \theta_1\|_{\Sigma}\bigr)^{-1/(q_x+1)}.
\]
The argument for $ F_{\theta_1}(t) - F_{\theta}(t)$ is similar. 
Therefore, there exists a universal constant $\mathfrak{C}$ such that for all $\lambda > -\lambda_{\min}(\Sigma)$,
\begin{equation}
\, \sup_{t \ge 0} |F_{\widehat{\beta}_{\lambda}}(t) - F_{\beta_{\lambda}^*}(t)| \leq \mathfrak{C} (\Psi K_{x}\| \widehat{\beta}_{\lambda} - \beta_{\lambda}^* \|_{\Sigma} )^{q_x/(1+q_x)},
\end{equation}
and the desired inequality~\eqref{eq:cdf-beta-hat-beta-bound} follows.

\paragraph{Proof of~\eqref{eq:cdf-beta-hat-beta-bound} under~\ref{assump:density_conditional}:}
For the third part of Lemma \ref{lem:R2-bound} where we impose the stronger density assumption \ref{assump:density_conditional}, we have from the triangle inequality
\begin{equation}\label{eq:sandwich}
\PP \bigl( |Y-X^\top \theta_1 | \le t - |X^\top (\theta - \theta_1)| \bigr)  \leq \PP \bigl( |Y-X^\top \theta| \le t \bigr) \le \PP \bigl( |Y-X^\top \theta_1 | \le t + |X^\top (\theta - \theta_1)| \bigr).
\end{equation}
We bound the rightmost term in \eqref{eq:sandwich} from above,
\begin{align*}
 \PP \bigl( |Y-X^\top \theta_1 | \le t + |X^\top (\theta - \theta_1)| \bigr) &= \E \bigg[ \PP \biggl( |Y-X^\top \theta_1  | \le t + |X^\top (\theta - \theta_1)| \mathrel{\stretchto{\mid}{4ex}} X \biggr) \biggr]\\
 &= \E \bigg[ \PP \biggl( |Y-X^\top \theta_1  | \le t + |X^\top (\theta - \theta_1)| \mathrel{\stretchto{\mid}{4ex}} X \biggr) - \PP \biggl( | Y-X^\top \theta_1 | \le t \mathrel{\stretchto{\mid}{4ex}}
 X \biggr) \biggr]\\
& \qquad \qquad +\PP \biggl( |Y-X^\top \theta_1  | \le t |  \biggr) \\
& \leq \psi \E\bigl[ |X^\top (\theta - \theta_1) | ] + \PP \bigl ( |Y - X^\top \theta_1 | \le t )\\
& \leq \psi \bigl( \E\bigl[ |X^\top (\theta - \theta_1)|^2] \bigr)^{1/2}+ \PP \bigl ( |Y - X^\top \theta_1 | \le t )\\
& = \psi \| \theta - \theta_1 \|_{\Sigma}+ \PP \bigl ( |Y - X^\top \theta_1 | \le t ).
\end{align*} 
Therefore, 
\begin{align*}\PP \bigl( |Y-X^\top \theta| \leq t \bigr) -\PP \bigl ( |Y - X^\top \theta_1 | \le t ) &\overset{\eqref{eq:sandwich}}{\le} \PP \bigl( |Y-X^\top \theta_1 | \le t + |X^\top (\theta - \theta_1)| \bigr) -\PP \bigl ( |Y - X^\top \theta_1 | \le t ) \\
&\, \,\le  \psi \| \theta - \theta_1 \|_{\Sigma}.
\end{align*} 
The other side can be proven in a similar way by bounding the leftmost term of \eqref{eq:sandwich} from below and the desired inequality~\eqref{eq:cdf-beta-hat-beta-bound} follows.

\subsection{Proof of Lemma~\ref{lem:ridge}}\label{sec:Proof of Lemma ridge}
By definition $\widehat{\beta}_{\lambda}$ and $\beta^*_{\lambda}$ satisfy 
\[
(\widehat{\Sigma}_1 + \lambda I_d)\widehat{\beta}_{\lambda} = \widehat{\Gamma}_1\quad\mbox{and}\quad (\Sigma + \lambda I_d)\beta^*_{\lambda} = \Gamma.
\]
Subtracting $\beta^*_{\lambda}$ from $\widehat{\beta}_{\lambda}$ from the first equation, we obtain
\begin{equation}
\begin{split}
(\widehat{\Sigma}_1 + \lambda I_d)(\widehat{\beta}_{\lambda} - \beta^*_{\lambda}) &= \widehat{\Gamma}_1 - (\widehat{\Sigma}_1 + \lambda I_d)\beta^*_{\lambda}\\
&= \widehat{\Gamma}_1 - (\widehat{\Sigma}_1 + \lambda I_d)\beta^*_{\lambda} - \left\{\Gamma - (\Sigma + \lambda I_d)\beta^*_{\lambda}\right\}\\
&= (\widehat{\Gamma}_1 - \Gamma) - (\widehat{\Sigma}_1 - \Sigma)\beta^*_{\lambda}.
\end{split}
\end{equation}
Therefore, we obtain
\[
\|\Sigma^{-1/2}(\widehat{\Sigma}_1 + \lambda I_d)\Sigma^{-1/2}\Sigma^{1/2}(\widehat{\beta}_{\lambda} - \beta^*_{\lambda})\|_2 \le \|\Sigma^{-1/2}(\widehat{\Gamma}_1 - \Gamma)\|_2 + \|\Sigma^{-1/2}(\widehat{\Sigma}_1 - \Sigma)\Sigma^{-1/2}\Sigma^{1/2}\beta^*_{\lambda}\|_2,
\]
and
\begin{align*}
&\left(\inf_{\theta\in\mathbb{R}^d}\,\frac{\|\Sigma^{-1/2}(\widehat{\Sigma}_1 + \lambda I_d)\Sigma^{-1/2}\theta\|_2}{\|\theta\|_2}\right)\|\widehat{\beta}_{\lambda} - \beta^*_{\lambda}\|_{\Sigma}\\ 
&\quad\le \|\Sigma^{-1/2}(\widehat{\Gamma}_1 - \Gamma)\|_2 + \|\Sigma^{-1/2}(\widehat{\Sigma}_1 - \Sigma)\Sigma^{-1/2}\|_{op}\|\Sigma^{1/2}\beta^*_{\lambda}\|_2.
\end{align*}
Observe that
\[
\inf_{\theta\in\mathbb{R}^d}\frac{\|\Sigma^{-1/2}(\widehat{\Sigma}_1 + \lambda I_d)\Sigma^{-1/2}\theta\|_2}{\|\theta\|_2} \ge \inf_{\theta\in\mathbb{R}^d}\frac{\|\Sigma^{-1/2}(\Sigma + \lambda I_d)\Sigma^{-1/2}\theta\|_2}{\|\theta\|_2} - \mathcal{D}_{\Sigma} \ge 1 - c - \mathcal{D}_{\Sigma}.
\]
The lower bound of $1 - c$ follows from the fact that $\lambda \ge -c\lambda_{\min}(\Sigma)$ implies $\Sigma + \lambda I_d \succeq (1 - c)\Sigma$ (in the matrix positive definiteness sense). Hence, we conclude
\[
\|\widehat{\beta}_{\lambda} - \beta^*_{\lambda}\|_{\Sigma} \le \frac{1}{(1 - c - \mathcal{D}_{\Sigma})_+}\left[\|\Sigma^{-1/2}(\widehat{\Gamma}_1 - \Gamma)\|_2 + \mathcal{D}_{\Sigma}\|\Sigma^{1/2}\beta^*_{\lambda}\|_2\right].
\]
Furthermore, from the definition of $\beta^*_{\lambda}$, we have
\begin{equation}
(\Sigma + \lambda I_d)\beta^*_{\lambda} = \Gamma\quad\Rightarrow\quad \left(\inf_{\theta\in\mathbb{R}^d}\frac{\|\Sigma^{-1/2}(\Sigma + \lambda I_d)\Sigma^{-1/2}\theta\|_2}{\|\theta\|_2}\right)\|\Sigma^{1/2}\beta^*_{\lambda}\|_2 \le \|\Sigma^{-1/2}\Gamma\|_2.
\end{equation}
This implies that
\begin{equation}
\|\Sigma^{1/2}\beta^*_{\lambda}\|_2 \le \frac{1}{1 - c}\|\Sigma^{-1/2}\Gamma\|_2.
\end{equation}
Lastly, under assumption \ref{assump:finite-Y}, we have
\begin{equation}
\begin{aligned}
    \| \Sigma^{-1/2} \Gamma \|_2 &= \sup_{a \in \mathbb{R}^d, \|a\|=1 } a^\top \Sigma^{-1/2} \E[XY] = \sup_{a \in \mathbb{R}^d, \|a\|=1 } \E [(a^\top \Sigma^{-1/2} X) Y ] = \sup_{a \in \mathbb{R}^d, \|a\|=1 } \E [(a^\top \Sigma^{-1/2} X) Y ]\\
    &\leq \sup_{a \in \mathbb{R}^d, \|a\|=1 } \bigl( \E [a^\top \Sigma^{-1/2} X X^\top \Sigma^{-1/2} a ] \bigr)^{1/2} (\E[Y^2])^{1/2}\\
    & = (\E[Y^2])^{1/2} \leq K_y ,
\end{aligned}
\end{equation}
where we use the Cauchy--Schwarz inequality for the penultimate inequality and assumption \ref{assump:finite-Y} for the last inequality. Combining these inequalities, we obtain
\[
\|\widehat{\beta}_{\lambda} - \beta^*_{\lambda}\|_{\Sigma} \le \frac{1}{(1 - c - \mathcal{D}_{\Sigma})_+}\left[\|\Sigma^{-1/2}(\widehat{\Gamma}_1 - \Gamma)\|_2 + \mathcal{D}_{\Sigma}\frac{K_y}{1 - c}\right].
\]

\subsection{Proof of Lemma \ref{lem:D_Sigma}:}
\label{sec:proof of Lemma D_Sigma}
\paragraph{Proof of \eqref{eq:D_subgaussian}}
This result is standard: see, e.g., Theorem 4.7.1 of \cite{vershynin2018high} or Theorem 1 of \cite{koltchinskii2014concentration}. 

\paragraph{Proof of \eqref{eq:D_moment}:}
We use Theorem 1.1 of \cite{tikhomirov2018sample} which we state in Proposition \ref{prop:D_Sigma_relaxed}, namely with probability at least $1-1/n$,
\begin{equation}\label{eq:tikhomirov}
\mathcal{D}_{\Sigma} \leq \frac{C_{q_x}}{n} \max _{1 \leq i \leq n}\bigl\|\Sigma^{-1 / 2} X_{i}\bigr\|^{2}+C_{q_x} K_x^{2}\left(\frac{d}{n}\right)^{1-2 / q_x} \log ^{4}\left(\frac{n}{d}\right)+C_{q_x} K_x^{2}\left(\frac{d}{n}\right)^{1-2 / \min\{q_x,4\}}.
\end{equation}
Since
\begin{align*}
\PP(\frac{1}{d}  \| \Sigma^{-1/2 } X_i  \|^2 \ge t) &\le \frac{\E[(\frac{1}{d} \| \Sigma^{-1/2 } X_i \|^2)^{q_x/2}]}{t^{q_x/2}}\\
&\le \frac{1}{d}  \sum_{j=1}^d \frac{\E[ |e_j^\top \Sigma^{-1/2}X_i |^{q_x} ]}{t^{q_x/2}}\\
& \le \frac{K_x^{q_x}}{t^{q_x/2}},
\end{align*}
we have 
\begin{equation}\label{temp}
\begin{aligned}
\PP(  \max_{1 \le i \le n} \| \Sigma^{-1/2} X_i\|^2 \ge t ) &\le   \sum_{i=1}^n \left(\frac{K_x^2 }{ t/ d} \right)^{q_x/2}=  n\left( \frac{  K_x^2 d}{t} \right)^{q_x/2}.  \\
\end{aligned}
\end{equation}

Take $t= K_x^2 d (\delta/n)^{-2/q_x}$, \eqref{temp} becomes
\begin{equation}
\begin{aligned}
\PP( \frac{C_{q_x}}{n} \max_{1 \le i \le n} \| \Sigma^{-1/2} X_i\|^2 \ge C_{q_x} K_x^2 d n^{2/q_x-1} \delta^{-2/q_x}  ) &\le \delta.  \\
\end{aligned}
\end{equation}

Therefore, \eqref{eq:tikhomirov} reduces to our desired result in equation \eqref{eq:D_moment}.

\paragraph{Proof of \eqref{eq:D_structure}:}
We first prove that under assumption \ref{assump:finite-X-indep}, the first term on the right of \eqref{eq:tikhomirov} can have a better bound, notably $\mathcal{D}_\Sigma$ goes to zero if $d=o(n)$.
\begin{equation}\label{eq:D-first-term-structure}
\begin{aligned}
 \max_{1 \le i \le n} \| \Sigma^{-1/2} X_i \|^2 &= d \max_{1 \le i \le n} \frac{1}{d} \| \Sigma^{-1/2} X_i \|^2\\
 &= d + d \max_{1 \le i \le n} \frac{1}{d} \| \Sigma^{-1/2} X_i \|^2 - \E [ \frac{1}{d} \| \Sigma^{-1/2} X_i \|^2 ]\\
 &= d + d \max_{1 \le i \le n} \frac{1}{d} \sum_{j=1}^d \{ Z_{ij}^2 - \E [ Z_{ij}^2 ] \}. 
\end{aligned}
\end{equation}

\subparagraph{Case 1. When $q_x \geq 4$:}
Applying equation (1.9) of \cite{rio2017constants} with $q=q_x/2$ gives us
\begin{equation}\label{temp3}
\mathbb{P}\left(\sum_{j=1}^d \{ Z_{ij}^2 - \E [ Z_{ij}^2 ] \} >\sigma \sqrt{2 \log (\frac{1}{\delta} ) }+\left(1+ 4 / q_x + q_x / 6  \right) C_{q} ( \frac{1}{\delta} ) ^{2 / q_x }\right) \leq \delta,
\end{equation}
where 
$$
\begin{cases}\sigma^2 = \mathrm{var}(\sum_{j=1}^d \{ Z_{ij}^2 - \E [ Z_{ij}^2 ])= d \bigl( \E [ Z_{i1}^4 ] - (E[Z_{i1}^2])^2 \bigr) \leq d \E [Z_{i1}^4] \leq d K_x^4;\\
C_q= ( \sum_{j=1}^d \E | Z_{ij }^2 - \E [Z_{ij}^2 ] |^{q_x/2} )^{2/q_x} \leq d^{2/q_x} K_x^2.\\
\end{cases}
$$
Therefore, inequality \eqref{temp3} leads to (we can get this same result using Theorem 3.1 of \cite{einmahl2008characterization})
$$
\mathbb{P}\left(\frac{1}{d} \sum_{j=1}^d \{ Z_{ij}^2 - \E [ Z_{ij}^2 ] \} >d^{-1/2} K_x^2 \sqrt{2 \log (\frac{1}{\delta} ) }+d^{-1} \left(1+ 4 / q_x + (q_x / 6) \mathbbm{1}_{ \{q_x/2>3 \} } \right) K_x^2 ( \frac{d}{\delta} ) ^{2 / q_x }\right) \leq \delta.
$$
Because the indicator function is upper bounded by $1$, this gives
$$
\mathbb{P}\left( d+ d \max_{1 \le i \le n} \frac{1}{d} \sum_{j=1}^d \{ Z_{ij}^2 - \E [ Z_{ij}^2 ] \}> d + d^{1/2} K_x^2 \sqrt{2 \log (\frac{1}{\delta} ) }+ \left(1+ 4 / q_x + q_x / 6  \right) K_x^2 ( \frac{d}{\delta} ) ^{2 / q_x }\right) \leq n \delta.
$$
Together with equation \eqref{eq:D-first-term-structure}, we have
\begin{equation}\label{temp2}
\mathbb{P}\left(  \max_{1 \le i \le n} \| \Sigma^{-1/2} X_i\|^2  > d + d^{1/2} K_x^2 \sqrt{2 \log (\frac{n}{\delta} ) }+ \left(1+ 4 / q_x + q_x / 6  \right) K_x^2 ( \frac{nd}{\delta} ) ^{2 / q_x} \right) \leq \delta.
\end{equation}
Therefore, plugging the above into inequality \eqref{eq:tikhomirov} along with $K_x \geq 1$ gives us our desired result in equation \eqref{eq:D_structure} in this case.

\subparagraph{Case 2. When $2 < q_x < 4$:}
It can be easily verified that $\alpha:= q_x/2 \in (1,2]$ and that $Z_{ij}^2$ has $\alpha$-th moment $K_x$. Therefore, taking $X_j= Z_{ij}^2, 1 \leq j \le d $, in Lemma 3.1 of \cite{chen2020generalized} yields
$$
\mathbb{P}\left(   \frac{1}{d} \sum_{j=1}^d \{ Z_{ij}^2 - \E [ Z_{ij}^2 ] \}  \geq  \frac{\mathfrak{C} K_x^2   }{ \delta ^{2/q_x} d^{1-2/q_x} } \right) \leq \delta,
$$
where $\mathfrak{C}$ is a universal constant.
Lemma 3 of \cite{bubeck2013bandits} gives similar result as above as well.
This implies
$$
\mathbb{P}\left( d+d \max_{1 \le i \le n}  \frac{1}{d} \sum_{j=1}^d \{ Z_{ij}^2 - \E [ Z_{ij}^2 ] \}  \geq  d+\frac{\mathfrak{C} K_x   }{ \delta ^{2/q_x} d^{-2/q_x} } \right) \leq n\delta.
$$
Together with equation \eqref{eq:D-first-term-structure}, we have
\begin{equation}
\mathbb{P}\left(  \max_{1 \le i \le n} \| \Sigma^{-1/2} X_i\|^2  >d+\mathfrak{C} K_x   (\frac{nd}{\delta})^{2/q_x} \right) \leq \delta.
\end{equation}
Therefore, plugging the above into inequality \eqref{eq:tikhomirov} along with $K_x \geq 1$ leads to the desired result in equation \eqref{eq:D_structure} in this case as well.

\subsection{Proof of Lemma \ref{lem:covariance-bound}}
\label{sec:proof of Lemma covariance-bound}
\begin{proof}[Proof of \eqref{eq:gammahat_subgaussian}]
For $\xi_i := \Sigma^{-1/2} ( X_i Y_i - \E [X_i Y_i] )/n, i\in \mathcal{I}_1$, without loss of generality assume the indexes in $\mathcal{I}_1$ is $1, \dots, n$ where $n:= |\mathcal{I}_1| $.
Then, 
$$
\ \Sigma^{-1/2} ( \widehat{\Gamma} - \Gamma ) = \sum_{i=1}^n \xi_i, \text{ with } \E[\xi_i]=0.
$$
Theorem 3.1 of \cite{einmahl2008characterization} with $\eta=\delta=1, Z_i=\xi_i$ implies that
\begin{equation}\label{eq:einmah}
\mathbb{P}\left( \|\Sigma^{-1/2} (\widehat{\Gamma} - \Gamma ) \|_2 \geq 2\E \left[ \|\Sigma^{-1/2} (\widehat{\Gamma} - \Gamma ) \|_2 \right]+t\right) \leq c \exp \left(-\frac{t^{2}}{3 \sigma^2}\right)+\frac{C}{t^{s}} \sum_{i=1}^{n} \mathbb{E}\left[\left( \xi_{i}^{\top} \xi_{i} \right)^{s / 2}\right],
\end{equation}
for any $s>2$ such that $\mathbb{E}\left[\left( \xi_{i}^{\top} \xi_{i} \right)^{s / 2}\right]$ is finite and 
\begin{equation}\label{eq:sigma}
\sigma^2 := \sup_{\| \theta \|_2 \leq 1 } \sum_{i=1}^n \E (\theta^\top \xi_i)^2.
\end{equation}
Here $C \in(0, \infty)$ is a constant depending only on $s$.
We tackle the three terms in~\eqref{eq:einmah}: $\E \bigl[ \|\Sigma^{-1/2} (\widehat{\Gamma} - \Gamma ) \|_2 \bigr],\sum_{i=1}^{n} \mathbb{E}\left[\left( \xi_{i}^{\top} \xi_{i}\right)^{s / 2} \right]  $ and $\sigma^2$ respectively. 

First,
\begin{align*}
    \E \bigl[ \|\Sigma^{-1/2} (\widehat{\Gamma} - \Gamma ) \|_2 \bigr] &\leq (\E \bigl[ \|\Sigma^{-1/2} (\widehat{\Gamma} - \Gamma ) \|_2^2 \bigr])^{1/2}\\
     &= \frac{1}{n} \biggl( \sum_{i=1}^n \E \bigl[\| \Sigma^{-1/2} X_i Y_i - \E[\Sigma^{-1/2} X_i Y_i] \|_2^2 \bigr] \biggr)^{1/2}\\
    &\leq \frac{1}{\sqrt{n} } \biggl( \E \bigl[\|  \Sigma^{-1/2} X_i Y_i \|_2^2 \bigr]\biggr)^{1/2}\\
    &=\frac{1}{\sqrt{n} } \biggl( \E \bigl[  Y_i^2 X_i^\top \Sigma^{-1} X_i \bigr]\biggr)^{1/2}\\
    &=\sqrt{\frac{d}{n} } \biggl( \frac{1}{d} \sum_{j=1}^d \E \bigl[  Y_i^2 (e_j^\top \Sigma^{-1/2} X_i)^2 \bigr]\biggr)^{1/2}\\
    &\leq C_q  K_x K_y \sqrt{\frac{d}{n} },
\end{align*}
where the last inequality follows by taking  $\eta=\frac{q_y}{2}-1 >0$ in Proposition \ref{prop:s2} with $C_q$ a constant depending on $q_y$.

Next, since
$$
\xi_{i}^{\top} \xi_{i}=\frac{1}{n^2} (X_i Y_i - \E [X_i Y_i ] )^\top \Sigma^{-1} (X_i Y_i - \E [X_i Y_i ] ).
$$
From assumption \ref{assump:DGP}, we have
\begin{align*}
\sum_{i=1}^{n} \mathbb{E}\left[\left( \xi_{i}^{\top} \xi_{i} \right)^{s / 2}\right] 
&=\frac{d^{s / 2}}{n^{s-1}} \mathbb{E}\left[\left(\frac{1}{d} \sum_{j=1}^{d}\left(e_{j}^{\top} \Sigma ^{-1 / 2}   X_i Y_i  - \E [e_j^\top \Sigma^{-1/2} X_i Y_i ] \right)^{2}\right)^{s / 2}\right] \\
& \leq \frac{d^{s / 2}}{n^{s-1}} \max _{1 \leq j \leq d} \mathbb{E}\left[\left|e_{j}^{\top} \Sigma^{-1 / 2} \left( X_i Y_i - \E [X_i Y_i ] \right)\right|^{s}\right]\\
& \leq \frac{2^s d^{s / 2}}{n^{s-1}} \max _{1 \leq j \leq d} \mathbb{E}\left[ |e_{j}^{\top} \Sigma^{-1 / 2} X_i Y_i |^{s}\right]\\
& \leq \frac{2^s d^{s / 2}}{n^{s-1}} \left\{ \mathfrak{C} K_x K_{y} \sqrt{q_y / \eta}\right\}^{q_y /(1+\eta)},
\end{align*}
where the last inequality follows by taking $s=q_y /(1+\eta)$ and applying Proposition \ref{prop:s2} again and $\mathfrak{C}$ is a universal constant. 

Lastly we look at $\sigma^2$ which we recall was defined in~\eqref{eq:sigma}.
\begin{align*}
    \sigma^2 &= \sup_{\| \theta \|_2 \leq 1 } \sum_{i=1}^n \E (\theta^\top \xi_i)^2\\
    & = \sup_{\| \theta \|_2 \leq 1 } \frac{1}{n} \E \biggl[ | \theta^\top \Sigma^{-1/2}X_i Y_i - \E[  \theta^\top \Sigma^{-1/2}X_i Y_i ] |^2\biggr]\\
    &\leq  \frac{2}{n} \sup_{\| \theta \|_2 \leq 1 } \E[|\theta^\top \Sigma^{-1/2}X_i Y_i |^2]\\
    &\leq \frac{C_q ( K_x K_y )^2}{n}, \text{by applying $\eta = \frac{q_y}{2} -1$ in Proposition } \ref{prop:s2},
\end{align*}
where $C_q$ is a constant that only depends on $q_y$.
Therefore, taking for possibly different constants $C_1, C_2 \in(0, \infty)$ that only depend on $q_y$,
$$
t=K_x K_y \sqrt{ 3C_1  } \sqrt{ \frac{\log  (1/\delta  )}{n} }+\left(C_2 \delta^{-1}\right)^{(1+\eta) / q_y} \frac{  d^{1 / 2}}{n^{1 - (1+\eta) / q_y}}\left\{K_x K_{y} \sqrt{q_y / \eta}\right\}
$$
in equation~\eqref{eq:einmah} yields for any $\eta>0$
\begin{align*}
\mathbb{P}\biggl( \|\Sigma^{-1/2} (\widehat{\Gamma} - \Gamma ) \|_2 &\geq 2C_q K_x K_y \sqrt{\frac{d}{n}} + K_x K_y \sqrt{ 3C_1  } \sqrt{ \frac{\log  (1/\delta  )}{n} }+ \\
&K_x K_{y}  \left(C_2 \delta^{-1}\right)^{(1+\eta) / q_y} \frac{ \sqrt{ d q_y / \eta}}{n^{1 - (1+\eta) / q_y}}  \biggr) \leq \delta.
\end{align*}
Optimize the above line over $\eta$ gives us $\eta = \frac{q_y}{2 \log \frac{C_2 n}{ \delta}}$, and this leads to inequality~\eqref{eq:gammahat_subgaussian} (for a possibly different constant)
$$
\mathbb{P}\left( \|\Sigma^{-1/2} (\widehat{\Gamma} - \Gamma ) \|_2 \geq C_q  K_x K_y \sqrt{ \frac{d +\log (1/\delta) }{n}  } + C_q K_x K_y \left( \frac{1}{\delta} \right)^{1/q_y} \frac{\sqrt{d}}{ n^{1-1/q_y} }  \left( \log (\frac{ n}{\delta} ) \right)^{-1/2} \right) \leq \delta.
$$
\end{proof}

\begin{proof}[Proof of \eqref{eq:gammahat_moment}]
We divide the proof into two cases.
\paragraph{Case 1. ${q_x q_y}/{(q_x +q_y)} \leq 2$:}
It can be easily verified that $p:={q_x q_y}/{(q_x +q_y)} \in (1, 2]$. Therefore, taking $W_i= \Sigma^{-1/2}X_i Y_i, 1\le i\le n$, in Proposition \ref{prop:vector-moment} yields 
\begin{align*}
\mathbb{E}\|\Sigma^{-1/2}(\widehat{\Gamma} - \Gamma)\|_2^p &\le \frac{\mathfrak{C}^p}{n^p}\sum_{i=1}^n \mathbb{E}[\|\Sigma^{-1/2}X_iY_i\|_2^p] = \frac{\mathfrak{C}^pd^{p/2}}{n^{p-1}}\sup_{\|\theta\|_2\le 1}\mathbb{E}[|\theta^{\top}\Sigma^{-1/2}X_iY_i|^p]\\ 
&\le \frac{\mathfrak{C}^pd^{p/2}}{n^{p-1}}(K_xK_y)^p.
\end{align*}
The last inequality above follows from Proposition~\ref{prop:s2-relaxed}. Hence, by Markov's inequality, we obtain
\begin{equation}\label{eq:tail-bound-Gamma-hat-less-2}
\mathbb{P}\left(\|\Sigma^{-1/2}(\widehat{\Gamma} - \Gamma)\|_2 \ge \mathfrak{C}K_xK_y\frac{\sqrt{d}\delta^{-(q_x + q_y)/(q_xq_y)}}{n^{1-(q_x+q_y)/(q_xq_y)}}\right) \le \delta.
\end{equation}

\paragraph{Case 2. ${q_x q_y}/{(q_x +q_y)} > 2$:}
Define $\xi_i := \Sigma^{-1/2} ( X_i Y_i - \E [X_i Y_i] )/n, 1 \leq i \leq n$, 
Then $\ \Sigma^{-1/2} ( \widehat{\Gamma} - \Gamma ) = \sum_{i=1}^n \xi_i, \text{ with } \E[\xi_i]=0.$
Apply Theorem 3.1 of \cite{einmahl2008characterization} with $\eta=\delta=1$,  $Z_i = \xi_i$ yields
\begin{equation}\label{eq:einmah2}
\mathbb{P}\left( \|\Sigma^{-1/2} (\widehat{\Gamma} - \Gamma ) \|_2 \geq 2\E \left[ \|\Sigma^{-1/2} (\widehat{\Gamma} - \Gamma ) \|_2 \right]+t\right) \leq c \exp \left(-\frac{t^{2}}{3 \sigma^2}\right)+\frac{C}{t^{s}} \sum_{i=1}^{n} \mathbb{E}\left[\left(\xi_{i}^{\top} \xi_{i}\right)^{s / 2}\right],
\end{equation}
for any $s>2$ such that $\mathbb{E}\left[\left( \xi_{i}^{\top} \xi_{i} \right)^{s / 2}\right]$ is finite and 
\begin{equation}\label{eq:sigma2}
\sigma^2 := \sup_{\| \theta \|_2 \leq 1 } \sum_{i=1}^n \E (\theta^\top \xi_i)^2.
\end{equation}
Here $C \in(0, \infty)$ is a constant depending only on $s$.
We tackle the three terms in~\eqref{eq:einmah2}: $\E \bigl[ \|\Sigma^{-1/2} (\widehat{\Gamma} - \Gamma ) \|_2 \bigr],\sum_{i=1}^{n} \mathbb{E}\left[\left( \xi_{i}^{\top} \xi_{i} \right)^{s / 2} \right]  $ and $\sigma^2$ respectively. 
First,
\begin{equation}\label{cov-bound-heavy-x}
\begin{aligned}
    \E \bigl[ \|\Sigma^{-1/2} &(\widehat{\Gamma} - \Gamma ) \|_2 \bigr] \leq (\E \bigl[ \|\Sigma^{-1/2} (\widehat{\Gamma} - \Gamma ) \|_2^2 \bigr])^{1/2}\\
     &= \frac{1}{n} \biggl( \sum_{i=1}^n \E \bigl[\| \Sigma^{-1/2} X_i Y_i - \E[\Sigma^{-1/2} X_i Y_i] \|_2^2 \bigr] \biggr)^{1/2}\\
    &\leq \frac{1}{\sqrt{n} } \biggl( \E \bigl[\|  \Sigma^{-1/2} X_i Y_i \|_2^2 \bigr]\biggr)^{1/2}\\
    &=\frac{1}{\sqrt{n} } \biggl( \E \bigl[  Y_i^2 X_i^\top \Sigma^{-1} X_i \bigr]\biggr)^{1/2}\\
    &=\sqrt{\frac{d}{n} } \biggl( \frac{1}{d} \sum_{j=1}^d \E \bigl[  Y_i^2 (e_j^\top \Sigma^{-1/2} X_i)^2 \bigr]\biggr)^{1/2}\\
     &\leq \sqrt{\frac{d}{n} } \biggl(\frac{1}{d} \sum_{j=1}^d \sup_{\| \theta \|_2 \leq 1 } \bigl( \E[|\theta^\top \Sigma^{-1/2}X_i Y_i |^p] \bigr)^{2/p} \biggr)^{1/2}\text{ from Jensen's inequality and } p:=\frac{q_x q_y}{q_x +q_y}\\
    &\leq \sqrt{\frac{d}{n} } K_x K_y , \text{by applying $q_1 =q_x, q_2=q_y$ in Proposition } \ref{prop:s2-relaxed}.
\end{aligned}
\end{equation}

Next, since
$$
\xi_{i}^{\top} \xi_{i}=\frac{1}{n^2} (X_i Y_i - \E [X_i Y_i ] )^\top \Sigma^{-1} (X_i Y_i - \E [X_i Y_i ] ).
$$
From assumption \ref{assump:DGP}, we have
\begin{align*}
\sum_{i=1}^{n} \mathbb{E}\left[\left( \xi_{i}^{\top} \xi_{i}\right)^{s / 2}\right] 
&=\frac{d^{s / 2}}{n^{s-1}} \mathbb{E}\left[\left(\frac{1}{d} \sum_{j=1}^{d}\left(e_{j}^{\top} \Sigma ^{-1 / 2}   X_i Y_i  - \E [e_j^\top \Sigma^{-1/2} X_i Y_i ] \right)^{2}\right)^{s / 2}\right] \\
& \leq \frac{d^{s / 2}}{n^{s-1}} \max _{1 \leq j \leq d} \mathbb{E}\left[\left|e_{j}^{\top} \Sigma^{-1 / 2} \left( X_i Y_i - \E [X_i Y_i ] \right)\right|^{s}\right]\\
& \leq \frac{2^s d^{s / 2}}{n^{s-1}} \max _{1 \leq j \leq d} \mathbb{E}\left[ |e_{j}^{\top} \Sigma^{-1 / 2} X_i Y_i |^{s}\right]\\
& \leq \frac{2^s d^{s / 2}}{n^{s-1}} ( K_x K_{y} )^s,
\end{align*}
where in the last inequality we substitute $s$ by $q_x q_y/(q_x+q_y)$ and apply Proposition \ref{prop:s2-relaxed}.

Lastly we look at $\sigma^2$ which we recall was defined in~\eqref{eq:sigma},
\begin{align*}
    \sigma^2 &= \sup_{\| \theta \|_2 \leq 1 } \sum_{i=1}^n \E (\theta^\top \xi_i )^2\\
    & = \sup_{\| \theta \|_2 \leq 1 } \frac{1}{n} \E \biggl[ | \theta^\top \Sigma^{-1/2}X_i Y_i - \E[  \theta^\top \Sigma^{-1/2}X_i Y_i ] |^2\biggr]\\
    &\leq  \frac{2}{n} \sup_{\| \theta \|_2 \leq 1 } \E[|\theta^\top \Sigma^{-1/2}X_i Y_i |^2]\\
    &\leq \frac{2}{n} \sup_{\| \theta \|_2 \leq 1 } \bigl( \E[|\theta^\top \Sigma^{-1/2}X_i Y_i |^p] \bigr)^{2/p} \text{ from Jensen's inequality and } p:=\frac{q_x q_y}{q_x +q_y}\\
    &\leq \frac{2 (K_x K_y)^2}{n}, \text{by applying $q_1 =q_x, q_2=q_y$ in Proposition } \ref{prop:s2-relaxed}.
 \end{align*}
Therefore, taking for a universal constant $\mathfrak{C} \in(0, \infty)$ 
$$
t=K_x K_y \sqrt{ 3\mathfrak{C} } \sqrt{ \frac{\log  (1/\delta  )}{n} }+\left( \mathfrak{C} \right)^{(q_x+q_y) /q_x q_y} \frac{  \sqrt{d}\delta^{-(q_x+q_y)/(q_xq_y)}}{n^{1 - (q_x+q_y) / q_x q_y}} K_x K_y,
$$
in equation \eqref{eq:einmah2} yields
$$
\mathbb{P}\biggl( \|\Sigma^{-1/2} (\widehat{\Gamma} - \Gamma ) \|_2 \geq 2 \sqrt{\frac{d}{n}} K_x K_y+ K_x K_y \sqrt{ 3 \mathfrak{C}  } \sqrt{ \frac{\log  (1/\delta  )}{n} }+\left( \mathfrak{C}\right)^{(q_x+q_y) /q_x q_y}  \frac{  \sqrt{d}\delta^{-(q_x+q_y)/(q_xq_y)}}{n^{1 - (q_x+q_y) / q_x q_y}} K_x K_{y}   \biggr) \leq \delta.
$$
This results in
\begin{equation}\label{eq:tail-bound-Gamma-hat-greater-2}
\mathbb{P}\left( \|\Sigma^{-1/2} (\widehat{\Gamma} - \Gamma ) \|_2 \geq \mathfrak{C}  K_x K_y \sqrt{ \frac{d +\log (1/\delta) }{n}  } + \mathfrak{C}    \frac{  \sqrt{d}\delta^{-(q_x+q_y)/(q_x q_y)}}{n^{1 - (q_x+q_y) / q_x q_y}} K_x K_y  \right) \leq \delta.
\end{equation}
Combining inequalities \eqref{eq:tail-bound-Gamma-hat-less-2} and~\eqref{eq:tail-bound-Gamma-hat-greater-2} completes the proof of inequality~\eqref{eq:gammahat_moment}.
\end{proof}

\section{Some useful propositions}\label{appsec:useful-propositions}
We state here some propositions used throughout in the paper, in particular in the proofs above.

\begin{prop}\label{prop:R1-bound}
Under assumption \ref{assump:DGP}, for any $\delta \in(0,1)$ and $\mathcal{I}\subseteq[N]$, with probability at least $1-\delta$, we have
\begin{equation}\label{eq:empirical-process-bound}
\sup_{t\ge 0} \sup _{\theta \in \mathbb{R}^d} \biggl|\frac{1}{ |\mathcal{I}| } \sum_{i \in \mathcal{I}} \mathbbm{1}{\{ |y_i - x_i^\top \theta | \leq t \} } - F_{\theta}(t) \biggr| \leq \mathfrak{C} \sqrt{\frac{d + \log (1/\delta) }{| \mathcal{I} | }} ,
\end{equation}
where $\mathfrak{C}$ is a universal constant.
\end{prop}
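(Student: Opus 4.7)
\textbf{Proof plan for Proposition~\ref{prop:R1-bound}.} The strategy is to identify the indexing collection of sets as a VC class in $\mathbb{R}^{d+1}$ with VC dimension linear in $d$, and then invoke a sharp VC uniform deviation inequality.

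First I would rewrite the relevant collection as
\[
\mathcal{C} \,:=\, \bigl\{ A_{\theta,t} : \theta \in \mathbb{R}^d,\, t\ge 0 \bigr\},\qquad A_{\theta,t} := \bigl\{(x,y)\in \mathbb{R}^{d+1}:\, |y - x^\top \theta| \le t\bigr\},
\]
and observe the key decomposition
\[
A_{\theta,t} \;=\; \bigl\{(x,y):\, y - x^\top\theta \le t\bigr\}\,\cap\,\bigl\{(x,y):\, -(y - x^\top\theta) \le t\bigr\},
\]
exhibiting each $A_{\theta,t}$ as the intersection of two closed half-spaces in $\mathbb{R}^{d+1}$. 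The class of all half-spaces in $\mathbb{R}^{d+1}$ is a classical VC class of VC dimension $d+2$, so by the standard stability of VC classes under finite intersections (e.g.\ Lemma 2.6.17 of \cite{van1996weak}), the collection $\mathcal{C}$ is contained in a VC class of VC dimension at most $\mathfrak{C} d$ for some universal constant $\mathfrak{C}$.

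Next I would apply a sharp uniform deviation bound for a VC class of sets. Writing $P_\mathcal{I}(C) := |\mathcal{I}|^{-1}\sum_{i\in\mathcal{I}} \mathbbm{1}\{(X_i,Y_i)\in C\}$ and $P(C) := \mathbb{P}((X,Y)\in C)$, the supremum in~\eqref{eq:empirical-process-bound} is exactly $\sup_{C\in\mathcal{C}}|P_\mathcal{I}(C) - P(C)|$. By the VC-type uniform inequality obtained by combining Dudley's entropy integral (which gives $\mathbb{E}\sup_{C\in\mathcal{C}}|P_\mathcal{I}(C)-P(C)| \le \mathfrak{C}\sqrt{\mathrm{VC}(\mathcal{C})/|\mathcal{I}|}$ for any VC class, without any extra $\log |\mathcal{I}|$ factor) with Talagrand's/Bousquet's concentration inequality for the supremum of a bounded empirical process (see e.g.\ Theorem 2.14.9 of~\cite{van1996weak} or Theorem 3.4 of \cite{massart1990tight}), we obtain that with probability at least $1-\delta$,
\[
\sup_{C\in\mathcal{C}}\bigl| P_\mathcal{I}(C) - P(C)\bigr| \;\le\; \mathfrak{C} \sqrt{\frac{\mathrm{VC}(\mathcal{C}) + \log(1/\delta)}{|\mathcal{I}|}}.
\]
Substituting $\mathrm{VC}(\mathcal{C}) \le \mathfrak{C} d$ from Step~1 gives the desired bound~\eqref{eq:empirical-process-bound}.

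The main obstacle is the second step: a crude shatter-coefficient argument $S_{\mathcal{C}}(n) \le (n+1)^{\mathrm{VC}(\mathcal{C})}$ together with Vapnik--Chervonenkis tail bounds would produce an extra $\sqrt{\log |\mathcal{I}|}$ factor, which is worse than what the proposition claims. To shave this factor one must use Dudley's entropy integral (exploiting the $L_2(P_\mathcal{I})$-uniform-covering-number bound $N(\varepsilon,\mathcal{C},L_2(P_\mathcal{I})) \lesssim \varepsilon^{-2\mathrm{VC}(\mathcal{C})}$ from the Haussler/van der Vaart--Wellner bound) for the expected supremum and then promote it to a high-probability statement via Talagrand's inequality. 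Apart from this, the VC dimension computation is routine, and the i.i.d.\ assumption~\ref{assump:DGP} is all that is needed to apply the concentration result.
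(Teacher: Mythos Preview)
Your proposal is correct and follows essentially the same route as the paper: establish that the indicator class has VC dimension $O(d)$, bound the expected supremum via the uniform entropy integral (Dudley), and then concentrate around the mean. The only cosmetic differences are that the paper cites a direct VC-dimension result (getting exactly $d+1$) rather than arguing via intersections of half-spaces, and uses McDiarmid's bounded-differences inequality in place of Talagrand's for the concentration step---which is simpler and already sufficient here since the summands are uniformly bounded by $1$.
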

\begin{proof}
For $Z_i = (X_i, Y_i), i\in \mathcal{I}$, and any function $f:\mathbb{R}^{d+1}\to\mathbb{R}$, without loss of generality assume the indexes in $\mathcal{I}$ is $1, \dots, n$ with $n:= |\mathcal{I}| $, and define
\[
\mathbb{G}_{n} f ~:=~ \frac{1}{\sqrt{n}}\sum_{i=1}^n \{f(Z_i) - \mathbb{E}[f(Z_i)]\}.
\]
We have a class of functions $\mathcal{F}=\{f: f(z)=\mathbbm{1}{\{ | y-x^\top \theta | \leq t\} }, t\ge0  \}$. From Chapter 3.3 of \cite{kearns1994introduction}, we know that the VC dimension of this class is $V(\mathcal{F})=d+1$, and we take the envelope function to be $F \equiv 1$.  The notation of supremum norm $\| \cdot \|_{\mathcal{F}}$ and covering number $N \bigl(\varepsilon, \mathcal{F}, L_r(Q) \bigr)$ are all defined in the same way as in \cite{kearns1994introduction}.    
We use Theorem 2.6.7 of \cite{van1996weak} with $r=2$ to get a bound on the covering number bound. Specifically, for any probability measure $Q$, there exists a universal constant $\mathfrak{C}$ such that
\begin{equation}\label{eq:covering}
 N\left(\varepsilon \| F \|_{Q,2}, \mathcal{F}, L_{2}(Q)\right) \leq \mathfrak{C} (d+1)(16 e)^{d+1}\left(\frac{1}{\varepsilon}\right)^{2d}.   
\end{equation}
Next, we obtain an upper bound of the uniform-entropy 
$$
J(\delta, \mathcal{F}):=\sup _{Q} \int_{0}^{\delta} \sqrt{1+\log N\left(\varepsilon\|F\|_{Q, 2}, \mathcal{F}, L_{2}(Q)\right)} d \varepsilon,
$$
where the supremum is taken over all discrete probability measures $Q$ with $\|F\|_{Q, 2}>0 .$ 
Then applying $p=1$ to Theorem 2.14.1 of \cite{van1996weak} gives us
\begin{equation}\label{eq:expectation-bound}
 \E \| \mathbb{G}_n \|_{\mathcal{F}} ~\leq~ J(1, \mathcal{F}) ~\lesssim~ \int_{0}^{1} \left(\log d + d +d \log \frac{1}{\varepsilon}\right)^{1/2}  d \varepsilon ~\leq~ \mathfrak{C} \sqrt{d}, 
\end{equation}
where the second inequality is from \eqref{eq:covering} and $\mathfrak{C}$ is a universal constant.
Using McDiarmid's inequality we have 
\begin{equation}\label{eq:McDiarmid-application}
\PP ( \| \mathbb{G}_n \|_{\mathcal{F}}- \E  \| \mathbb{G}_n \|_{\mathcal{F} } \geq u ) \leq  \exp \left(-\frac{2u^2}{ \sum_{i=1}^n c_i^2}\right) \leq  \exp \left(-\frac{2u^2}{ \sum_{i=1}^n 1/n}\right) = \exp(-2 u^2),
\end{equation}
where 
\begin{align*}
c_i &:= \sup_{\substack{(x_i, y_i), (x_i', y_i'),\\ (x_1, y_1), \ldots, (x_n, y_n)}} \sup_{\theta,t} \frac{\sqrt{n}}{n} \Bigl|\sum_{j=1}^n \mathbbm{1}{\{ |y_j - \theta^{\top}x_j| \leq t \} } -  \sum_{j=1, j\neq i}^n \mathbbm{1}{\{ |y_j - \theta^\top x_j | \leq t \} } - \mathbbm{1}{\{ |y_i' - \theta^\top x_i' | \leq t \} } \Bigr|  \\
& \leq \sup_{(x_i, y_i), (x_i', y_i')}  \sup_{\theta,t} \frac{1}{\sqrt{n}} \biggl|   \mathbbm{1}{\{ |y_i - \theta^{\top}x_i | \leq t \} } -  \mathbbm{1}{\{ |y_i' - \theta^\top x_i' | \leq t \} }  \biggr| \le \frac{1}{\sqrt{n}}.
\end{align*}
Substituting the expectation bound~\eqref{eq:expectation-bound} in~\eqref{eq:McDiarmid-application} and setting the right hand side of~\eqref{eq:McDiarmid-application} to $\delta$ yields for another absolute constant $\mathfrak{C}^{\prime}$
$$
\mathbb{P}\left(\| \mathbb{G}_n \|_{\mathcal{F}} \geq  \mathfrak{C}^\prime \sqrt{d +\log \left(\frac{1}{\delta} \right)} \right) \leq \mathbb{P}\left(\| \mathbb{G}_n \|_{\mathcal{F}} \geq \mathfrak{C} \sqrt{d} +\sqrt{\frac{1}{2}\log \left(\frac{1}{\delta}\right)}\right) \leq \delta.
$$
Hence, inequality~\eqref{eq:empirical-process-bound} follows. 
\end{proof}

\begin{prop}\label{prop:holder}
For two cumulative distribution functions $F_1$ and $F_2$, set
\[
\Delta := \sup_{t}|F_1(t) - F_2(t)|.
\]
If $F_2^{-1}(\cdot)$ is $\gamma$-H{\"o}lder continuous on $[q-\Delta, q+\Delta]$ for $\gamma \in (0,1)$, then it holds that 
\begin{equation}\label{eq:quantile}
|F_1^{-1}(q) - F_2^{-1}(q)| \le \mathfrak{L} \Delta^{\gamma},    
\end{equation}
where $\mathfrak{L}$ is the H\"older continuity constant.
\end{prop}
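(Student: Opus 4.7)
The plan is to bound the gap between the two quantiles via a sandwiching argument, using the uniform DKW-type bound $\Delta$ to transfer $F_1^{-1}(q)$ into a neighborhood of $F_2^{-1}(q)$ described by the inverse image of $[q-\Delta,q+\Delta]$ under $F_2$, and then invoking the H\"older regularity of $F_2^{-1}$ to convert that neighborhood size into the rate $\Delta^\gamma$.

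Concretely, set $q_1 := F_1^{-1}(q)$ and $q_2 := F_2^{-1}(q)$. First I would establish the two-sided sandwich
\[
F_2^{-1}(q-\Delta) \;\le\; q_1 \;\le\; F_2^{-1}(q+\Delta).
\]
For the upper bound, any $t$ with $F_2(t)\ge q+\Delta$ satisfies $F_1(t) \ge F_2(t) - \Delta \ge q$, hence $t\ge q_1$; taking the infimum over such $t$ gives $q_1 \le F_2^{-1}(q+\Delta)$. For the lower bound, any $t < F_2^{-1}(q-\Delta)$ satisfies $F_2(t) < q-\Delta$, so $F_1(t) \le F_2(t)+\Delta < q$, which forces $t < q_1$; this yields $F_2^{-1}(q-\Delta)\le q_1$. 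These steps only use the definition of the generalized inverse as an infimum together with the monotonicity and the uniform closeness $|F_1-F_2|\le \Delta$.

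Second, I would directly apply the H\"older hypothesis on $F_2^{-1}$, restricted to the interval $[q-\Delta,q+\Delta]$ where it is assumed to hold. The monotone sandwich above, combined with $F_2^{-1}(q-\Delta)\le q_2 = F_2^{-1}(q) \le F_2^{-1}(q+\Delta)$, gives
\[
q_1 - q_2 \;\le\; F_2^{-1}(q+\Delta) - F_2^{-1}(q) \;\le\; \mathfrak{L}\,\Delta^\gamma,
\]
and symmetrically $q_2 - q_1 \le F_2^{-1}(q) - F_2^{-1}(q-\Delta) \le \mathfrak{L}\,\Delta^\gamma$. Combining the two one-sided inequalities produces the desired bound $|F_1^{-1}(q) - F_2^{-1}(q)| \le \mathfrak{L}\,\Delta^\gamma$.

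The main technical point is really just the sandwich step; the rest is a mechanical application of H\"older continuity. I do not expect a serious obstacle, since the proof sidesteps any issue with atoms of $F_1$ or flat pieces of $F_2$ by working entirely with the monotone inverse $F_2^{-1}$ at the deterministic probability levels $q\pm\Delta$, rather than trying to invert $F_2$ at the random value $F_2(F_1^{-1}(q))$. The only mild care needed is the strict/non-strict inequality bookkeeping in the definition of the infimum, which is handled by the standard observation that $t < F^{-1}(p)$ implies $F(t) < p$ for any nondecreasing right-continuous $F$.
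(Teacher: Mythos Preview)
Your proposal is correct and follows the same sandwich-plus-H\"older strategy as the paper's proof: both establish $F_2^{-1}(q-\Delta)\le F_1^{-1}(q)\le F_2^{-1}(q+\Delta)$ and then apply the H\"older condition on $F_2^{-1}$. Your derivation of the sandwich directly from the infimum definition of the generalized inverse is slightly more careful than the paper's (which writes $q-\Delta\le F_2(F_1^{-1}(q))\le q+\Delta$ and then applies $F_2^{-1}$), but the two arguments are essentially identical.
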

\begin{proof}
Note that
\begin{align*}
&|F_1(F_1^{-1}(q)) - F_2(F_1^{-1}(q))| \le \Delta\\ \quad\Rightarrow\quad &q - \Delta \le F_2(F_1^{-1}(q)) \le q + \Delta\\\quad\Rightarrow\quad &F_2^{-1}(q - \Delta) \le F_1^{-1}(q) \le F_2^{-1}(q + \Delta).
\end{align*}
Therefore, using the H{\"o}lder continuity assumption, we obtain
$$
F_1^{-1}(q) - F_2^{-1}(q) \le F_2^{-1}(q + \Delta) - F_2^{-1}(q) \le \mathfrak{L}\Delta^{\gamma}
$$ 
and
$$
F_1^{-1}(q) - F_2^{-1}(q) \ge F_2^{-1}(q - \Delta) - F_2^{-1}(q) \ge -\mathfrak{L}\Delta^{\gamma}.
$$
Hence inequality \eqref{eq:quantile} follows.
\end{proof}

\begin{prop} (Theorem 1 of \cite{tikhomirov2018sample})\label{prop:D_Sigma_relaxed}
Under assumptions \ref{assump:DGP} and \ref{assump:finite-X}, when $n \geq 2d$ there exists a constant $C_{q_x}>0$, depending only on $q_x$, such that with probability at least $1-1 / n$,
$$
\mathcal{D}_{\Sigma} \leq \frac{C_{q_x}}{n} \max _{1 \leq i \leq n}\bigl\|\Sigma^{-1 / 2} X_{i}\bigr\|^{2}+C_{q_x} K_x^{2}\left(\frac{d}{n}\right)^{1-2 / q_x} \log ^{4}\left(\frac{n}{d}\right)+C_{q_x} K_x^{2}\left(\frac{d}{n}\right)^{1-2 / \min \{q_x, 4\}}.
$$
\end{prop}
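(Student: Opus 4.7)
The plan is to invoke Tikhomirov's theorem directly, since Proposition~\ref{prop:D_Sigma_relaxed} is a verbatim restatement of Theorem~1 of \cite{tikhomirov2018sample}. No new proof is required in the present paper; it is nevertheless worth outlining the strategy underlying the cited result so as to explain the origin of each of the three terms on the right-hand side.

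First, by the change of variables $Y_i=\Sigma^{-1/2}X_i$ the problem reduces to the isotropic setting: the vectors $Y_i$ have identity covariance and inherit the moment bound $\mathbb{E}|u^{\top}Y_i|^{q_x}\le K_x^{q_x}$ uniformly in $u\in S^{d-1}$, and
\[
\mathcal{D}_\Sigma
~=~ \Bigl\|\tfrac{1}{n}\sum_{i=1}^n Y_iY_i^{\top}-I_d\Bigr\|_{\mathrm{op}}
~=~ \sup_{u\in S^{d-1}}\Bigl|\tfrac{1}{n}\sum_{i=1}^n (u^{\top}Y_i)^2-1\Bigr|.
\]
The difficulty is that $(u^{\top}Y_i)^2$ has only $q_x/2$ moments, so the classical $\varepsilon$-net plus Bernstein argument used for sub-Gaussian covariates breaks down.

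The core idea is a truncation at a carefully chosen level $\tau$: decompose
\[
(u^{\top}Y_i)^2 ~=~ (u^{\top}Y_i)^2\mathbbm{1}\{\|Y_i\|^2\le \tau\} ~+~ (u^{\top}Y_i)^2\mathbbm{1}\{\|Y_i\|^2>\tau\}.
\]
The heavy-tail contribution $(u^{\top}Y_i)^2\mathbbm{1}\{\|Y_i\|^2>\tau\}$ is deterministically controlled by $n^{-1}\max_{i\le n}\|Y_i\|^2$, which is precisely the first term appearing in the bound. On the truncated event, the summands are bounded by $\tau$, and Bernstein's inequality combined with a covering of $S^{d-1}$ of cardinality $(3/\varepsilon)^d$ yields a deviation of order $K_x^2\sqrt{d\tau/n}+\tau d/n$ after a union bound.

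The main obstacle—and the step I would expect to be the hardest to reproduce in full detail—is upgrading this crude truncated-Bernstein estimate to the sharper rate $(d/n)^{1-2/q_x}\log^4(n/d)$. This requires a chaining/iteration argument on the sphere that exploits the polynomial tail assumption \ref{assump:finite-X} beyond second moments, via Latała–Mendelson–Paouris-type bounds on suprema of empirical processes of heavy-tailed quadratic forms; the $\log^4$ factor arises from the successive levels of the generic chaining. Finally, the third term $(d/n)^{1-2/\min\{q_x,4\}}$ comes from bounding the variance of $(u^{\top}Y_i)^2-1$ when the fourth moment $\mathbb{E}(u^{\top}Y_i)^4$ does not exist (i.e.\ when $q_x<4$), which degrades the Bernstein variance proxy and forces the weaker exponent $1-2/q_x$; when $q_x\ge 4$ the variance is bounded by $K_x^4$ and the exponent $1-2/4=1/2$ kicks in. Optimizing $\tau$ so as to balance the heavy-part contribution against the chaining contribution produces the three-term bound stated in the proposition, and the probability bound $1-1/n$ follows from the union bound over the chaining levels together with a tail estimate for $\max_i\|Y_i\|^2$ obtained from the moment assumption via Markov's inequality.
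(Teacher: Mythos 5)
Your proposal is correct and matches the paper exactly: the proposition is stated as a direct citation of Theorem~1 of \cite{tikhomirov2018sample}, and the paper offers no proof beyond that citation, so invoking the cited result is precisely what is done. Your added sketch of the truncation-plus-chaining argument behind Tikhomirov's theorem is extra exposition not present in the paper, but it does not change the fact that both you and the authors rely on the same external result.
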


\begin{prop}\label{prop:vector-moment} If $W_1, \dots, W_n$ are independent random vectors,
then there exists a universal constant $\mathfrak{C}$ such that 
\begin{equation}
\Bigl( \E\bigl\| \sum_{i=1}^{n} ( W_i- \E W_i ) \bigr\|_2^p \Bigr)^{1/p} \leq \mathfrak{C} \Bigl( \sum_{i=1}^n  \E \| W_i \|_2^p \Bigr)^{1/p},\quad\mbox{for}\quad 1\le p\le 2.
\end{equation}
\end{prop}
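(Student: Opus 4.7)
The plan is to combine a symmetrization argument with the Hilbert-space identity $\E_\epsilon\|\sum_i\epsilon_i x_i\|_2^2=\sum_i\|x_i\|_2^2$ (which is just the expansion of the square using $\E[\epsilon_i\epsilon_j]=\delta_{ij}$) and the reverse $\ell_p$ inequality $\|\cdot\|_2\le\|\cdot\|_p$ valid for $p\le 2$. The final constant $\mathfrak{C}$ can be taken as $2$.

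First I would symmetrize. Let $W_1',\ldots,W_n'$ be independent copies of $W_1,\ldots,W_n$ independent of the original sequence. Because $x\mapsto\|x\|_2^p$ is convex for $p\ge 1$, Jensen's inequality gives
\[
\E\bigl\|\sum_{i=1}^n (W_i-\E W_i)\bigr\|_2^p \le \E\bigl\|\sum_{i=1}^n (W_i-W_i')\bigr\|_2^p.
\]
Next, since each $W_i-W_i'$ is symmetric, multiplying by independent Rademacher signs $\epsilon_i$ does not change the distribution of the sum, so
\[
\E\bigl\|\sum_{i=1}^n (W_i-W_i')\bigr\|_2^p = \E\bigl\|\sum_{i=1}^n \epsilon_i(W_i-W_i')\bigr\|_2^p \le 2^{p-1}\Bigl(\E\bigl\|\sum_i\epsilon_iW_i\bigr\|_2^p + \E\bigl\|\sum_i\epsilon_iW_i'\bigr\|_2^p\Bigr) = 2^p\,\E\bigl\|\sum_i\epsilon_iW_i\bigr\|_2^p,
\]
where the middle step uses convexity of $\|\cdot\|_2^p$ applied to the split $\epsilon_i(W_i-W_i')=\epsilon_iW_i-\epsilon_iW_i'$.

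Now condition on $W_1,\ldots,W_n$ and compute the Rademacher average. Since $p\le 2$, Jensen's inequality in the outer expectation (or equivalently, monotonicity of $L^p$ norms in $p$) gives
\[
\E_\epsilon\bigl\|\sum_i\epsilon_iW_i\bigr\|_2^p \le \Bigl(\E_\epsilon\bigl\|\sum_i\epsilon_iW_i\bigr\|_2^2\Bigr)^{p/2} = \Bigl(\sum_i\|W_i\|_2^2\Bigr)^{p/2},
\]
using the Hilbert-space orthogonality identity. Then the reverse $\ell_p$ inequality (for $p\le 2$, $\|a\|_2\le\|a\|_p$ applied to the vector of norms) yields
\[
\Bigl(\sum_i\|W_i\|_2^2\Bigr)^{p/2} \le \sum_i\|W_i\|_2^p.
\]
Taking expectations over $W_1,\ldots,W_n$ and chaining all the inequalities gives $\E\|\sum_i(W_i-\E W_i)\|_2^p\le 2^p\sum_i\E\|W_i\|_2^p$, which is the claim with $\mathfrak{C}=2$.

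There is no real obstacle here: every step uses a standard tool (Jensen, symmetrization, Hilbert-space orthogonality, and monotonicity of $\ell_p$ norms). The only minor subtlety is remembering that $\|\cdot\|_2\le\|\cdot\|_p$ for $p\le 2$ rather than $p\ge 2$, which is exactly where the hypothesis $1\le p\le 2$ enters decisively (step 6 would fail for $p>2$, and indeed the statement itself becomes false in that regime and must be replaced by a Rosenthal-type bound).
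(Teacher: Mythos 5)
Your proof is correct, and it takes a genuinely different and more elementary route than the paper. The paper's argument truncates at the level $M$ with $M^p := 8\cdot 3^p\,\E[\max_i\|W_i\|_2^p]$, splits the symmetrized sum into a bounded part $S_1$ and a tail part $S_2$, and then invokes two heavy tools: the Gin\'e--Lata{\l}a--Zinn moment inequality (their inequality (3.7)) for $S_1$ and the Hoffmann--J{\o}rgensen inequality for $S_2$, finally using $1\le p\le 2$ to absorb the $\sqrt{p}$ and $p$ factors into a universal constant. You instead exploit the Hilbert-space structure of $\|\cdot\|_2$ directly: after the standard symmetrization (Jensen with an independent copy, then Rademacher signs), you condition on the $W_i$'s, use concavity of $x\mapsto x^{p/2}$ to pass to the second moment, apply the orthogonality identity $\E_\epsilon\|\sum_i\epsilon_i W_i\|_2^2=\sum_i\|W_i\|_2^2$, and finish with the subadditivity $\bigl(\sum_i a_i\bigr)^{p/2}\le\sum_i a_i^{p/2}$ for $p/2\le 1$. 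Every step is elementary, the hypothesis $1\le p\le 2$ enters exactly where you say it does, and you even get the explicit constant $\mathfrak{C}=2$ (the paper's route only yields an unspecified universal constant). What the paper's machinery buys is generality -- the truncation plus GLZ/Hoffmann--J{\o}rgensen template extends to $p>2$ (Rosenthal-type bounds, as in the commented-out part of the proposition) and to general Banach spaces -- but for the stated $1\le p\le 2$ Euclidean case your von Bahr--Esseen-style argument is simpler and sharper.
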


\begin{proof}
Set $M^p:=8 \cdot 3^p \E [\max_{i=1, \dots, n} \| W_i \|_2^p]$
\begin{align*}
    \Bigl( \E\bigl\| \sum_{i=1}^n ( W_i- \E W_i ) \bigr\|_2^p \Bigr)^{1/p} &\leq 2 \Bigl( \E \bigl\| \sum_{i=1}^n  \varepsilon_i W_i \bigr\|_2^p \bigr) ^{1/p} \text{ by symmetrization}\\
    &\leq 2 \Bigl( \E\bigl\| \sum_{i=1}^n  \varepsilon_i W_i \mathbbm{1}_{ \{ \| W_i \|_2 \leq M \} }  \bigr\|_2^p \Bigr)^{1/p} + 2 \Bigl( \E \bigl\| \sum_{i=1}^n  \varepsilon_i W_i \mathbbm{1}_{ \{ \| W_i \|_2 > M \} }  \bigr\|_2^p \Bigr)^{1/p} \\
    &:=2(\E[S_1^p])^{1/p}+2(\E[S_2^p])^{1/p}.
\end{align*}
Using inequality (3.7) of \cite{gine2000exponential} paper, we have
\begin{equation}\label{eq:S1-first}
(\E [S_{1}^{p}])^{1/p} \leq \mathfrak{C} \left( \E[S_1] +\sqrt{p}  \sigma + p M\right),
\end{equation}
where $$\sigma^2 = \sup_{\| \theta \|_2 \le 1} \sum_{i=1}^n \E [ ( \theta^\top W_i) ^2 \mathbbm{1}_{ \{ \| W_i \|_2 \leq M \} }  ] \leq \sum_{i=1}^n \E [ \| W_i \|_2 ^2 \mathbbm{1}_{ \{ \| W_i \|_2 \leq M \} }  ] .$$
Notice that $M \leq \mathfrak{C} (\sum_{i=1}^n \E [ \| W_i \|_2^p ] )^{1/p}$ for a possibly different universal constant $\mathfrak{C}$.
For the first term on the right hand side of \eqref{eq:S1}, we have
\begin{align*}
\E[S_1] &=  \E\Bigl\| \sum_{i=1}^n  \varepsilon_i W_i \mathbbm{1}_{ \{ \| W_i \|_2 \leq M \} }  \Bigr\|_2\\
&\leq \Bigl( \E\Bigl\| \sum_{i=1}^n  \varepsilon_i W_i \mathbbm{1}_{ \{ \| W_i \|_2 \leq M \} }  \Bigr\|_2^2 \Bigr)^{1/2} \text{ by Jensen's inequality}\\
&=\Bigl(  \sum_{i=1}^n  \E\Bigl\|  W_i \mathbbm{1}_{ \{ \| W_i \|_2 \leq M \} }  \Bigr\|_2^2 \Bigr)^{1/2} \text{ by independence}\\
&\leq \Bigl(  \sum_{i=1}^n  \E \|  W_i \|_2^p M^{2-p} \Bigr)^{1/2} \text{ for } p \leq 2\\
&\leq \mathfrak{C}\Bigl(  \bigl( \sum_{i=1}^n  \E \|  W_i \|_2^p \bigr) \bigl( \sum_{i=1}^n \E [ \| W_i \|_2^p ] \bigr)^{ \frac{2}{p}-1} \Bigr)^{1/2} \text{ by the upper bound on } M\\
&= \mathfrak{C}\bigl( \sum_{i=1}^n  \E \|  W_i \|_2^p \bigr)^{1/p}.
\end{align*}
Therefore, using $1 \leq p \leq 2$, inequality \eqref{eq:S1-first} reduces to for a possibly different universal constant $\mathfrak{C}$
\begin{equation}\label{eq:S1}
(\E [S_{1}^{p}])^{1/p} \leq \mathfrak{C} \left(\sum_{i=1}^n \E [ \| W_i \|_2^p ] \right)^{1/p}.
\end{equation}

To estimate $\E [ S_{2}^{p}]$, we apply the original Hoffmann-Jørgensen inequality (from e.g., (6.9) in page 156 of \cite{ledoux2013probability}) to get
\begin{equation}\label{eq:S2-first}
\E [S_{2}^{p}] \leq 2 \cdot 3^{p}\left(t_{0}^{p}+\E [\max_{i=1,\dots,n} \|W_i\|^{p} ] \right),
\end{equation}
where $t_{0}$ is any number such that $\PP(S_{2}>t_{0} )\leq\left(8 \cdot 3^{p}\right)^{-1} .$ But the choice of $M$ implies that we can take $t_{0}=0$ because
$$
\PP( S_{2}>0 )=\PP( \max _{i=1,\dots,n} \| W_i \|_2^p >M ) \leq \frac{1}{8 \cdot 3^{p}}.
$$
Therefore, inequality \eqref{eq:S2-first} reduces to for a possibly different universal constant $\mathfrak{C}$,
\begin{equation}\label{eq:S2}
\E [S_{2}^{p}] \leq \mathfrak{C}\E \left[\max_{1\le i\le n} \|W_i\|^{p} \right] \leq \mathfrak{C} \left(\sum_{i=1}^n \E [ \| W_i \|_2^p ] \right)^{1/p}.
\end{equation}
Finally, combining inequalities \eqref{eq:S1} and \eqref{eq:S2} gives us our desired inequality in Proposition \ref{prop:vector-moment}.
\end{proof}

\begin{prop}\label{prop:s2}
Under assumptions \ref{assump:DGP}, \ref{assump:finite-Y} and \ref{assump:subGauss}, for every $\eta >0$, there exists a universal constant $C \in(0, \infty)$ such that
$$
\max _{a \in \mathbb{R}^{d},\|a\|_{2} \leq 1} \mathbb{E}\left[\bigl|a^{\top} \Sigma^{-1 / 2} X_{i} Y_{i}  \bigr|^{q_y /(1+\eta)}\right] \leq \left\{C K_x K_{y} \sqrt{q_y / \eta}\right\}^{q_y /(1+\eta)}.
$$
\end{prop}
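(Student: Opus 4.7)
}
The plan is a standard decoupling of $X_i$ and $Y_i$ via H\"older's inequality followed by separate moment bounds: assumption~\ref{assump:finite-Y} controls the $Y_i$-piece, and the sub-Gaussianity in~\ref{assump:subGauss} controls the $X_i$-piece. Writing $p := q_y/(1+\eta)$, I would apply H\"older with conjugate exponents $r = (1+\eta)/\eta$ and $s = 1+\eta$ so that $ps = q_y$, obtaining
\[
\mathbb{E}\bigl[|a^\top \Sigma^{-1/2} X_i Y_i|^p\bigr] \;\le\; \Bigl(\mathbb{E}\,|a^\top \Sigma^{-1/2} X_i|^{q_y/\eta}\Bigr)^{\eta/(1+\eta)}\Bigl(\mathbb{E}\,|Y_i|^{q_y}\Bigr)^{1/(1+\eta)}.
\]
The reason for this particular split is that it uses the $q_y$-th moment of $Y_i$ (the largest moment we are allowed) in full, thereby leaving $\mathbb{E}|Y_i|^{q_y} \le K_y^{q_y}$ from~\ref{assump:finite-Y} to contribute exactly the $K_y^{q_y/(1+\eta)}$ factor we want in the statement.

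Next, for the $X$-factor I would invoke the standard consequence of~\ref{assump:subGauss}: if $W := K_x^{-1} a^\top \Sigma^{-1/2} X_i$ satisfies $\mathbb{E}[\exp(W^2)]\le 2$, then by the usual tail-to-moment conversion (integrating the tail $\mathbb{P}(|W|\ge t)\le 2 e^{-t^2}$ against $q t^{q-1}$) one gets $(\mathbb{E}|W|^q)^{1/q} \le \mathfrak{C}\sqrt{q}$ for all $q\ge 1$. Specialising to $q = q_y/\eta$ yields
\[
\Bigl(\mathbb{E}\,|a^\top \Sigma^{-1/2} X_i|^{q_y/\eta}\Bigr)^{\eta/(1+\eta)} \;\le\; \bigl(\mathfrak{C}\, K_x \sqrt{q_y/\eta}\bigr)^{q_y/(1+\eta)}.
\]
Multiplying the two bounds gives the advertised $\bigl(C K_x K_y \sqrt{q_y/\eta}\bigr)^{q_y/(1+\eta)}$, with the constant $C$ independent of $\eta$, $q_y$, $a$, and $d$ (the bound is uniform in $\|a\|_2\le 1$ because neither H\"older nor the sub-Gaussian moment bound sees $a$ beyond the unit-norm normalisation absorbed into $\Sigma^{-1/2}$).

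There is really no hard step here; the proposition is essentially bookkeeping. The only place one has to be careful is the choice of H\"older exponents, since one must verify that $p\cdot s = q_y$ so that assumption~\ref{assump:finite-Y} applies exactly, and that the residual exponent $pr = q_y/\eta$ is finite (which it is, for every $\eta>0$, since sub-Gaussianity of $a^\top\Sigma^{-1/2}X_i$ gives finite moments of all orders). If one were to try to avoid the $\sqrt{q_y/\eta}$ blow-up as $\eta\downarrow 0$, a different split would be needed, but the $\sqrt{q_y/\eta}$ factor is precisely what appears in the statement, so the straightforward split above suffices.
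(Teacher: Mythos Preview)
Your proposal is correct and follows essentially the same approach as the paper: apply H\"older's inequality with exponents $s=1+\eta$ and $r=(1+\eta)/\eta$ so that the $Y_i$-factor lands exactly on the $q_y$-th moment, then bound the $X_i$-factor via the standard sub-Gaussian moment estimate $(\mathbb{E}|W|^q)^{1/q}\le C\sqrt{q}$. The paper's proof is in fact terser than yours---it simply writes the H\"older bound and says ``the second inequality follows from assumption~\ref{assump:subGauss}''---so your additional remarks on the tail-to-moment conversion and on why this particular H\"older split is the right one are helpful elaboration rather than a different argument.
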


\begin{proof}
Fix $a \in \mathbb{R}^{d}$ with Euclidean norm bounded by 1. H{\"o}lder's inequality yields
$$
\begin{aligned}
\left(\mathbb{E}\left[\bigl|a^{\top} \Sigma^{-1 / 2} X_{i} Y_{i}\bigr|^{q_y /(1+\eta)}\right]\right)^{\left(1+\eta\right) / q_y} & \leq\left(\mathbb{E}\left[ |Y_{i}   |^{q_y}\right]\right)^{1 / q_y}\left(\mathbb{E}\left[\bigl|a^{\top} \Sigma^{-1 / 2} X_{i}\bigr|^{q_y / \eta}\right]\right)^{\eta / q_y} \\
& \leq K_y \bigl(C K_{x} \sqrt{q_y / \eta} \bigr),
\end{aligned}
$$
where the second inequality follows from assumption \ref{assump:subGauss}.
\end{proof}

\begin{prop}\label{prop:s2-relaxed}
Under assumptions \ref{assump:DGP}, \ref{assump:finite-Y} and \ref{assump:finite-X}, for any $q_1 \le q_x$ and $q_2 \le q_y$, it holds that
$$
\max _{a \in \mathbb{R}^{d},\|a\|_{2} \leq 1} \mathbb{E}\left[\bigl|a^{\top} \Sigma^{-1 / 2} X_{i} Y_{i}  \bigr|^\frac{q_1 q_2}{q_1+q_2} \right] \leq (K_x K_{y} )^\frac{q_1 q_2}{q_1+q_2}.
$$
\end{prop}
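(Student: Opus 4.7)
The plan is to exploit the identity $\tfrac{1}{p} = \tfrac{1}{q_1} + \tfrac{1}{q_2}$, where $p := \tfrac{q_1 q_2}{q_1 + q_2}$, and apply H\"older's inequality with the conjugate exponents $q_1/p$ and $q_2/p$ (which satisfy $p/q_1 + p/q_2 = 1$). Writing $\bigl|a^{\top} \Sigma^{-1/2} X_i Y_i\bigr|^{p} = \bigl|a^{\top}\Sigma^{-1/2}X_i\bigr|^{p} \cdot |Y_i|^{p}$, H\"older's inequality yields
\[
\mathbb{E}\!\left[\bigl|a^{\top}\Sigma^{-1/2}X_i Y_i\bigr|^{p}\right] ~\le~ \Bigl(\mathbb{E}\bigl|a^{\top}\Sigma^{-1/2}X_i\bigr|^{q_1}\Bigr)^{p/q_1} \Bigl(\mathbb{E}|Y_i|^{q_2}\Bigr)^{p/q_2}.
\]

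The next step is to reduce the lower-order moments $q_1 \le q_x$ and $q_2 \le q_y$ to the available moment bounds. By Jensen's inequality (monotonicity of $L^r$ norms), $(\mathbb{E}|Y_i|^{q_2})^{1/q_2} \le (\mathbb{E}|Y_i|^{q_y})^{1/q_y} \le K_y$ under \ref{assump:finite-Y}, and similarly $(\mathbb{E}|a^{\top}\Sigma^{-1/2}X_i|^{q_1})^{1/q_1} \le (\mathbb{E}|a^{\top}\Sigma^{-1/2}X_i|^{q_x})^{1/q_x}$. To use assumption \ref{assump:finite-X} (which is stated for unit vectors in $S^{d-1}$), write $a = \|a\|_2 u$ with $\|u\|_2 = 1$; since $\|a\|_2 \le 1$, homogeneity gives $(\mathbb{E}|a^{\top}\Sigma^{-1/2}X_i|^{q_x})^{1/q_x} = \|a\|_2 \cdot (\mathbb{E}|u^{\top}\Sigma^{-1/2}X_i|^{q_x})^{1/q_x} \le K_x$.

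Substituting these two bounds into the H\"older estimate gives
\[
\mathbb{E}\!\left[\bigl|a^{\top}\Sigma^{-1/2}X_i Y_i\bigr|^{p}\right] ~\le~ K_x^{p} \, K_y^{p} ~=~ (K_x K_y)^{p},
\]
which is the claim after taking the supremum over $\|a\|_2 \le 1$.

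This proof is essentially a one-line application of H\"older plus Jensen; there is no real obstacle. The only mild subtlety is noticing that $p = q_1 q_2/(q_1 + q_2)$ is precisely the harmonic-mean-like exponent that makes $q_1/p$ and $q_2/p$ H\"older conjugates, and handling the $\|a\|_2 \le 1$ versus $\|u\|_2 = 1$ distinction via homogeneity. No additional assumptions (such as independence of $X_i$ and $Y_i$) are required.
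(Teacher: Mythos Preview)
Your proof is correct and follows essentially the same approach as the paper: apply H\"older's inequality with conjugate exponents $q_1/p$ and $q_2/p$ (where $p = q_1q_2/(q_1+q_2)$) to split the product, then bound each factor via the moment assumptions. You are in fact slightly more careful than the paper's write-up, explicitly invoking Jensen to pass from $q_1 \le q_x$ and $q_2 \le q_y$ to the available $q_x$- and $q_y$-moments and handling the $\|a\|_2 \le 1$ case via homogeneity.
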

\begin{proof}
Fix $a \in \mathbb{R}^{d}$ with Euclidean norm bounded by 1. H{\"o}lder's inequality yields
$$
\begin{aligned}
\left(\mathbb{E}\left[\bigl|a^{\top} \Sigma^{-1 / 2} X_{i} Y_{i}\bigr|^{ \frac{q_1 q_2}{q_1+q_2}}\right]\right)^{1/q_1+1/ q_2} & \leq\left(\mathbb{E}\left[ |Y_{i}   |^{q_2}\right]\right)^{1 / q_2}\left(\mathbb{E}\left[\bigl|a^{\top} \Sigma^{-1 / 2} X_{i}\bigr|^{q_1}\right]\right)^{ 1 / q_1} \\
& \leq K_x K_{y},
\end{aligned}
$$
where the second inequality follows from assumption \ref{assump:finite-X}.
\end{proof}

\begin{prop}\label{prop:simple-sufficient-quantile-holder}
If the density of $Y - X^{\top}\theta$ is bounded away from zero on $[F_\theta^{-1}\bigl( 1-\alpha - r^* \bigr),F_\theta^{-1}\bigl( 1-\alpha + r^* \bigr) ]$, then assumptions~\ref{assump:general-holder-VFCP} and~\ref{assump:general-holder-EFCP} hold true with $\gamma = 1$.
\end{prop}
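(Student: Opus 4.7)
The plan is to turn the density lower bound directly into a Lipschitz estimate for $F_\theta^{-1}$ via the fundamental theorem of calculus, which is exactly \ref{assump:general-holder} with $\gamma=1$.

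Let $c>0$ denote the lower bound on the density of $Y-X^\top\theta$ over the stated interval $I := [F_\theta^{-1}(Q_{\alpha,\theta}-r^*),\,F_\theta^{-1}(Q_{\alpha,\theta}+r^*)] \subseteq [0,\infty)$. Recall that $F_\theta$ is the CDF of $|Y-X^\top\theta|$, so for $t>0$ its density satisfies $f_{|Y-X^\top\theta|}(t) = f_{Y-X^\top\theta}(t)+f_{Y-X^\top\theta}(-t) \geq f_{Y-X^\top\theta}(t) \geq c$ on $I$. Thus $F_\theta$ is absolutely continuous on $I$ with derivative bounded below by $c$.

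For any $q_1,q_2 \in [Q_{\alpha,\theta}-r^*,\,Q_{\alpha,\theta}+r^*]$, set $s_i := F_\theta^{-1}(q_i)\in I$ and assume without loss of generality $s_1 \leq s_2$. Because the density is strictly positive on $I$, $F_\theta$ is strictly increasing there and $F_\theta(s_i)=q_i$. By the fundamental theorem of calculus,
\[
|q_2 - q_1| \;=\; F_\theta(s_2)-F_\theta(s_1) \;=\; \int_{s_1}^{s_2} f_{|Y-X^\top\theta|}(u)\,du \;\geq\; c\,(s_2-s_1) \;=\; c\,|F_\theta^{-1}(q_2) - F_\theta^{-1}(q_1)|.
\]
Rearranging yields $|F_\theta^{-1}(q_1) - F_\theta^{-1}(q_2)| \leq (1/c)\,|q_1-q_2|$, which is precisely assumption \ref{assump:general-holder} with $\gamma=1$ and $L_\theta=1/c$.

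There is no substantive obstacle here; the only point worth being careful about is verifying $F_\theta(F_\theta^{-1}(q))=q$, but this follows from the strict monotonicity of $F_\theta$ on $I$ induced by the positive density, so the generalized inverse acts as a true inverse on the relevant range.
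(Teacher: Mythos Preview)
Your proof is correct and follows essentially the same approach as the paper: both convert the density lower bound on $Y-X^\top\theta$ into a lower bound on the increment $F_\theta(s_2)-F_\theta(s_1)$ and invert. Your version is in fact slightly cleaner, since you explicitly justify $f_{|Y-X^\top\theta|}(t)\ge f_{Y-X^\top\theta}(t)\ge c$ and thereby obtain the correct constant $L_\theta=1/c$, whereas the paper writes $2a(t_2-t_1)$, which would require the density bound to hold on the reflected interval as well.
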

\begin{proof}
Assume the density of $Y - X^{\top}\theta$ is larger than some $a>0$ on $[F_\theta^{-1}\bigl( 1-\alpha - r^* \bigr),F_\theta^{-1}\bigl( 1-\alpha + r^* \bigr) ]$ for some $r^*>0$, then for any $q_1,q_2 \in [1-\alpha - r^*,1-\alpha + r^* ]$, assume WLOG that $q_2 \geq q_1$,
\begin{align*}
F^{-1}_\theta(q_2) - F^{-1}_\theta(q_1)&=  \{t_2-t_1 : \PP(|Y-X^\top \theta| \leq t_2)=q_2, \PP(|Y-X^\top \theta| \leq t_1)=q_1 \} \\
&=\{t_2-t_1 : \PP( |Y-X^\top \theta| \in (t_1,t_2])=q_2-q_1, \PP(|Y-X^\top \theta| \leq t_2)=q_2\} \\
&\le\{t_2-t_1 : 2a(t_2-t_1) \leq \PP( |Y-X^\top \theta| \in (t_1,t_2])=q_2-q_1\} \\
&\leq (q_2-q_1)/2a.
\end{align*}
\end{proof}

{
\begin{prop}\label{prop:simple-sufficient-quantile-holder-conditional}
If the density of $Y|X$ is bounded away from zero on bounded sets of $\mathbb{R}$, i.e., for all $-\infty < a < b < \infty$ there exists $C_{a,b} > 0$ such that $\inf_{y\in[a, b]}\inf_x p_{Y|X}(y|x) \ge C_{a,b}$ and the covariates are bounded (i.e., $\|X\|_2 \le M < \infty$ almost surely), then assumptions~\ref{assump:general-holder-EFCP} and~\ref{assump:general-holder-VFCP} hold true for any bounded set $\Theta$ with $\gamma = 1$.
\end{prop}
\begin{proof}
By Proposition~\ref{prop:simple-sufficient-quantile-holder}, it suffices to show that the density of $Y - X^{\top}\theta$ is bounded away from zero on bounded sets. Fix any bounded set $\Theta$ satisfying $\sup_{\theta\in\Theta}\|\theta\|_2 \le K$. 
\begin{align*}
\inf_{t\in[t_1, t_2]}\inf_{x}p_{Y - X^{\top}\theta|X}(t|x) &= \inf_{t\in[t_1, t_2]}\inf_{x}p_{Y|X}(t + x^{\top}\theta|x)\\ 
&\ge \inf_{y\in[t_1 - MK, t_2 + MK]}\inf_{x}p_{Y|X}(y|x) \ge C_{t_1 - MK, t_2 + MK} > 0.
\end{align*}
This implies the density of $Y - X^{\top}\theta$ is bounded away from zero. And Proposition \ref{prop:simple-sufficient-quantile-holder} directly leads to the conclusion.
\end{proof}}

\begin{prop}\label{prop:joint-normal}
If $(Y,X)$ has a joint normal distribution with $$(Y,X) \sim \Normal \biggl(\begin{pmatrix}
0  \\
0
\end{pmatrix}, \Sigma= \begin{pmatrix}
1 & \Sigma_{yx}  \\
\Sigma_{xy} & I_d
\end{pmatrix}\biggr),$$
and that all the eigenvalues of $\Sigma$ is bounded between $[\underline{\lambda}, \overline{\lambda}]$ for some $\underline{\lambda} > 0$,  then for any $q_0 \in (0,0.5)$, assumption~\ref{assump:general-holder-EFCP} will hold true for any $q_1, q_2 \in [q_0, 1-q_0]$ with  $\gamma=1$ and some $L=L_\theta$ depending on $\theta$. Further, for any $\theta \in \mathbb{R}^d$ it holds that $L_\theta \leq c(\underline{\lambda}, \overline{\lambda}) (1+ \| \theta \|_2 )$,  where $c$ is a constant that depends only on $\underline{\lambda}$ and  $\overline{\lambda}$.
\end{prop}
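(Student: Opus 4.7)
\textbf{Proof plan for Proposition~\ref{prop:joint-normal}.} The strategy is to exploit the Gaussian structure to obtain an explicit form for $F_\theta^{-1}$ and bound its derivative. Under joint normality, $Y - X^\top \theta$ is a linear combination of jointly Gaussian variables, hence itself a centered Gaussian with variance $\sigma_\theta^2 = v^\top \Sigma v$, where $v := (1,\,-\theta^\top)^\top \in \mathbb{R}^{d+1}$. The eigenvalue bounds on $\Sigma$ immediately yield
\[
\underline{\lambda}(1 + \|\theta\|_2^2) \le \sigma_\theta^2 \le \overline{\lambda}(1 + \|\theta\|_2^2),
\]
so in particular $\sigma_\theta > 0$, and $|Y - X^\top \theta|$ follows a half-normal law with
\[
F_\theta(t) = 2\Phi(t/\sigma_\theta) - 1 \quad (t \ge 0), \qquad F_\theta^{-1}(q) = \sigma_\theta\, \Phi^{-1}\!\left(\tfrac{1+q}{2}\right) \quad (q \in [0,1)).
\]

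Differentiating the quantile function,
\[
(F_\theta^{-1})'(q) \;=\; \frac{\sigma_\theta}{2\,\phi\bigl(\Phi^{-1}((1+q)/2)\bigr)}.
\]
For $q \in [q_0, 1-q_0]$, the argument $(1+q)/2$ lies in the compact subset $[(1+q_0)/2,\,1-q_0/2]$ of $(1/2, 1)$, so $\Phi^{-1}((1+q)/2)$ ranges over a bounded subset of $(0,\infty)$ on which $\phi$ attains a strictly positive minimum; by monotonicity of $\phi$ on the positive reals, this minimum equals $c_{q_0} := \phi(\Phi^{-1}(1-q_0/2))$. Therefore $(F_\theta^{-1})'(q) \le \sigma_\theta/(2 c_{q_0})$ uniformly for $q$ in this range.

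Combining these two ingredients via the mean value theorem gives, for all $q_1, q_2 \in [q_0, 1-q_0]$,
\[
|F_\theta^{-1}(q_1) - F_\theta^{-1}(q_2)| \;\le\; \frac{\sigma_\theta}{2 c_{q_0}}|q_1 - q_2| \;\le\; \frac{\sqrt{\overline{\lambda}}\,(1 + \|\theta\|_2)}{2 c_{q_0}}\,|q_1 - q_2|,
\]
so assumption~\ref{assump:general-holder} holds with $\gamma = 1$ and $L_\theta \le c(\underline{\lambda}, \overline{\lambda})(1 + \|\theta\|_2)$ for $c = \sqrt{\overline{\lambda}}/(2 c_{q_0})$ (the dependence on $\underline{\lambda}$ is inessential in the upper bound but is required to guarantee $\sigma_\theta > 0$). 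There is no real obstacle in this argument; the only mild subtlety is verifying that $\phi$ stays bounded away from zero on the relevant range, which is automatic since $q \in [q_0, 1-q_0]$ forces $(1+q)/2$ to stay a definite distance from both $1/2$ and $1$.
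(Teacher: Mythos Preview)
Your proof is correct and follows essentially the same route as the paper's: both exploit that $Y - X^\top\theta$ is exactly a centered Gaussian with variance $\sigma_\theta^2 = v^\top\Sigma v$, bound $\sigma_\theta^2$ via the eigenvalue assumption, and then control the quantile function of the half-normal via the mean value theorem and a lower bound on the density. Your version is somewhat cleaner: by writing $F_\theta^{-1}(q) = \sigma_\theta\,\Phi^{-1}((1+q)/2)$ explicitly you avoid the paper's detour through Anderson's lemma (which is unnecessary here since the distribution is exactly half-normal), and you correctly identify the minimum of $\phi$ on the relevant range as occurring at the right endpoint $\Phi^{-1}(1-q_0/2)$.
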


\begin{proof}

Let $\Phi_\theta$ denote the CDF for $|Y-X^\top \theta|$ and $f_\theta$ denote its corresponding density function, then for any $q_1,q_2 \in [q_{\alpha}(\theta) - r^*,q_{\alpha}(\theta) + r^* ]$, assume WLOG that $q_2 \geq q_1$,
\begin{align*}
F^{-1}_\theta(q_2) - F^{-1}_\theta(q_1)&=  \{t_2-t_1 : \PP( |Y-X^\top \theta| \leq t_2)=q_2, \PP( |Y-X^\top \theta| \leq t_1)=q_1 \} \\
&=\Phi_{\theta}^{-1}(q_2)-\Phi_{\theta}^{-1}(q_1)\\
&=(q_2-q_1)/f_\theta(\Phi_\theta^{-1}(q)) \text{ for some } q \in (q_1,q_2)\\
&\le (q_2-q_1)/f_\theta (\Phi_\theta^{-1}(q_0)).
\end{align*}
Further, 
\begin{align*}
    \sigma_{\theta}^2 := \mathrm{var} (Y-X^\top \theta) &= \begin{pmatrix}
    1 &-\theta^\top
    \end{pmatrix} \mathrm{var} \{ \begin{pmatrix}
    Y \\
    X
    \end{pmatrix} \} \begin{pmatrix}
    1\\
    -\theta
    \end{pmatrix}\\
    &\leq \overline{\lambda} \begin{pmatrix}
    1 &-\theta^\top
    \end{pmatrix} \begin{pmatrix}
    1\\
    -\theta
    \end{pmatrix} = \overline{\lambda} ( 1 + \| \theta \|_2^2).
\end{align*}
Similarly, we have $\sigma_\theta^2 \geq \underline{\lambda}  ( 1 + \| \theta \|_2^2 ).$

Further, from Anderson's Lemma (Lemma 3.11.4 of \cite{van2000asymptotic})  we have that for any $t \geq 0$, 
\begin{equation}
    2 \Phi \biggl( \frac{t}{ \sqrt{\overline{\lambda} ( 1 + \| \theta \|_2^2)}} \biggr) -1 = \PP ( \bigl| \Normal \bigl( 0, \overline{\lambda} ( 1 + \| \theta \|_2^2)  \bigr) \bigr| \le t) \leq \Phi_{\theta} (t).
\end{equation}
Taking $t=\sqrt{\overline{\lambda} ( 1 + \| \theta \|_2^2) } \bigl\{ \Phi^{-1} \bigl( (1+q_0)/2  \bigr) \bigr\}=:\overline{t}$ yields $q_0 \leq \Phi_\theta (\overline{t})$ and therefore, $\Phi_\theta^{-1} (q_0) \leq \overline{t}$.

Finally,
\begin{align}
L_\theta &=1/f_\theta ( \Phi_\theta^{-1} (q_0) )\\
 &= \sqrt{\pi/2} \sigma_\theta  \exp \biggl(  \frac{ \Phi_\theta ^{-1} (q_0)^2 }{2 \sigma_\theta^2 }  \biggr)  \\
&\leq \sqrt{\pi \overline{\lambda} ( 1 + \| \theta \|_2^2 ) /2 }  \exp \biggl\{ \frac{ \Phi_\theta ^{-1} (q_0) ^2 }{2 \underline{\lambda} ( 1 + \| \theta \|_2^2) }   \biggr\}\\
&\leq \sqrt{\pi \overline{\lambda} ( 1 + \| \theta \|_2^2 ) / 2 }  \exp \biggl\{ \frac{ \overline{t}^2 }{2 \underline{\lambda} ( 1 + \| \theta \|_2^2) }   \biggr\}\\
&= \sqrt{\pi \overline{\lambda} ( 1 + \| \theta \|_2^2 ) /2 }  \exp \biggl\{ \frac{ \overline{\lambda}  \bigl\{ \Phi_\theta ^{-1} \bigl( (1+q_0)/2 ) \bigr) \bigr\}^2 }{2 \underline{\lambda}  }   \biggr\}\\
&=: c(\underline{\lambda}, \overline{\lambda}) \sqrt{1+ \| \theta \|_2^2}\\
&\leq c(\underline{\lambda}, \overline{\lambda}) ( 1+ \| \theta \|_2 ) , \forall \theta \in \mathbb{R}^d.
\end{align}

\end{proof}

\end{document}